\newcommand{\DG}[1]{#1}
\newcommand{\E}[2]{E^{\delta}_{#1,#2}}
\newcommand{\Eth}[2]{E^{\frac{19\delta}{20}}_{#1,#2}}
\newcommand{\Jth}[2]{J^{\frac{19\delta}{20}}_{#1,#2}}
\newcommand{\J}[2]{J^{\delta}_{#1,#2}}
\newcommand{\G}[2]{G^{\delta}_{#1,#2}}
\newcommand{\Ea}[2]{A^{\delta}_{#1,#2}}
\newcommand{\Eb}[2]{B^{\delta}_{#1,#2}}
\newcommand{\Ec}[2]{C^{\delta}_{#1,#2}}
\newcommand{\Ed}[2]{D^{\delta}_{#1,#2}}
\newcommand{\Gj}[2]{G^{\delta,j}_{#1, #2}}
\newcommand{\coef}[0]{\frac{19\delta}{20}}
\newcommand{\coefff}[0]{\frac{\delta}{40}}
\newcommand{\coeff}[0]{\frac{80}{\delta}}
\newcommand{\coeft}[0]{\frac{80\Delta}{\delta}}
\newcommand{\coefk}[0]{\frac{160\alpha(1+\Delta)}{\delta}}
\newcommand{\growth}[0]{\left(1-\frac{41}{40}\delta\right)}
\newcommand{\N}[2]{N_{#1,#2}}
\newcommand{\X}[2]{X_{#1,#2}}
\newcommand{\Y}[2]{Y_{#1,#2}}
\newcommand{\Z}[2]{Z_{#1,#2}}
\newcommand{\Ej}[2]{E^{\delta,j}_{#1,#2}}
\newcommand{\Ezero}[2]{E^{\delta,0}_{#1,#2}}
\newcommand{\Gzero}[2]{G^{\delta,0}_{#1,#2}}
\newcommand{\Eaj}[2]{A^{\delta,j}_{#1,#2}}
\newcommand{\Ebj}[2]{B^{\delta,j}_{#1,#2}}
\newcommand{\Ecj}[2]{C^{\delta,j}_{#1,#2}}
\newcommand{\Edj}[2]{D^{\delta,j}_{#1,#2}}
\newcommand{\Mj}[1]{M^j_{#1}}
\newcommand{\Nj}[2]{N^j_{#1,#2}}
\newcommand{\Xj}[2]{X^j_{#1,#2}}
\newcommand{\Yj}[2]{Y^j_{#1,#2}}
\newcommand{\Zj}[2]{Z^j_{#1,#2}}
\newcommand{\Xzero}[2]{X^0_{#1,#2}}
\newcommand{\Zzero}[2]{Z^0_{#1,#2}}
\newcommand{\Nzero}[2]{N^0_{#1,#2}}
\newcommand{\f}[1]{f_{#1}}
\newcommand{\fj}[1]{f^j_{#1}}
\newcommand{\h}[1]{h(#1)}
\newcommand{\hj}[1]{h^j(#1)}
\newcommand{\bl}[1]{b_{#1}}
\newcommand{\U}[1]{{X_{#1}}}
\newcommand{\V}[1]{{Y_{#1}}}
\newcommand{\ceils}[0]{\lceil s \rceil}
\newcommand{\floort}[0]{\lfloor t \rfloor}
\newcommand{\credible}[0]{credible }
\newtheorem{theorem}{Theorem}
\newtheorem{lemma}[theorem]{Lemma}
\newtheorem{proposition}[theorem]{Proposition}
\newtheorem{corollary}[theorem]{Corollary} 
\newtheorem{definition}[theorem]{Definition}
\begin{document}

\title{Continuous-Time Analysis of the Bitcoin and Prism Backbone Protocols}

\author{\IEEEauthorblockN{Jing Li\textsuperscript{1} and Dongning Guo\textsuperscript{2}} \\
\IEEEauthorblockA{{Department of Electrical and Computer Engineering} \\
{Northwestern University}\\
Evanston, IL 60208 \\
\textsuperscript{1}jingli2015@u.northwestern.edu, \\ \textsuperscript{2}dGuo@northwestern.edu\\}
}

\maketitle

\begin{abstract}
Bitcoin  is  a  peer-to-peer  payment  system  proposed by   Nakamoto   in   2008.   Based   on   the   Nakamoto   consensus,  Bagaria,  Kannan,  Tse,  Fanti, and  Viswanath proposed the Prism protocol in  2018 and showed that it  achieves near-optimal blockchain throughput while   maintaining \DG{a}  similar   level  of  security as bitcoin. Previous probabilistic security guarantees for the bitcoin and Prism backbone protocols were either established under a simplified discrete-time model or expressed in  terms of exponential order result\DG{s}.
This paper presents  a  streamlined and strengthened analysis under a more realistic continuous-time model.
A fully rigorous model for blockchains is developed with no \DG{restrictions} on adversarial miners except for an upper bound on their aggregate mining rate.
\DG{The only assumption on the peer-to-peer network is that all block propagation delays are upper bounded by a constant.}
\DG{A new notion of ``$t$-credible blockchains'' is introduced, which, together with some carefully defined} ``typical'' events concerning block production
over time interval\DG{s, is crucial to establish probabilisitic security guarantees in continuous time}.  A blockchain growth theorem, a blockchain quality theorem, and a common prefix theorem are established with explicit probability bounds.
\DG{Moreover,} under a certain typical event which occurs with probability close to $1$, a \DG{valid} transaction that is deep enough in one credible blockchain 
is shown to be permanent in the sense that it must be found  in all future credible blockchains.
\end{abstract}

\section{Introduction}
\subsection{The bitcoin backbone protocol}
Bitcoin was invented by Nakamoto\cite{nakamoto2008bitcoin} in 2008 as an electronic payment system. The system is built on a distributed ledger technology commonly referred to as blockchain.
A blockchain is a finite sequence of transaction-recording blocks which begins with a genesis block, and every subsequent block contains a cryptographic hashing of the previous one (which confirms all preceding blocks). To mine a block requires proof of work: A nonce must be included such that the block's
hash value satisfies a difficulty requirement.
Miners join a peer-to-peer network to inform each other of new blocks.
An honest miner follows the longest-chain rule, i.e., it always tries to mine a block at the maximum height.

Different blocks may be mined and announced at around the same time. So honest miners may extend different blockchains depending on which blocks they hear first. This phenomenon is called forking, which must be resolved quickly to reach timely consensus about the ledger.

An adversarial miner may wish to sabotage consensus or manipulate the network to a consensus to its own advantage. In particular, forking presents opportunities for double spending, which is only possible if a transaction included in the longest fork at one time is not included in a different fork that overtakes the first one to become the longest blockchain.
Nakamoto\cite{nakamoto2008bitcoin} characterized the race between the honest miners and the adversary as a random walk with a drift. Nakamoto showed that the probability the adversary blockchain overtakes the honest miner's consensus blockchain vanishes exponentially over time as long as the collective mining power of adversarial miners is less than that of honest miners.
In this case, a bitcoin transaction becomes (arbitrarily) secure if it is confirmed by enough new blocks.


Garay, Kiayias, and Leonardos\cite{garay2015bitcoin} first formally described and analyzed the bitcoin backbone protocol under the lockstep synchronous model, where all miners have perfectly synchronized rounds and all miners receive the same block(s) at exactly the end of the round. Under this model, \cite{garay2015bitcoin} established a blockchain quality theorem, which states the honest miners contribute at least a certain percentage of the blocks with wish probability.
Also established in \cite{garay2015bitcoin} is a common prefix theorem, which states if a block is $k$ blocks deep in an honest miner's blockchain, then the block is in  all other honest miners' blockchains with high probability (the probability that some honest miner does not extend this block vanishes exponentially with $k$). Kiayias and Panagiotakos\cite{kiayias2015speed} established a blockchain growth theorem, which quantifies the number of blocks added to the blockchain during any time interval.  The  blockchain  growth theorem and the  blockchain  quality theorem guarantee that many honest blocks will eventually become $k$ deep in an honest miner's blockchain (liveness). The common prefix theorem then guarantees that an honest miner's $k$-deep block become permanent consensus of all honest miners (consistency). Thus, every transaction that is recorded in a sufficiently deep block in an honest miner's blockchain is with high probability guaranteed to remain in the {transaction} ledger.

The strictly lockstep synchrony model completely assumes away  network delay and failure.
Several meaningful analyses have been proposed under the non-lockstep synchrony model, where messages can be delayed arbitrarily but the delay is upper bounded. A complicated analysis with strong assumptions \cite{pass2017analysis} showed that the blockchain growth theorem, the blockchain  quality theorem, and the common prefix theorem remain valid under the non-lockstep synchrony model. Reference \cite{kiffer2018better} also reasoned the consistency of bitcoin protocol using the Markov chains, although their result has a non-closed form.
Most previous analyses \cite{garay2015bitcoin,garay2017bitcoin, pass2017analysis, bagaria2019prism} assume the  blockchain's lifespan is finite, i.e., there exists a maximum round when the blockchain ends. In \cite{li2019onanalysisarxiv}, we dropped the finite horizon assumption and proved stronger properties of the bitcoin backbone protocol regardless of whether or not the blockchains have a finite lifespan.


Most previous work\cite{garay2015bitcoin, garay2017bitcoin,pass2017analysis, bagaria2019prism, Ren2019Analysis, niu2019analysis} expressed the probability of the mentioned properties in exponential order result (\DG{using} big \DG{$O$ or big $\Omega$} notation). Our previous work \cite{li2019onanalysisarxiv} gives the explicit bounds for the liveness and consistency under the non-lockstep synchronous discrete-time model. The strategies taken by previous works can be described as the following:
Intuitively, during any time interval the liveness and consistency of honest blockchains hold under the following conditions: 1) The number of honest blocks mined during this time interval is larger than the number  of adversarial blocks mined, so that the longest blockchain will not be overtaken by the adversarial party. 2) There are enough number of non-reversible honest blocks to guard the honest blockchain, in case the adversarial use strategies (like selfish mining) to introduce disagreement between honest miners and split their hashing power. Technically, with respect to time interval $[s,t]$, a ``good event'' occurs if the numbers of various blocks mined during the period are close to  their respective expected values. A ``typical event'' with respect to $[s,t]$ occurs if good events occur for all time intervals covering $[s,t]$, so that the consistency of honest blockchains is guaranteed from time $t$ onward.
The desired properties hold  under the typical events, which are shown to almost certainly occur in the discrete-time model.

The  discrete-time  model  eases  analysis  but is  still  a significant  departure  from  reality.
In 2019, Ren \cite{Ren2019Analysis} extended the liveness and consistency of bitcoin protocol assuming the continuous-time model where mining is modelled as a Poisson point process. The probability bounds are shown to be exponential in a linear order term in the confirmation time.

In this paper, we \DG{build a simple stochastic model for continuous-time block mining processes and the resulting blockchains.  We impose no restrictions on adversarial miners except for an upper bound on their aggregate mining rate.}
In addition, \DG{the only assumption on the peer-to-peer network is that all block propagation delays are upper bounded by a constant.}
We introduce the \DG{new notion of a} {\em $t$-credible blockchain} to describe 
a blockchain that an honest miner may adopt \DG{at time $t$}.
We also develop a technique to \DG{analyze the probability of} 
the intersection of uncountably many good events with 
\DG{continuous} starting and ending points. 
\DG{Using a sequence of lemmas, we} derive explicit \DG{bounds as probabilistic guarantees of} the liveness and consistency \DG{of the bitcoin backbone protocol.  These results} are more refined than previous exponential order results.

\DG{We note that} several existing proofs in the literature (including some of our own earlier work) are flawed.  \DG{A recurrent subtle mistake is to presume memorylessness of the mining process over a time interval defined according to some miners' views and actions.  The boundaries of such an interval are in fact very complicated (random) stopping times.  As an extreme but illuminating example, we argue that the average mining rate between the time of genesis and the time of the first blockchain forking (this is a stopping time) is expected to be lower than the long-term average, because as soon as two blocks are mined close to each other in time, the blockchain is likely to fork.  In general, it is perilous to work with miners' views, which depend on network topology as well as the adversarial strategy.  In contrast, the framework developed here allows us to} prove all results rigorously without \DG{explicitly defining honest and adversarial miners' views and actions.  Instead, all consequential views and actions are reflected in the well-defined published and unpublished blockchains.}

\begin{short}
Due to space limitations, the proof of some results are relegated to a longer version of this paper\cite{li2020probablistic}.
\end{short}

\subsection{The Prism protocol}
\DG{It is well known that} the throughput of bitcoin is \DG{severely restricted} by design to ensure security\cite{decker2013information}.
In particular, the average time interval between new blocks is set to be much longer than the block propagation delays so that forking is infrequent\cite{sompolinsky2015secure}.
Many ideas have been proposed to improve the blockchain throughput.
One way is to construct high-forking blockchains by
optimizing the forking rule, which is vulnerable to certain attacks\cite{sompolinsky2015secure,lewenberg2015inclusive, sompolinsky2016spectre, sompolinsky2018phantom,natoli2016balance, li2018scaling,zheng2018blockchain}. Another line of work is to decouple the various functionalities of the blockchain\cite{eyal2016bitcoin, pass2017fruitchains}, under the spirit of which
Bagaria, Kannan, Tse, Fanti, and Viswanath\cite{bagaria2019prism} proposed the Prism protocol in $2018$. The Prism protocol defines one proposer blockchain and many voter blockchains. The voter blocks elect a leader block at each level of the proposer blockchain by vote. The sequence of leader blocks concludes the contents of all voter blocks, and finalizes the ledger.
A voter blockchain follows the bitcoin protocol to provide security to leader election process.
With this design, the throughput (containing the content of {\em all} voter blocks) is decoupled from the mining rate of each voter blockchain. Slow mining rate guarantees the security of each voter blockchain as well as the leader sequence they selected.
Prism achieves security against up to 50\% adversarial hashing power, optimal throughput up to the capacity of the network, and fast confirmation latency for  transactions. A thorough description and analysis is found in \cite{bagaria2019prism}.

In \cite{bagaria2019prism}, liveness and consistency of  Prism transactions were proved assuming a finite life span of the blockchains under the lockstep synchrony model\cite{bagaria2019prism}. In
\cite{li2019onanalysisarxiv} we have  strengthened and extended the results to the  non-lockstep synchrony model.
\DG{In this paper, we establish} the key properties \DG{of the Prism backbone protocol under the more realistic} continuous-time model.

\section{The Bitcoin Backbone Protocol}

\DG{In order to develop a fully rigorous analysis, we first build an explicit model for the blockchain system which evolves in time according to the bitcoin backbone protocol.  Some existing models in the literature involve miners' views, their protocol executions, block propagation, adversarial miners' control power, etc.
Unfortunately, the adversary's strategy space is essentially impossible to exhaust.  In particular, the adversarial miners may regulate their mining rates and forks according to the the honest mining outcome.  Consequently, it is very hard to precisely describe the joint distribution of the honest and the adversarial mining processes.  Some authors make the unrealistic assumption that the adversarial mining processes are homogeneous in time, which of course severely weakens their security guarantees.

In this paper, we think of the adversarial strategy as a policy which maps the mining history up to any point in time to adversarial actions, which control adversarial block arrivals, their parents, and their publication times. Without loss of generality, we also regard the (possibly dynamic) topology of the peer-to-peer mining network and all network uncertainties as components of the adversarial strategy.

Once the adversarial strategy $v$ is fixed, the remaining uncertainties in the entire blockchain system are completely described by a probability space $(\Omega_v, \mathcal{F}_v, P_v)$.  Each $\omega \in \Omega_v$ represents the outcome of the entire system from time $0$ to eternity, including all block arrival times, their parents, and their publication times.  Every block can be traced back to the genesis block by recursive parental reference.  This finite sequence of blocks along the path to the genesis block is referred to as a blockchain.  It is important to note that the probability space does not include miners' views and actions.
Rather, consequential views and actions are reflected in the published and unpublished blockchains over time.

In this model, the aggregate honest mining process is a homogeneous Poisson point process.  The adversarial mining process is arbitrary aside from the restriction that the arrivals of adversarial blocks are ``dominated'' by a homogeneous Poisson point process in some probabilistic sense.  In addition, the (adversarial) network uncertainties lead to arbitrary random block propagation delays, which are upper bounded by a constant.
As we shall see, such this simple model is sufficient as far as liveness and consistency properties of blockchains are concerned.
}


{\subsection{Model}}
\DG{In this subsection, we fill in the details of the probability space $(\Omega_v,\mathcal{F}_v,P_v)$.}

\DG{Throughout this paper,} by saying “by time $t$” we mean all time up to and including time $t$ starting from time $0$ ({excluding} time $0$).
{Hence ``by time $t$'' is equivalent to ``during $(0, t]$''.}

\begin{definition}
{(Block\footnote{
\DG{A block in a practical blockchain system is a data structure with an identifier and a reference to its parent block.  As long as the identifier consists of a large number of bits, it is fair to assume that each block has a unique identifier for all practical purposes.  Note, however, it is not possible for the data structure to include its precise mining time or its (universal) block number.}
} and mining process)}
An honest genesis block, also referred to as block $0$, is mined at time $0$. Subsequent blocks are referred to as block $1$, block $2$, and so on, in the order they are mined in time after time $0$.
For $t>0$, let $M_t$ denote the
total number of non-genesis blocks mined {\color{black}by} time $t$.
If a single block is mined at time $t$, it must be block $M_t$. If $k>1$ blocks are mined at exactly the same time $t$, we assume the tie is broken in some deterministic manner so that their block numbers are $M_{t}-k+1, \ldots, M_t$, respectively.\footnote{We address ties for mathematical rigor. Ties essentially do not happen in continuous time.}
\end{definition}

\begin{definition}
(Blockchain)
We use $\f{k} \in \{0,1,...,k-1\}$ to denote block $k$'s parent block number.
For block $k$ to be valid, there must exist a unique sequence of block numbers $\bl{0},\bl{1}, \dots, \bl{n}$ where $\bl{0}=0$, $\bl{n}=k$, and $\f{\bl{i}}=\bl{i-1}$ for $i =1,\ldots ,n$.  This sequence is referred to as blockchain $(\bl{0},\ldots,\bl{n})$ or simply blockchain $k$ since it is determined by $k$.
\end{definition}
We assume that block $k$ is {validated by} the existence of the entire blockchain $k${:}
To validate a block $k$, one needs access to the entire blockchain $k$. Because invalid blocks are inconsequential as far as the distributed consensus protocol is concerned, throughout this paper, by a block we always mean a valid block unless noted otherwise.

\begin{definition}
{(Height)} The height of block $k$, denoted as $\h{k}$, is defined as the height of blockchain $k$, which is in turn defined as the number of non-genesis blocks in it.
\end{definition}

We let $T_k$  denote the time when block $k$ is mined.  We say blockchain $k$ is mined by time $t$ if $T_k\le t$.  We let $P_k$ denote the time when block $k$ is published.
A blockchain is said to be mined by time $t$ if all of its blocks are mined by time $t$.
A blockchain is said to be published by time $t$ if all of its blocks are published by time $t$.
Let $\Delta$ denote an upper bound for all communication delays.

\begin{definition} \label{def: c longest blockchain}
($t$-credible blockchain) Blockchain $b$ is said to be \emph{$t$-credible} {if the blockchain} has been published by time $t$, and is no shorter than any blockchain published by time $t-\Delta$. That is to say,
{
\begin{align}
P_{b} &  \le t,
\end{align}
and
\begin{align}
\h{b} & \ge \h{k}, \quad \forall k: P_{k} \le t - \Delta. \label{equ: c height of longest blockchain}
\end{align}
}
If there is no need to specify $t$ explicitly, blockchain $b$ can also be simply called a credible blockchain.
\end{definition}

There can be multiple $t$-credible blockchains, which may or may not be of the same {height}.

According to the bitcoin protocol, if block $k$ is honest, it must extend a $T_k$-credible blockchain and it must be published as soon as it is mined, i.e., $P_k = T_k$.
\begin{lemma}\label{lemma: honest block append to honest blockchain}
If block $k$ is honest, then both blockchain $\f{k}$ and blockchain $k$ must be $T_k$-credible.
\end{lemma}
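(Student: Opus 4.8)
The plan is to read off the first half of the lemma directly from the protocol rule stated immediately above it, and then to obtain the second half by verifying the two defining conditions of $T_k$-credibility (Definition~\ref{def: c longest blockchain}) for blockchain $k$. By the bitcoin protocol an honest block $k$ extends a $T_k$-credible blockchain and is published the instant it is mined, so that $P_k = T_k$. Since block $k$'s parent is block $\f{k}$, the blockchain it extends is exactly blockchain $\f{k}$; hence blockchain $\f{k}$ is $T_k$-credible, which is the first assertion.

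It then remains to show that blockchain $k$ is itself $T_k$-credible. First I would check the publication condition. Blockchain $k$ is blockchain $\f{k}$ with block $k$ appended. Because blockchain $\f{k}$ is $T_k$-credible, every block in it is published by time $T_k$; and block $k$ is published at $P_k = T_k$. Thus every block of blockchain $k$ is published by time $T_k$, so blockchain $k$ is published by time $T_k$. Next I would check the height (domination) condition. Appending one block increases the height by one, so $\h{k} = \h{\f{k}} + 1 \ge \h{\f{k}}$. The $T_k$-credibility of blockchain $\f{k}$ gives $\h{\f{k}} \ge \h{j}$ for every block $j$ with $P_j \le T_k - \Delta$; chaining the two inequalities yields $\h{k} \ge \h{j}$ for all such $j$. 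Both conditions of Definition~\ref{def: c longest blockchain} therefore hold for blockchain $k$, completing the proof.

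The argument is little more than unwinding the definitions, so I do not expect a genuine obstacle. The one place that warrants care is the publication condition: one must use the honest-publication rule $P_k = T_k$, rather than some later time, to conclude that extending the already-published parent chain by block $k$ still leaves the whole of blockchain $k$ published by time $T_k$. The height condition is then automatic, since appending a block can only increase the height, so the parent chain's domination of all blockchains published by time $T_k - \Delta$ is inherited by blockchain $k$.
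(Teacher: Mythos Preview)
Your proof is correct and follows essentially the same approach as the paper: both read off the $T_k$-credibility of blockchain $\f{k}$ from the protocol rule, and then verify Definition~\ref{def: c longest blockchain} for blockchain $k$ by noting that $\h{k}=\h{\f{k}}+1$ dominates every blockchain published by $T_k-\Delta$. You are simply more explicit than the paper in checking the publication condition via $P_k=T_k$, which the paper leaves implicit.
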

\begin{proof}
Since an honest block extends a $T_k$-credible blockchain,  blockchain $\f{k}$ is $T_k$-credible. The height of blockchain $k$ is greater than the height of every blockchain published by {time} $T_k-\Delta$, thus blockchain $k$ is also $T_k$-credible.
\end{proof}

Although an honest block always extends a credible blockchain, a credible blockchain {may not} end with an honest block.
An adversarial block may or may not extend a credible blockchain and may be published any time after it is mined.

Let $N_t$ denote the total number of honest blocks mined {during $(0, t]$}. We assume the sum mining rate of honest miners is $\alpha$, then $(N_{t}, t\ge 0)$ is a homogeneous Poisson point process with rate $\alpha$.
Let $Z_t$ denote the total number of adversarial blocks mined {during $(0, t]$}, then $N_t + Z_t = M_t$.
We assume the sum mining rate of all adversaries is no larger than $\beta$, so
{\color{black}the number of adversarial blocks mined during any time interval is upper bounded
probabilistically. Specifically,} the probability that the number of adversarial blocks mined during $(s,t]$ is greater than {a number} is upper bounded by the probability that a Poisson distribution with parameter $\beta(t-s)$ is greater than {the same number}. That is to say,  for real number $a$ and $0\le s<t$,
\begin{align}
    P(Z_{t}-Z_s\le a) \ge e^{-\beta(t-s)} \sum_{i=0}^{\lfloor a \rfloor}\frac{(\beta(t-s))^i}{i!}. \label{equ: c Z bound}
\end{align}

We note that an overarching probability space can be defined for arbitrary given adversary strategies.
The key to a simple analysis is to use property \eqref{equ: c Z bound}, which holds regardless of the adversarial strategies, the network topology, and other sources of randomness (e.g., communication loss and latency).

\begin{definition}\label{def: lagger}
{(Lagger and loner)} An honest block {$k$} is called a {\em lagger} if it is the only honest block mined during {$[T_k-\Delta, T_k]$}.
The lagger is also called a {\em loner} if it is also the only honest block mined during $[{\color{black}T_k}, {\color{black}T_k}+\Delta]$.
\end{definition}

{Suppose $0\le s < t$.} Let $\N{s}{t}= N_{t}-N_s$ denote the total number of honest blocks mined during time interval $(s,t]$. Let $\X{s}{t}$ denote the total number of laggers mined during  $(s, t]$. Let $\Y{s}{t}$ denote the total number of loners mined during $(s, t]$. Let $\Z{s}{t}$ denote the total number of adversarial blocks mined during $(s, t]$. {By convention, $\N{s}{t} = \X{s}{t} = \Y{s}{t} = \Z{s}{t} = 0$ for all $s \ge t$.}


Define random variable $\U{i} = 1$ if the $i$-th honest block is a lagger, and $0$ otherwise.
Then we have {
\begin{align}\label{equ: XU}
\X{s}{t} = \sum_{i=N_s+1}^{N_t}\U{i}.
\end{align}
Likewise, denote $\V{i} = 1$ if the $i$-th honest block is a loner and $0$ otherwise. Then we have
\begin{align} \label{equ: YV}
\Y{s}{t} = \sum_{i=N_s+1}^{N_t}\V{i}.
\end{align}
}

For convenience, we introduce the following parameter negatively related to the maximum propagation delay and the total honest mining rate:
\begin{align}\label{def: g}
    g=e^{-\alpha \Delta}.
\end{align}

{We further make a crucial assumption that} the parameters satisfy
\begin{align}
    (1-\frac{81}{40}\delta) g^2\alpha > \beta \label{equ: alpha beta}
\end{align}
{where $\delta$ is a constant on $(0,\frac{40}{81})$}.
This assumption indicates the {sum} mining rate  of adversarial blocks must be strictly less than the mining rate of honest blocks subject to  a ``propagation discount'' {($g^2=e^{-2\alpha \Delta}$)} and also a penalty dependent on the typicality factor $\delta$.  Also, throughout this paper we assume $\alpha > \frac{1}{2}$ block per time unit. This requirement can be satisfied by adjusting the time {units} we adopt.



{\subsection{Preliminaries}}
Next, we introduce a few preliminaries.
\begin{lemma} \label{lemma: Poisson property}
Let $X$ be a Poisson random variable with parameter $\lambda$. Then for every $\delta \in (0, 1]$,
\begin{align}
    P(X \le (1-\delta)\lambda) < e^{{\color{black}-\frac{1}{2}\delta^2\lambda}}, \label{c 0.0}
\end{align}
and
\begin{align}
    P(X \ge (1+\delta)\lambda) < e^{-\frac{1}{3}\delta^2\lambda}. \label{c 0.1}
\end{align}
\end{lemma}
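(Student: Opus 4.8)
The plan is to apply the standard Chernoff bounding technique (exponential Markov inequality) to each tail separately, exploiting the fact that a Poisson random variable $X$ with parameter $\lambda$ has moment generating function $E[e^{sX}] = e^{\lambda(e^s-1)}$ for every real $s$. In both cases the probability bound reduces to an optimization over $s$, after which the work is purely analytic: the sharp exponent must be compared against the simpler quadratic exponents $\frac{1}{2}\delta^2\lambda$ and $\frac{1}{3}\delta^2\lambda$.

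For the lower tail \eqref{c 0.0}, I would fix $s<0$ and write
\[
P(X \le (1-\delta)\lambda) = P\bigl(e^{sX} \ge e^{s(1-\delta)\lambda}\bigr) \le e^{\lambda(e^s-1) - s(1-\delta)\lambda}.
\]
Minimizing the exponent over $s<0$ gives the optimal choice $s=\ln(1-\delta)<0$, which yields
\[
P(X \le (1-\delta)\lambda) \le e^{-\lambda[\delta + (1-\delta)\ln(1-\delta)]}.
\]
It then suffices to show $\delta + (1-\delta)\ln(1-\delta) > \frac{1}{2}\delta^2$ on $(0,1]$. Writing $h(\delta)$ for the difference of the two sides, one has $h(0)=0$ and $h'(\delta) = -\ln(1-\delta) - \delta$, which is strictly positive on $(0,1)$ because the series expansion $-\ln(1-\delta) = \delta + \delta^2/2 + \cdots$ exceeds $\delta$. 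Hence $h(\delta)>0$, and the strict exponent inequality upgrades the ``$\le$'' above to the strict ``$<$'' claimed in the lemma.

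For the upper tail \eqref{c 0.1}, I would proceed symmetrically with $s>0$, obtaining the bound $e^{\lambda(e^s-1) - s(1+\delta)\lambda}$, optimized at $s=\ln(1+\delta)>0$, which gives
\[
P(X \ge (1+\delta)\lambda) \le e^{-\lambda[(1+\delta)\ln(1+\delta) - \delta]}.
\]
The remaining task is to verify $(1+\delta)\ln(1+\delta) - \delta > \frac{1}{3}\delta^2$ on $(0,1]$. This last inequality is where I expect the main obstacle. Letting $k(\delta)$ denote the difference, I find $k(0)=k'(0)=0$ and $k'(\delta) = \ln(1+\delta) - \frac{2}{3}\delta$. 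Unlike the lower-tail case, the monotonicity argument is not immediate: since $k''(\delta) = \frac{1}{1+\delta} - \frac{2}{3}$ changes sign at $\delta=\frac{1}{2}$, the derivative $k'$ first increases and then decreases on $(0,1]$, so one cannot simply invoke convexity. The resolution is to observe that $k'$ nonetheless stays nonnegative throughout the interval: it departs from $k'(0)=0$, rises, and although it falls back it remains positive up to the endpoint, where $k'(1)=\ln 2 - \frac{2}{3}>0$. It follows that $k$ is increasing on $[0,1]$ and therefore $k(\delta)>0$ for $\delta\in(0,1]$, which delivers the strict bound and completes the proof.
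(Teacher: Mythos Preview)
Your proposal is correct and follows essentially the same approach as the paper: both apply the exponential Markov inequality, choose the optimal parameter $s=\ln(1\pm\delta)$, and reduce to the analytic inequalities $(1-\delta)\ln(1-\delta) > -\delta + \tfrac{1}{2}\delta^2$ and $(1+\delta)\ln(1+\delta) > \delta + \tfrac{1}{3}\delta^2$. The only difference is that you spell out the derivative arguments verifying these last inequalities, whereas the paper simply asserts them.
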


\begin{draft}
\begin{proof}
To prove \eqref{c 0.0}, we have
\begin{align}
  P(X \le (1-\delta)\lambda) & = P(e^{-tX} \ge e^{-t(1-\delta)\lambda}) \label{c 0.05}\\
  & \le \frac{\mathbb{E}[e^{-tX}]}{e^{-t(1-\delta)\lambda}} \label{c 0.06}\\
  & = e^{(e^{-t}-1)\lambda + t(1-\delta)\lambda} \label{c 0.07}
\end{align}
where \eqref{c 0.06} is due to Markov inequality and \eqref{c 0.07} is due to the moment generating function of Poisson random variable. Picking $t = -\log(1-\delta)$, we have
\begin{align}
      P(X \le (1-\delta)\lambda) & \le e^{-\delta - (1-\delta)\log(1-\delta)} \\
      & < e^{-\frac{1}{2}\delta^2\lambda}  \label{c 0.08}
\end{align}
where \eqref{c 0.08} is due to $(1-\delta)\log(1-\delta) > -\delta + \frac{\delta^2}{2}$ for $\delta \in (0, 1)$.

To prove \eqref{c 0.1}, we have
\begin{align}
  P(X \ge (1+\delta)\lambda) & = P(e^{tX} \ge e^{t(1+\delta)\lambda}) \label{c 0.01}\\
  & \le \frac{\mathbb{E}[e^{tX}]}{e^{t(1+\delta)\lambda}} \label{c 0.02}\\
  & = e^{(e^t-1)\lambda - t(1+\delta)\lambda} \label{c 0.03}
\end{align}
where \eqref{c 0.01} is due to Markov inequality and \eqref{c 0.02} is due to the moment generating function of Poisson random variable. Picking $t = \log(1+\delta)$, we have
\begin{align}
    P(X \ge (1+\delta)\lambda) & \le e^{\delta - (1+\delta)\log(1+\delta)} \\
    & < e^{-\frac{1}{3}\delta^2\lambda} \label{c 0.04}
\end{align}
where \eqref{c 0.04} is due to $(1+\delta)\log(1+\delta) > \delta + \frac{\delta^2}{3}$ for $\delta \in (0, 1)$.
\end{proof}
\end{draft}

\begin{proposition} \label{c prop: Chernoff bound}
(Chernoff bound, in \cite[page 69]{mitzenmacher2017probability}) Let $X\sim binomial(n,p)$. Then for every $\delta\in (0,1]$,
\begin{align} \label{c equ: chernoff X<}
    P(X \le (1-\delta)pn) < e^{\color{black}-\frac{1}{2}\delta^2pn},
\end{align} and
\begin{align} \label{c equ: chernoff X>}
    P(X \ge (1+\delta)pn) < e^{\color{black}-\frac{1}{3}\delta^2pn}.
\end{align}
\end{proposition}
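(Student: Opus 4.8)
The plan is to reuse the exponential Markov (Chernoff) method exactly as in the proof of Lemma~\ref{lemma: Poisson property}, exploiting the fact that the binomial moment generating function is dominated by the Poisson one. Recall that for $X \sim \text{binomial}(n,p)$ the moment generating function is $\mathbb{E}[e^{tX}] = (1-p+pe^t)^n$ for all real $t$, and set $\lambda = pn$ throughout to align notation with the Poisson argument.

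First I would establish the lower tail \eqref{c equ: chernoff X<}. For $t>0$, write $P(X \le (1-\delta)pn) = P(e^{-tX} \ge e^{-t(1-\delta)pn})$ and apply Markov's inequality to obtain the upper bound $(1-p+pe^{-t})^n e^{t(1-\delta)pn}$. The key step is to invoke the elementary inequality $1+x \le e^x$ with $x = p(e^{-t}-1)$, which gives $(1-p+pe^{-t})^n \le e^{np(e^{-t}-1)}$. The resulting bound $e^{np(e^{-t}-1)+t(1-\delta)pn}$ is then \emph{identical in form} to expression \eqref{c 0.07} in the Poisson proof. Consequently, choosing $t = -\log(1-\delta)$ and invoking the same analytic inequality $(1-\delta)\log(1-\delta) > -\delta + \frac{\delta^2}{2}$ for $\delta \in (0,1)$ yields \eqref{c equ: chernoff X<} verbatim.

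The upper tail \eqref{c equ: chernoff X>} is handled symmetrically. For $t>0$, write $P(X \ge (1+\delta)pn) = P(e^{tX} \ge e^{t(1+\delta)pn})$, apply Markov to get $(1-p+pe^t)^n e^{-t(1+\delta)pn}$, and use $1+x \le e^x$ with $x = p(e^t-1)$ to dominate this by $e^{np(e^t-1)-t(1+\delta)pn}$. This again matches \eqref{c 0.03}, so the choice $t = \log(1+\delta)$ together with $(1+\delta)\log(1+\delta) > \delta + \frac{\delta^2}{3}$ for $\delta \in (0,1)$ delivers \eqref{c equ: chernoff X>}.

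Since the proposition is quoted from a textbook and the argument is a direct transcription of the preceding lemma, there is no genuine obstacle here; the only point worth emphasizing is the reduction via $1+x \le e^x$, which collapses the binomial moment generating function onto the already-established Poisson form $e^{\lambda(e^{\pm t}-1)}$, so that no new analytic estimates are needed beyond those proved for Lemma~\ref{lemma: Poisson property}.
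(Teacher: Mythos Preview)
Your argument is correct. The reduction via $1+x\le e^x$ collapses the binomial moment generating function onto the Poisson form, after which the choices $t=-\log(1-\delta)$ and $t=\log(1+\delta)$ and the two analytic inequalities you quote finish both tails exactly as in Lemma~\ref{lemma: Poisson property}. (The endpoint $\delta=1$ in \eqref{c equ: chernoff X<} is not covered by $t=-\log(1-\delta)$, but it follows separately from $(1-p)^n\le e^{-pn}<e^{-pn/2}$.)

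As for comparison: the paper does not prove this proposition at all; it simply cites \cite[page~69]{mitzenmacher2017probability} and moves on. Your write-up therefore supplies strictly more than the paper does. The approach you take is in fact the standard one in that reference, and your observation that the bound factors through the Poisson moment generating function nicely explains why the exponents in Proposition~\ref{c prop: Chernoff bound} coincide with those in Lemma~\ref{lemma: Poisson property}.
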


{\subsection{Analysis of the bitcoin backbone protocol}}
\begin{definition}
{(Good event)} For all $0\le s < t$ and $0<\delta<\frac{1}{2}$, the \emph{$\delta$-good event with respect to time interval $(s,t]$} is
\begin{align}
    \E{s}{t} = \Ea{s}{t} \cap \Eb{s}{t} \cap \Ec{s}{t} \cap \Ed{s}{t}
\end{align}
where
\begin{align}
    \Ea{s}{t} & =  \left\{(1-\delta)(t-s)\alpha < \N{s}{t} < (1+\delta)(t-s)\alpha \right\}\label{c equ: E1} \\
    \Eb{s}{t} & = \left\{(1-\delta)(t-s)g\alpha < \X{s}{t} \right\}\label{c equ: E2} \\
    \Ec{s}{t} & = \left\{(1-\delta)(t-s)g^2\alpha < \Y{s}{t} \right\}\label{c equ: E3} \\
    \Ed{s}{t} & = \left\{\Z{s}{t} <(t-s)\beta + (t-s)g^2\alpha\delta\right\}. \label{c equ: E4}
\end{align}
\end{definition}
\begin{draft}

\end{draft}
Basically, under $\E{s}{t}$,  there exist 1) a ``typical'' number of honest blocks, 2) ``enough'' laggers and loners, and 3) not too many adversarial blocks.

\begin{lemma} \label{lemma: exp N}
For all real numbers $0\le s < t$,
\begin{align}
    \mathbb{E}[\N{s}{t}] = \alpha(t-s).
\end{align}
\end{lemma}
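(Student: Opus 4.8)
The plan is to invoke the defining property of the honest mining process as fixed in the model. Recall that $(N_t, t \ge 0)$ was assumed to be a homogeneous Poisson point process with rate $\alpha$, and that $\N{s}{t} = N_t - N_s$ by definition. The single fact I need is the stationary-increment property of such a process: the number of arrivals over the interval $(s,t]$ is a Poisson random variable whose parameter is $\alpha$ times the length of the interval, namely $\alpha(t-s)$.

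First I would recall the elementary fact that a Poisson random variable $X$ with parameter $\lambda$ satisfies $\mathbb{E}[X] = \lambda$. This follows from the direct computation
\begin{align}
\mathbb{E}[X] = \sum_{k=0}^{\infty} k\, e^{-\lambda}\frac{\lambda^k}{k!} = \lambda \sum_{k=1}^{\infty} e^{-\lambda}\frac{\lambda^{k-1}}{(k-1)!} = \lambda,
\end{align}
where the final equality uses that the probabilities of a Poisson distribution sum to one. I would then apply this identity with $\lambda = \alpha(t-s)$, since by the increment property $\N{s}{t}$ is Poisson distributed with exactly this parameter, obtaining $\mathbb{E}[\N{s}{t}] = \alpha(t-s)$ for every $0 \le s < t$, as claimed.

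There is no genuine obstacle here: the conclusion is an immediate consequence of the stationary-increment structure already assumed for the honest mining process. The only point worth flagging is conceptual rather than computational. The stationarity and independence of increments are properties of the honest Poisson process \emph{alone} and do not depend on the adversarial strategy $v$ nor on the network uncertainties folded into it. Because the honest process is held fixed across the probability space $(\Omega_v, \mathcal{F}_v, P_v)$, the mean computation is valid uniformly in $v$, which is precisely why this elementary identity can be safely used as a building block in the more delicate good-event estimates that follow.
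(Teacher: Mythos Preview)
Your proposal is correct and follows essentially the same approach as the paper: both simply invoke that $\N{s}{t}$ is Poisson with parameter $\alpha(t-s)$, from which the mean follows immediately. You supply additional detail (the explicit mean computation and a remark on uniformity in the adversarial strategy $v$), but the core argument is identical.
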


\begin{proof}
The result follows from the fact that $\N{s}{t}$ is a Poisson distribution with parameter $\alpha(t-s)$.
\end{proof}

\begin{lemma} \label{lemma: c prob Ea}
For all $0<\delta<\frac{1}{2}$ and $0\le s < t$,
\begin{align}
  P\left((\Ea{s}{t})^c\right) <  2e^{-\frac{1}{3}\delta^2\alpha (t-s)}.
\end{align}
\end{lemma}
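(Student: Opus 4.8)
The plan is to recognize that the complement event $(\Ea{s}{t})^c$ is exactly the union of a lower-tail event and an upper-tail event for the random variable $\N{s}{t}$, and then to bound each tail separately using the Poisson tail estimates already established in Lemma~\ref{lemma: Poisson property}. The key structural fact is that since $(N_t, t\ge 0)$ is a homogeneous Poisson point process with rate $\alpha$, the increment $\N{s}{t} = N_t - N_s$ is a Poisson random variable with parameter $\lambda = \alpha(t-s)$, consistent with Lemma~\ref{lemma: exp N}.

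First I would negate the two strict inequalities defining $\Ea{s}{t}$ in \eqref{c equ: E1} to write
\[
(\Ea{s}{t})^c = \left\{\N{s}{t} \le (1-\delta)(t-s)\alpha\right\} \cup \left\{\N{s}{t} \ge (1+\delta)(t-s)\alpha\right\},
\]
and apply the union bound. Next I would invoke Lemma~\ref{lemma: Poisson property} with $\lambda = \alpha(t-s)$, noting that the hypothesis $\delta \in (0,1]$ is met since $\delta \in (0,\tfrac12)$; this bounds the lower-tail probability by $e^{-\frac{1}{2}\delta^2\alpha(t-s)}$ and the upper-tail probability by $e^{-\frac{1}{3}\delta^2\alpha(t-s)}$. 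Finally, because $\frac{1}{2}\delta^2 \ge \frac{1}{3}\delta^2$, the lower-tail bound is dominated by the upper-tail bound, so the sum of the two collapses to at most $2e^{-\frac{1}{3}\delta^2\alpha(t-s)}$, which is the claimed inequality.

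I do not anticipate any substantial obstacle: the argument is a routine combination of the established Poisson tail estimates with a union bound. The only points requiring minor care are verifying that $\delta \in (0,\tfrac12)$ lies in the admissible range $(0,1]$ of Lemma~\ref{lemma: Poisson property}, and uniformly using the looser exponent $\frac{1}{3}\delta^2$ so that the two distinct tail bounds can be merged into the single factor of $2$ appearing in the statement.
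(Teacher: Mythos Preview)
Your proposal is correct and essentially identical to the paper's proof: both split $(\Ea{s}{t})^c$ into the lower- and upper-tail events for the Poisson variable $\N{s}{t}$ with mean $\alpha(t-s)$ (Lemma~\ref{lemma: exp N}), apply the two bounds of Lemma~\ref{lemma: Poisson property}, and merge them using the looser exponent $\tfrac{1}{3}\delta^2$ to obtain the factor of $2$.
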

\begin{proof}
\begin{align}
    P\left((\Ea{s}{t})^c\right)&  = P \left(\N{s}{t}\le (1-\delta)(t-s)\alpha\right) + P\left(\N{s}{t}\ge (1+\delta)(t-s)\alpha \right) \\
    & = P\left(\N{s}{t} \le \mathbb{E}[\N{s}{t}] - \delta \mathbb{E}[\N{s}{t}]\right) + P\left( \N{s}{t} \ge \mathbb{E}[\N{s}{t}] +\delta \mathbb{E}[\N{s}{t}]\right) \label{c 1.09}\\
    & <2e^{-\frac{1}{3}\delta^2\alpha (t-s)} \label{c 1.10},
\end{align}
where \eqref{c 1.09} is due to Lemma \ref{lemma: exp N} and \eqref{c 1.10} is due to Lemma \ref{lemma: Poisson property}.
\end{proof}

\begin{lemma} \label{lemma: Ui bernoulli}
Random variables $\U{1},\U{2},\ldots$ are independent Bernoulli random variables with
\begin{align}
P(\U{i} = 1) = g, \; i = 1,2,\ldots \label{c 0.912}
\end{align}
\end{lemma}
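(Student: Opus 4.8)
The plan is to reduce the lagger indicators $\U{1},\U{2},\ldots$ to functions of the interarrival times of the honest mining process and then exploit the independence of those interarrival times. First I would let $0=\tau_0<\tau_1<\tau_2<\cdots$ denote the honest mining times, where $\tau_0=0$ is the genesis time and $\tau_i$ is the mining time of the $i$-th honest block; these are precisely the jump times of the rate-$\alpha$ homogeneous Poisson process $(N_t,t\ge 0)$. Since the honest arrivals form a Poisson process, two honest blocks coincide with probability $0$, so the tie-breaking convention is irrelevant here. By the defining property of a homogeneous Poisson process, the interarrival times $W_i=\tau_i-\tau_{i-1}$ are i.i.d.\ exponential random variables with rate $\alpha$.

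The key step is the identity $\U{i}=\mathbbm{1}\{W_i>\Delta\}$. By Definition~\ref{def: lagger}, the $i$-th honest block is a lagger if and only if it is the only honest block mined during $[\tau_i-\Delta,\tau_i]$. Every honest block other than the $i$-th is mined either at a time $>\tau_i$ (hence outside the backward window) or at a time $\le\tau_{i-1}$; consequently the window $[\tau_i-\Delta,\tau_i]$ contains no other honest block exactly when $\tau_{i-1}<\tau_i-\Delta$, i.e.\ when $W_i>\Delta$. For the boundary case $i=1$, the role of the predecessor is played by the genesis block at $\tau_0=0$, and the same reasoning yields $\U{1}=\mathbbm{1}\{\tau_1>\Delta\}=\mathbbm{1}\{W_1>\Delta\}$, so the identity holds uniformly in $i$.

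With this identity in hand, both assertions follow immediately. Each $\U{i}$ is a deterministic (measurable) function of the single interarrival time $W_i$, and the $W_i$ are mutually independent, so the $\U{i}$ are mutually independent. Moreover $P(\U{i}=1)=P(W_i>\Delta)=e^{-\alpha\Delta}=g$ by the exponential tail together with the definition~\eqref{def: g} of $g$.

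I anticipate the only delicate points to be the correct treatment of the genesis block at the left boundary (so that $\U{1}$ carries exactly the same distribution as the later indicators) and the verification that the backward window $[\tau_i-\Delta,\tau_i]$ cannot capture any block mined after $\tau_i$; both are resolved above by observing that the gap to the immediate predecessor is the only relevant quantity. Everything else is the standard interarrival structure of the Poisson process.
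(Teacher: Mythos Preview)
Your proof is correct and follows essentially the same approach as the paper: both arguments rest on the fact that the interarrival times of the honest Poisson process are i.i.d.\ exponential$(\alpha)$, so that $\U{i}=\mathbbm{1}\{W_i>\Delta\}$ and hence $P(\U{i}=1)=e^{-\alpha\Delta}=g$ with independence inherited from the $W_i$. Your version is simply more explicit about the identity $\U{i}=\mathbbm{1}\{W_i>\Delta\}$ and the genesis boundary case, which the paper leaves implicit.
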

\begin{proof}
The inter-arrival times of the Poisson process $(N_t, t>0)$ are independent exponential random variables with the same parameter $\alpha$ \cite[Page 419]{ross1998first}. The probability a lag exceeds $\Delta$ is equal to $g$, hence \eqref{c 0.912} follows.
\end{proof}

\begin{lemma}\label{lemma: c exist int}
Suppose $0<\delta<\frac{1}{2}$ and $0\le s < t-\frac{80}{\delta}$. Then there exists a number $\delta_1\in [\frac{29}{60}\delta, \frac{31}{60}\delta]$ such that $(1-\delta_1)(t-s)\alpha$ is an integer.
\end{lemma}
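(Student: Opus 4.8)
The plan is to reduce the claim to the elementary fact that any closed real interval of length at least $1$ contains an integer. Writing $L=(t-s)\alpha$, I would observe that as $\delta_1$ sweeps the interval $[\frac{29}{60}\delta,\frac{31}{60}\delta]$, the affine map $\delta_1\mapsto(1-\delta_1)L$ sweeps the closed interval with endpoints $(1-\frac{31}{60}\delta)L$ and $(1-\frac{29}{60}\delta)L$. Its length is $(\frac{31}{60}\delta-\frac{29}{60}\delta)L=\frac{\delta L}{30}$.

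Next I would use the two hypotheses to force this length above $1$. From $s<t-\frac{80}{\delta}$ we get $t-s>\frac{80}{\delta}$, and combined with the standing assumption $\alpha>\frac{1}{2}$ this yields $L=(t-s)\alpha>\frac{40}{\delta}$. Hence $\frac{\delta L}{30}>\frac{\delta}{30}\cdot\frac{40}{\delta}=\frac{4}{3}>1$, so the image interval has length strictly greater than $1$ and therefore contains some integer $n$.

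Finally, I would convert this integer back into the desired $\delta_1$. Since $(1-\frac{31}{60}\delta)L\le n\le(1-\frac{29}{60}\delta)L$ and $L>0$, dividing through by $L$ and setting $\delta_1=1-\frac{n}{L}$ gives $\delta_1\in[\frac{29}{60}\delta,\frac{31}{60}\delta]$ by construction, while $(1-\delta_1)L=n$ is an integer, as required.

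There is no substantive obstacle here; the only point to check carefully is that the hypotheses are strong enough to push the interval length past $1$, and this is exactly where the slack factor $\frac{80}{\delta}$ (rather than, say, $\frac{30}{\delta}$) together with $\alpha>\frac{1}{2}$ is used. I would note that the particular endpoints $\frac{29}{60}\delta$ and $\frac{31}{60}\delta$ play no role beyond fixing the window width $\frac{\delta}{30}$: any centered window whose width times $L$ exceeds $1$ would suffice, so the lemma leaves comfortable room for the constants chosen elsewhere in the paper.
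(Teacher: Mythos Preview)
Your proposal is correct and follows essentially the same approach as the paper: both compute the length of the image interval as $\frac{\delta}{30}(t-s)\alpha$, use $t-s>\frac{80}{\delta}$ and $\alpha>\frac12$ to push it above $1$, and then recover $\delta_1$ from an integer in that interval. The only cosmetic difference is that the paper phrases the last step via the Intermediate Value Theorem applied to $f(x)=(1-x)(t-s)\alpha$ with $n=\lceil f(\tfrac{31}{60}\delta)\rceil$, whereas you invert the affine map directly.
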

\begin{proof}
Let function $f(x) = (1-x)(t-s)\alpha$. Obviously $f(x)$ is a continuous function on closed interval $[\frac{29}{60}\delta, \frac{31}{60}\delta]$. Note that
\begin{align}
    f(\frac{29}{60}\delta)-f(\frac{31}{60}\delta) & = \frac{\delta}{30}(t-s)\alpha \\
    & > \frac{8}{3}\alpha \label{c 0.913}\\
    & > 1   \label{c 0.914}
\end{align}
where \eqref{c 0.913} is due to $t-s>\frac{80}{\delta}$ and \eqref{c 0.914} is due to $\alpha > \frac{1}{2}$.
Let $n = \left\lceil f(\frac{31}{60}\delta) \right \rceil$.
Then we have $f(\frac{31}{60}\delta)\le n < f(\frac{29}{60}\delta)$. According to the Intermediate Value Theorem, there must exist a $\delta_1\in [\frac{29}{60}\delta, \frac{31}{60}\delta]$ such that $f(\delta_1) = (1-\delta_1)(t-s)\alpha = n$. Hence the proof.
\end{proof}

\begin{lemma} \label{lemma: c prob Eb}
For all $0<\delta<\frac{1}{2}$ and $0\le s < t-\frac{80}{\delta}$,
\begin{align}
 P\left((\Eb{s}{t})^c\right) < 2e^{-\frac{1}{12}(t-s)\delta^2g\alpha}.
\end{align}
\end{lemma}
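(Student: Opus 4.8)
The plan is to bound the complement event $(\Eb{s}{t})^c=\{\X{s}{t}\le(1-\delta)(t-s)g\alpha\}$ by separating the two distinct mechanisms that can cause a deficit of laggers: either too few honest blocks are mined during $(s,t]$, or enough are mined but too few of them turn out to be laggers. To make this split quantitative I would first invoke Lemma~\ref{lemma: c exist int} (whose hypotheses $s<t-\coeff$ and $\alpha>\frac12$ are exactly those assumed here) to choose $\delta_1\in[\frac{29}{60}\delta,\frac{31}{60}\delta]$ for which $n:=(1-\delta_1)(t-s)\alpha$ is an integer, and then write
\begin{align}
P\left((\Eb{s}{t})^c\right)\le P\left(\N{s}{t}< n\right)+P\left(\X{s}{t}\le(1-\delta)(t-s)g\alpha,\ \N{s}{t}\ge n\right).
\end{align}

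The first term is a lower Poisson tail. Since $\N{s}{t}$ is Poisson with mean $\alpha(t-s)$ and $n=(1-\delta_1)(t-s)\alpha$, Lemma~\ref{lemma: Poisson property} gives $P(\N{s}{t}<n)\le P(\N{s}{t}\le(1-\delta_1)(t-s)\alpha)<e^{-\frac12\delta_1^2\alpha(t-s)}$; using $\delta_1\ge\frac{29}{60}\delta$ and $g\le1$ this is at most $e^{-\frac{1}{12}(t-s)\delta^2 g\alpha}$, precisely half of the target.

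The second term is where the real difficulty lies: $\X{s}{t}=\sum_{i=N_s+1}^{N_t}\U{i}$ is a sum of a \emph{random} number of Bernoulli laggers, and this count is statistically coupled to which blocks are laggers — conditioning on many honest arrivals in $(s,t]$ biases the inter-arrival gaps to be short and thereby suppresses laggers. The key idea is to never condition on $\{\N{s}{t}\ge n\}$, but only to use it for a pathwise bound and then discard it. On $\{\N{s}{t}\ge n\}$ the first $n$ honest blocks after time $s$ all lie in $(s,t]$, so $\X{s}{t}\ge\sum_{j=1}^{n}\U{N_s+j}$. Writing $W_1,W_2,\dots$ for the successive honest inter-arrival times measured from $s$ ($W_1$ being the residual time to the first arrival after $s$), the ordinary Markov property of the Poisson process at the \emph{deterministic} time $s$ makes $W_1,W_2,\dots$ i.i.d.\ $\mathrm{Exp}(\alpha)$, and one checks that $\U{N_s+j}\ge\mathbbm{1}\{W_j>\Delta\}=:\tilde U_j$ for every $j$ (with equality for $j\ge2$, and an inequality for $j=1$ since the true gap preceding block $N_s+1$ exceeds $W_1$). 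Hence pathwise $\X{s}{t}\ge\sum_{j=1}^n\tilde U_j$ on $\{\N{s}{t}\ge n\}$, and the crucial move is simply
\begin{align}
P\left(\X{s}{t}\le(1-\delta)(t-s)g\alpha,\ \N{s}{t}\ge n\right)\le P\left(\sum_{j=1}^n\tilde U_j\le(1-\delta)(t-s)g\alpha\right),
\end{align}
which discards the biasing event. Since the $\tilde U_j$ are i.i.d.\ Bernoulli($g$) (as in Lemma~\ref{lemma: Ui bernoulli}), $\sum_{j=1}^n\tilde U_j$ is \emph{unconditionally} $\mathrm{Binomial}(n,g)$ with mean $ng=(1-\delta_1)(t-s)g\alpha$.

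Finally I would rewrite the threshold as $(1-\delta)(t-s)g\alpha=(1-\delta_2)ng$ with $\delta_2=\frac{\delta-\delta_1}{1-\delta_1}$, verify $\delta_2\in[\frac{29}{60}\delta,1]$ from $\delta_1\le\frac{31}{60}\delta$ and $\delta<\frac12$, and apply the Chernoff bound of Proposition~\ref{c prop: Chernoff bound} to get $P(\sum_j\tilde U_j\le(1-\delta_2)ng)<e^{-\frac12\delta_2^2 gn}$. A short constant check that $\frac12\delta_2^2(1-\delta_1)\ge\frac1{12}\delta^2$ (using $\delta_2\ge\frac{29}{60}\delta$ and $1-\delta_1\ge\frac{89}{120}$) shows this piece is also at most $e^{-\frac{1}{12}(t-s)\delta^2 g\alpha}$, and summing the two halves gives the claim. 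The main obstacle throughout is the random-count/summand dependence; the substance of the argument is that intersecting with $\{\N{s}{t}\ge n\}$ to obtain the pathwise inequality and then dropping that event converts the randomly stopped sum into a genuine binomial without ever conditioning in the harmful direction, leaving only the bookkeeping needed to fit both terms under the common exponent $\frac{1}{12}$.
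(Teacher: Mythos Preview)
Your proof is correct and follows the same two-event decomposition as the paper: pick $\delta_1$ via Lemma~\ref{lemma: c exist int}, set $n=(1-\delta_1)(t-s)\alpha$, bound $P(\N{s}{t}<n)$ by the Poisson lower tail, and bound the residual lagger deficit by a binomial Chernoff bound, then check that both pieces fit under $e^{-\frac{1}{12}(t-s)\delta^2 g\alpha}$.

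The only substantive difference is that the paper defines $\delta_2=\delta-\delta_1$ and applies Proposition~\ref{c prop: Chernoff bound} directly to $\sum_{i=N_s+1}^{N_s+n}\U{i}$ as though it were $\mathrm{Binomial}(n,g)$, without addressing the fact that the starting index $N_s+1$ is random and correlated with the inter-arrival times that determine the $\U{i}$'s. Your construction of the auxiliary indicators $\tilde U_j=\mathbbm{1}\{W_j>\Delta\}$ from the post-$s$ inter-arrival times, together with the pathwise inequality $\U{N_s+j}\ge\tilde U_j$ (equality for $j\ge2$, inequality for $j=1$ because the true gap preceding block $N_s+1$ exceeds the residual $W_1$), supplies exactly the missing justification: it replaces the randomly-indexed sum by a genuine $\mathrm{Binomial}(n,g)$ lower bound without ever conditioning on $\{\N{s}{t}\ge n\}$. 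This is a real improvement in rigor over the paper's presentation, though the overall strategy and the final constants are the same.
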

\begin{proof}
According to Lemma \ref{lemma: c exist int}, there exists a $\delta_1 \in [\frac{29}{60}\delta, \frac{31}{60}\delta]$ such that $(1-\delta_1)(t-s)\alpha$ is an integer. Denote \begin{align}
    n = (1-\delta_1)(t-s)\alpha. \label{c 1.01}
\end{align} Let $\delta_2 = \delta - \delta_1$. Then $\delta_2\in[\frac{29}{60}\delta,\frac{31}{60}\delta]$.
Define events
\begin{align}
{K} & = \left\{\N{s}{t} \ge n\right\} \label{c 1.02}\\
{L} & = \left\{ \sum_{i=N_s+1}^{N_s+n}\U{i} \ge (1-\delta_2)ng \right\}. \label{c 1.03}
\end{align}
Note that under event ${K \cap L}$, we have
\begin{align}
    \X{s}{t} & = \sum_{i=N_s+1}^{N_t}\U{i} \label{c 1.04} \\
    & \ge  \sum_{i=N_s+1}^{N_s+n}\U{i} \label{c 1.05}\\
    & \ge (1-\delta_2)ng \label{c 1.06} \\
    & = (1-\delta_2)(1-\delta_1)(t-s)g\alpha \label{c 1.07}\\
    & > (1-\delta)(t-s)g\alpha \label{c 1.08}
\end{align}
where \eqref{c 1.04} is due to \eqref{equ: XU}, \eqref{c 1.05} is due to \eqref{c 1.02}, \eqref{c 1.06} is due to \eqref{c 1.03}, \eqref{c 1.07} is due to \eqref{c 1.01}, and \eqref{c 1.08} is due to $\delta_2 = \delta - \delta_1$.
By \eqref{c 1.08} we have
\begin{align}
    {K\cap L} \subset \Eb{s}{t}. \label{equ: c subset KE}
\end{align}
Note that
\begin{align}
    P\left({K^c}\right) & = P\left(\N{s}{t} < n \right) \\
    & =  P\left(\N{s}{t} < (1-\delta_1)(t-s)\alpha \right)\\
    & < e^{-\frac{1}{2}(t-s)\delta_1^2\alpha} \label{c 1.081} \\
    & \le e^{-\frac{1}{12}(t-s)\delta^2\alpha}  \label{c 1.0810}
\end{align}
where \eqref{c 1.081} is due to Proposition \ref{c prop: Chernoff bound} and \eqref{c 1.0810} is due to $\delta_1\ge \frac{29}{60}\delta$. Also,
\begin{align}
P\left({L^c}\right) & = P\left(  \sum_{i=N_s+1}^{N_s+n}\U{i} < (1-\delta_2)ng \right) \\
    & <e^{-\frac{1}{2}\delta_2^2 ng }\label{c 1.082} \\
    & =e^{-\frac{1}{2}(1-\delta_1)(t-s)\delta_2^2g\alpha}\label{c 1.083}\\
    & < e^{-\frac{1}{12}(t-s)\delta^2g^2\alpha} \label{c 1.085}
\end{align}
where \eqref{c 1.082} is due to Proposition \ref{c prop: Chernoff bound}, \eqref{c 1.083} is due to \eqref{c 1.01}, and \eqref{c 1.085} is due to $\delta<\frac{1}{2}$, $1-\delta_1 \ge 1-\frac{31}{60}\delta > \frac{89}{120}$, and $\delta_2 \ge \frac{29}{60}\delta$. Thus,
\begin{align}
    P\left((\Eb{s}{t})^c\right) &  \le P(({K \cap L})^c) \label{c 1.087}\\
    & \le P({K^c}) + P({L^c}) \\
    & < 2e^{-\frac{1}{12}(t-s)\delta^2g\alpha} \label{c 1.086}
\end{align}
where \eqref{c 1.087} is due to \eqref{equ: c subset KE} and \eqref{c 1.086} is due to \eqref{c 1.0810} and \eqref{c 1.085}.
\end{proof}

\begin{lemma} \label{lemma: Vi bernolli}
Random variables $\V{1},\V{3},\ldots$ are independent Bernoulli random variables with
\begin{align}
P(\V{i} = 1) = g^2, \; i = 1,3,\ldots \label{c 0.92}
\end{align}
Random variables  $\V{2},\V{4},\ldots$ are independent Bernoulli random variables with
\begin{align}
    P(\V{i}=1) = g^2, \; i = 2, 4, \ldots
\end{align}
\end{lemma}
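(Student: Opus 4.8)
The plan is to reduce each loner indicator to a simple statement about the inter-arrival times of the honest Poisson process, exactly as was done for the laggers in Lemma~\ref{lemma: Ui bernoulli}. Let $\tau_0 = 0 < \tau_1 < \tau_2 < \cdots$ denote the mining times of the (genesis and subsequent) honest blocks, and let $W_i = \tau_i - \tau_{i-1}$ be the $i$-th honest inter-arrival time. Since $(N_t, t \ge 0)$ is a homogeneous Poisson process of rate $\alpha$, the $W_i$ are i.i.d. exponential random variables with parameter $\alpha$, so $P(W_i > \Delta) = e^{-\alpha\Delta} = g$. The nearest honest block before the $i$-th honest block is the $(i-1)$-th and the nearest after is the $(i+1)$-th; hence block $i$ is the only honest block in $[\tau_i - \Delta, \tau_i]$ iff $W_i > \Delta$, and the only honest block in $[\tau_i, \tau_i + \Delta]$ iff $W_{i+1} > \Delta$. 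Therefore block $i$ is a loner if and only if $W_i > \Delta$ and $W_{i+1} > \Delta$; equivalently, $\V{i}$ is a function of the pair $(W_i, W_{i+1})$ alone.

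First I would establish the marginal law. By independence of $W_i$ and $W_{i+1}$,
\begin{align}
P(\V{i} = 1) = P(W_i > \Delta, W_{i+1} > \Delta) = P(W_i > \Delta)\,P(W_{i+1} > \Delta) = g^2,
\end{align}
so each $\V{i}$ is Bernoulli with parameter $g^2$. Next I would establish independence within each parity class. The odd-indexed indicators $\V{1}, \V{3}, \V{5}, \ldots$ are measurable functions of the pairwise-disjoint blocks of inter-arrival times $(W_1, W_2), (W_3, W_4), (W_5, W_6), \ldots$, and since the whole family $(W_i)$ is mutually independent, functions of disjoint sub-collections are mutually independent; hence $\V{1}, \V{3}, \ldots$ are independent. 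The identical argument applied to the disjoint blocks $(W_2, W_3), (W_4, W_5), \ldots$ shows that $\V{2}, \V{4}, \ldots$ are independent.

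The step I expect to be the main obstacle---and really the conceptual point of the lemma---is the dependence structure: consecutive loner indicators $\V{i}$ and $\V{i+1}$ both depend on the shared inter-arrival $W_{i+1}$, so the full sequence $\V{1}, \V{2}, \V{3}, \ldots$ is \emph{not} independent. This is precisely why the statement splits into the odd and even subsequences, each of which consumes the $W_i$ in disjoint consecutive pairs and thereby recovers mutual independence. Two minor points deserve the same care as in Lemma~\ref{lemma: Ui bernoulli}: the first inter-arrival $W_1 = \tau_1$ is measured from the genesis at time $0$ and is itself exponential with parameter $\alpha$, so the $i = 1$ boundary case is covered; and simultaneous honest arrivals (ties) occur with probability zero in continuous time, so they do not affect any of the events above.
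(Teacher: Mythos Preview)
Your proof is correct and is essentially the same as the paper's. The only cosmetic difference is that the paper compresses the argument into one line by writing $\V{i} = \U{i}\U{i+1}$ and invoking the already-proved Lemma~\ref{lemma: Ui bernoulli} (that the $\U{i}$ are i.i.d.\ Bernoulli($g$)), whereas you unpack this directly in terms of the inter-arrival times $W_i$; since $\U{i} = \mathbbm{1}\{W_i > \Delta\}$, the two formulations are identical.
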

\begin{proof}
Lemma \ref{lemma: Vi bernolli} follows from the fact that $\V{i} = \U{i}\U{i+1}$ and $\U{i}$s are independent of each other.
\end{proof}

\begin{lemma}\label{lemma: c prob Ec}
For all $0<\delta<\frac{1}{2}$ and $0\le s < t-\frac{80}{\delta}$,
\begin{align}
  P\left((\Ec{s}{t})^c\right) < 4e^{-\frac{1}{24}(t-s)\delta^2g^2\alpha}.
\end{align}
\end{lemma}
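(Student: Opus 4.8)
The plan is to follow the proof of Lemma~\ref{lemma: c prob Eb} essentially verbatim at the structural level, replacing laggers by loners and the discount factor $g$ by $g^2$. The one genuinely new ingredient is that, unlike the lagger indicators $\U{i}$, the loner indicators $\V{i}=\U{i}\U{i+1}$ are \emph{not} mutually independent (adjacent indicators share the common factor $\U{i+1}$), so Proposition~\ref{c prop: Chernoff bound} cannot be applied to $\sum_i\V{i}$ directly. To begin, I would invoke Lemma~\ref{lemma: c exist int} to choose $\delta_1\in[\frac{29}{60}\delta,\frac{31}{60}\delta]$ with $n:=(1-\delta_1)(t-s)\alpha$ an integer, set $\delta_2=\delta-\delta_1\in[\frac{29}{60}\delta,\frac{31}{60}\delta]$, and define the events
\begin{align}
K &= \left\{\N{s}{t}\ge n\right\},\\
L &= \left\{\sum_{i=N_s+1}^{N_s+n}\V{i}\ge (1-\delta_2)ng^2\right\}.
\end{align}
Exactly as in \eqref{c 1.04}--\eqref{c 1.08}, on $K\cap L$ one has $\Y{s}{t}\ge\sum_{i=N_s+1}^{N_s+n}\V{i}\ge(1-\delta_2)(1-\delta_1)(t-s)g^2\alpha>(1-\delta)(t-s)g^2\alpha$, so that $K\cap L\subseteq\Ec{s}{t}$ and it suffices to bound $P(K^c)$ and $P(L^c)$.

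The bound on $P(K^c)$ would be identical to \eqref{c 1.081}--\eqref{c 1.0810}: by Proposition~\ref{c prop: Chernoff bound} and $\delta_1\ge\frac{29}{60}\delta$, $P(K^c)<e^{-\frac{1}{12}(t-s)\delta^2\alpha}<e^{-\frac{1}{24}(t-s)\delta^2 g^2\alpha}$, the last step because $\tfrac{1}{12}>\tfrac{1}{24}g^2$. For $P(L^c)$ I would split the sum by the parity of the index, writing $\sum_{i=N_s+1}^{N_s+n}\V{i}=S_{\mathrm{odd}}+S_{\mathrm{even}}$, where $S_{\mathrm{odd}}$ collects the odd-indexed terms and $S_{\mathrm{even}}$ the even-indexed ones. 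By Lemma~\ref{lemma: Vi bernolli} each is a sum of independent Bernoulli$(g^2)$ variables, with $n_1$ and $n_2$ terms respectively, where $n_1+n_2=n$ and $\min(n_1,n_2)=\lfloor n/2\rfloor$. Since $\{S_{\mathrm{odd}}\ge(1-\delta_2)n_1 g^2\}\cap\{S_{\mathrm{even}}\ge(1-\delta_2)n_2 g^2\}\subseteq L$, a union bound together with Proposition~\ref{c prop: Chernoff bound} applied to each half yields $P(L^c)<2e^{-\frac{1}{2}\delta_2^2\lfloor n/2\rfloor g^2}$.

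It then remains to check the exponent inequality $\tfrac12\delta_2^2\lfloor n/2\rfloor\ge\tfrac{1}{24}(t-s)\delta^2\alpha$, which makes each of the last two terms at most $e^{-\frac{1}{24}(t-s)\delta^2 g^2\alpha}$. Using $\delta_2\ge\frac{29}{60}\delta$, $1-\delta_1\ge\frac{89}{120}$ (from $\delta<\frac12$), and $n=(1-\delta_1)(t-s)\alpha$, the standing assumptions $t-s>\frac{80}{\delta}$ and $\alpha>\frac12$ force $n\ge 60$, so $\lfloor n/2\rfloor\ge(1-\tfrac1n)\tfrac n2$ and the small floor loss is absorbed by the slack in the constants (numerically $\tfrac14(\tfrac{29}{60})^2\tfrac{89}{120}\approx0.0433>\tfrac{1}{24}\approx0.0417$). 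Combining the three pieces gives $P((\Ec{s}{t})^c)\le P(K^c)+P(L^c)<3e^{-\frac{1}{24}(t-s)\delta^2 g^2\alpha}<4e^{-\frac{1}{24}(t-s)\delta^2 g^2\alpha}$, as claimed. I expect the main obstacle to be precisely this loss of independence among the $\V{i}$: the parity decomposition recovers two genuinely independent sub-sums at the cost of halving the effective sample size, and the delicate point is that the constants $\frac{29}{60},\frac{31}{60}$ and the threshold $\frac{80}{\delta}$ are exactly tight enough that $\tfrac12\delta_2^2\lfloor n/2\rfloor$ still dominates $\tfrac{1}{24}(t-s)\delta^2\alpha$ after both the halving and the floor correction; the random summation range $[N_s+1,N_s+n]$ is handled exactly as in Lemma~\ref{lemma: c prob Eb}.
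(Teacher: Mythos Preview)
Your proposal is correct and follows essentially the same approach as the paper: both proofs invoke Lemma~\ref{lemma: c exist int} to choose $n$, split the loner indicators $\V{i}$ into the two parity classes so that Lemma~\ref{lemma: Vi bernolli} furnishes genuine independence within each class, and then apply Proposition~\ref{c prop: Chernoff bound} to each half. The only cosmetic difference is bookkeeping for the floor term: the paper pulls the extra factor out as $e^{\frac14\delta_2^2 g^2}<\tfrac32$ (yielding $1+\tfrac32+\tfrac32=4$), whereas you absorb it via $\lfloor n/2\rfloor\ge\tfrac{n}{2}(1-\tfrac1n)$ together with the numerical slack $\tfrac14(\tfrac{29}{60})^2\tfrac{89}{120}>\tfrac{1}{24}$, arriving at the slightly sharper constant $3$ before relaxing to $4$.
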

\begin{proof}
According to Lemma \ref{lemma: c exist int}, there exists a $\delta_1 \in [\frac{29}{60}\delta, \frac{31}{60}\delta]$ such that $(1-\delta_1)(t-s)\alpha$ is an integer. Denote \begin{align}
    n = (1-\delta_1)(t-s)\alpha. \label{c 2.01}
\end{align} Let $\delta_2 = \delta - \delta_1$. Then $\delta_2\in[\frac{29}{60}\delta,\frac{31}{60}\delta]$. Suppose there are $n_1$ even numbers and $n_2$ odd numbers in $\{1,\ldots,n\}$. Obviously $n_1+n_2=n$.
Define events
\begin{align}
K & = \left\{\N{s}{t} \ge n\right\} \label{c 2.02}\\
L & = \left\{
\V{N_s+2} + \V{N_s+4} + \ldots + \V{N_s+2n_1} \ge (1-\delta_2)g^2n_1
\right\} \label{c 2.03} \\
S & = \left\{
\V{N_s+1} + \V{N_s+3} + \ldots + \V{N_s+2n_2-1} \ge (1-\delta_2)g^2n_2
\right\}. \label{c 2.035}
\end{align}
Suppose event ${K \cap L \cap S}$ occurs, we have
\begin{align}
    \Y{s}{t} & = \sum_{i=N_s+1}^{N_t}\V{i} \label{c 2.04} \\
    &\ge  \sum_{i=N_s+1}^{N_s+n}\V{i} \label{c 2.05}\\
    & = (\V{N_s+2} + \V{N_s+4} + \ldots + \V{N_s+2n_1}) + (\V{N_s+1} + \V{N_s+3} + \ldots + \V{N_s+2n_2-1})\\
    & \ge  (1-\delta_2)g^2n_1 + (1-\delta_2)g^2n_2 \label{c 2.06}\\
    & = (1-\delta_2)g^2 n \\
    & = (1-\delta_2)(1-\delta_1)(t-s)g^2\alpha \label{c 2.07}\\
    & > (1-\delta)(t-s)g^2\alpha.\label{c 2.08}
\end{align}
where \eqref{c 2.04} is due to \eqref{equ: YV}, \eqref{c 2.05} is due to \eqref{c 2.02}, \eqref{c 2.06} is due to \eqref{c 2.03} and \eqref{c 2.035}, \eqref{c 2.07} is due to \eqref{c 2.01}, and \eqref{c 2.08} is due to $\delta_2 = \delta - \delta_1$.
By \eqref{c 2.08} we have
\begin{align}
   {K \cap L \cap S} \subset \Ec{s}{t}. \label{equ: c subset KES}
\end{align}
Note that
\begin{align}
    P\left({K^c}\right) & = P\left(\N{s}{t} < n \right) \\
    & =  P\left(\N{s}{t} < (1-\delta_1)(t-s)\alpha \right)\\
    & < e^{-\frac{1}{2}(t-s)\delta_1^2\alpha} \label{c 2.081} \\
    & \le e^{-\frac{1}{12}(t-s)\delta^2\alpha}  \label{c 2.0810}
\end{align}
where \eqref{c 2.081} is due to Proposition \ref{c prop: Chernoff bound} and \eqref{c 2.0810} is due to $\delta_1\ge \frac{29}{60}\delta$. Also,
\begin{align}
P\left({L^c}\right) & = P\left( \V{N_s+2} + \V{N_s+4} + \ldots + \V{N_s+2n_1} < (1-\delta_2)g^2n_1 \right) \\
    & < e^{-\frac{1}{2}\delta_2^2 n_1 g^2 }\label{c 2.082} \\
    & \le e^{-\frac{1}{4}\delta_2^2 (n-1) g^2} \label{c 20} \\
    & < \frac{3}{2}e^{-\frac{1}{4}\delta_2^2 n g^2} \label{c 20.1}\\
    & = \frac{3}{2}e^{-\frac{1}{4}(1-\delta_1)(t-s)\delta_2^2g^2\alpha}\label{c 2.083}\\
    & < \frac{3}{2}e^{-\frac{1}{24}(t-s)\delta^2{g^2}\alpha} \label{c 2.085}
\end{align}
where \eqref{c 2.082} is due to Proposition \ref{c prop: Chernoff bound}, \eqref{c 20} is due to $n_1\ge \frac{n-1}{2}$, \eqref{c 20.1} is due to $e^{\frac{\delta^2 g}{4}} < \frac{3}{2}$,
\eqref{c 2.083} is due to \eqref{c 2.01}, and \eqref{c 2.085} is due to $\delta<\frac{1}{2}$, $1-\delta_1 \ge 1-\frac{31}{60}\delta > \frac{89}{120}$, and $\delta_2 \ge \frac{29}{60}\delta$.  Similarly, we have
\begin{align}
    P\left({S^c}\right) & <  \frac{3}{2}e^{-\frac{1}{24}(t-s)\delta^2g^2\alpha}. \label{c 20.2}
\end{align}

Thus,
\begin{align}
    P\left((\Ec{s}{t})^c\right) &  \le P(({K \cap L}\cap S)^c) \label{c 2.087}\\
    & \le P({K^c}) + P({L^c}) +  P({S^c})\\
    & < 4e^{-\frac{1}{24}(t-s)\delta^2g^2\alpha} \label{c 2.086}
\end{align}
where \eqref{c 2.087} is due to \eqref{equ: c subset KES} and \eqref{c 2.086} is due to \eqref{c 2.0810}, \eqref{c 2.085} and \eqref{c 20.2}.
\end{proof}


\begin{lemma}\label{lemma: c prob Ed}
For all $0<\delta<\frac{1}{2}$ and $0\le s < t-\frac{80}{\delta}$,
\begin{align}
P\left((\Ed{s}{t})^c\right) < e^{-\frac{1}{6}(t-s)\delta^2g^2\alpha}.
\end{align}
\end{lemma}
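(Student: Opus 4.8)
The plan is to bound the upper tail of $\Z{s}{t}$ by turning the distribution-free domination bound \eqref{equ: c Z bound} into a statement about a genuine Poisson variable, and then apply a Chernoff estimate. Write the threshold in \eqref{c equ: E4} as $c=(t-s)\beta+(t-s)g^2\alpha\delta=\mu+d$, where $\mu=\beta(t-s)$ and $d=(t-s)g^2\alpha\delta$, and let $W$ be Poisson with mean $\mu$. Since \eqref{equ: c Z bound} says exactly that $P(\Z{s}{t}\le a)\ge P(W\le a)$ for every real $a$, $\Z{s}{t}$ is stochastically dominated by $W$, so tails are ordered the other way. The first step is therefore
\begin{align}
    P\left((\Ed{s}{t})^c\right)=P(\Z{s}{t}\ge c)\le P(W\ge c).
\end{align}
I would record here that $c=(1+r)\mu$ with $r=d/\mu=g^2\alpha\delta/\beta$, and that the standing assumption \eqref{equ: alpha beta} forces $\beta<g^2\alpha$, hence $r>\delta$.

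The second step applies a Chernoff bound directly to the Poisson variable $W$ rather than invoking Lemma \ref{lemma: Poisson property} off the shelf. The reason is that $r$ is \emph{not} guaranteed to lie in $(0,1]$: the assumption only caps $\beta$ from above, so $\beta$ may be arbitrarily small and $r$ arbitrarily large, whereas the upper-tail bound of Lemma \ref{lemma: Poisson property} is valid only for its parameter in $(0,1]$ (and is in fact false for large multipliers). Using $\mathbb{E}[e^{\theta W}]=e^{(e^{\theta}-1)\mu}$ with the optimizing choice $\theta=\log(1+r)$ gives
\begin{align}
    P(W\ge c)\le e^{\mu\left(r-(1+r)\log(1+r)\right)}.
\end{align}

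The remaining step, which is the only real obstacle, is the elementary inequality $(1+r)\log(1+r)-r\ge\frac{\delta}{6}r$: it converts the exponent into $-\mu\cdot\frac{\delta}{6}r=-\frac{\delta}{6}d=-\frac16(t-s)\delta^2g^2\alpha$, which is exactly the claimed bound. I would establish it from the standard estimate $(1+r)\log(1+r)-r\ge\frac{r^2}{2(1+r/3)}$ (valid for all $r\ge0$), after which $\frac{r^2}{2(1+r/3)}\ge\frac{\delta}{6}r$ reduces to $r\ge\frac{3\delta}{9-\delta}$; this holds because $r>\delta>\frac{3\delta}{9-\delta}$, and strictness of $r>\delta$ yields the strict inequality demanded in the statement. (One could instead split at $r=1$ and invoke Lemma \ref{lemma: Poisson property} with its parameter equal to $r$ on the sub-range $r\le1$, reducing the claim to $r\ge\delta/2$; but the exact Chernoff computation covers all $r$ uniformly, so I prefer it to avoid an awkward case split.) The conceptual crux throughout is the first move: passing from the adversary's unknown and possibly time-inhomogeneous block-arrival law to the dominating Poisson law via \eqref{equ: c Z bound}, after which every remaining estimate is deterministic and driven by $\beta<g^2\alpha$.
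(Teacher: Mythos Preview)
Your argument is correct and follows essentially the same skeleton as the paper---dominate $\Z{s}{t}$ by a Poisson variable via \eqref{equ: c Z bound} and apply a Chernoff bound---but the way you handle the exponent is genuinely different. You optimize the Chernoff parameter, obtaining the exact Cram\'er exponent $\mu\bigl(r-(1+r)\log(1+r)\bigr)$ with $r=\delta g^2\alpha/\beta$, and then invoke the Bernstein-type inequality $(1+r)\log(1+r)-r\ge r^2/\bigl(2(1+r/3)\bigr)$ together with $r>\delta$ to absorb the unknown $\beta$. The paper instead fixes the sub-optimal choice $u=\log(1+\tfrac{\delta}{2})$ \emph{after} a splitting trick: it writes $\delta g^2\alpha(t-s)u=\tfrac{\delta}{2}g^2\alpha(t-s)u+\tfrac{\delta}{2}g^2\alpha(t-s)u$ and replaces one half by $\tfrac{\delta}{2}\beta(t-s)u$ using $g^2\alpha>\beta$, which groups with the other $\beta$-terms to give the factor $e^u-1-u(1+\delta/2)$; the choice of $u$ then makes this factor negative so it can be dropped, and $\log(1+\delta/2)>\delta/3$ finishes. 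The paper's route keeps every step at the level of one-line elementary inequalities and never needs the Bernstein estimate; your route is cleaner conceptually (optimal Chernoff, one standard tail inequality) and, as you observe, sidesteps the issue that $r$ need not lie in $(0,1]$. Both reach the same exponent $-\tfrac16\delta^2 g^2\alpha(t-s)$.
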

\begin{proof}
Assume $\Z{s}{t}'$ is a Poisson distribution with parameter $\beta(t-s)$. We have
\begin{align}
P\left((\Ed{s}{t})^c\right)& = P\left(\Z{s}{t} \ge \beta(t-s) +  (t-s)g^2\delta \alpha\right)  \\
& \le P\left(\Z{s}{t}' \ge \beta(t-s) +  (t-s)g^2\delta \alpha\right) \label{c 2.-2}\\
& < \mathbb{E}[e^{u(\Z{s}{t}'-\beta(t-s) -  (t-s)g^2\delta\alpha)}] \\
& =  \frac{\mathbb{E}\left[e^{\Z{s}{t}'u}\right]}{e^{\beta(t-s)u + (t-s)g^2\delta\alpha}} \label{c 2.-1} \\
& = \frac{e^{\beta (t-s) (e^u -1)}}{e^{\beta (t-s) u + \delta g^2\alpha (t-s)u}} \label{c he: 2.0}\\
& \le  \frac{e^{\beta (t-s) (e^u -1)}}{e^{\beta (t-s) u +\frac{1}{2}\beta(t-s)u + \frac{1}{2}\delta g^2\alpha (t-s)u}} \label{c he: 2.01}\\
& =  e^{\left(e^u-1-u(1+\frac{\delta}{2})\right)\beta(t-s)-\frac{\delta}{2} g^2 \alpha (t-s)u}, \label{c he: 2.1}
\end{align}
where \eqref{c 2.-2} is due to \eqref{equ: c Z bound},  \eqref{c he: 2.0} is due to the fact that the moment generating function for a Poisson random variable with parameter $\lambda$ is $e^{\lambda (e^u - 1)}$, and \eqref{c he: 2.01} is due to $g^2\alpha > \beta$. Picking $u = \log(1+\frac{\delta}{2})$, we have
\begin{align}
P((\Ed{s}{t})^c) &  <  e^{\left(e^u-1-u(1+\frac{\delta}{2})\right)\beta(t-s)-\frac{\delta}{2} g^2 \alpha (t-s)u}, \\
& < e^{-\frac{\delta}{2}  \log(1+\frac{\delta}{2})g^2 \alpha (t-s)} \label{c 2.2}\\
& < e^{-\frac{1}{6}\delta^2 g^2\alpha (t-s)}\label{c 2.3}
\end{align}
where \eqref{c 2.2} is due to $\frac{\delta}{2} - (1+\frac{\delta}{2})\log(1+\frac{\delta}{2}) < 0$ for all $\delta \in (0,1)$ and \eqref{c 2.3} is due to $\log(1+\frac{\delta}{2}) > \frac{\delta}{3}$ for all $\delta \in (0,1)$.
\end{proof}

For convenience, let
\begin{align} \label{def: c eta}
    \eta = \delta^2g^2\alpha.
\end{align}

\begin{lemma}\label{lemma: c prob of good events}
For all $0<\delta<\frac{1}{2}$ and $0\le s < t-\frac{80}{\delta}$,
\begin{align}
    P\left(\E{s}{t}\right)  >1 - 9e^{-\frac{1}{24}\eta (t-s)}.
\end{align}
\end{lemma}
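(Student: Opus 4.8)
The plan is to bound the complementary event $(\E{s}{t})^c$ by a union bound and then show that each of the four tail probabilities already established in Lemmas~\ref{lemma: c prob Ea}--\ref{lemma: c prob Ed} is dominated by a single exponential of the form $e^{-\frac{1}{24}\eta(t-s)}$. First I would invoke De Morgan's law and subadditivity to write $P((\E{s}{t})^c) \le P((\Ea{s}{t})^c) + P((\Eb{s}{t})^c) + P((\Ec{s}{t})^c) + P((\Ed{s}{t})^c)$. This step is legitimate precisely under the hypothesis $0 \le s < t - \frac{80}{\delta}$, which is exactly the range required by the three lemmas bounding $(\Eb{s}{t})^c$, $(\Ec{s}{t})^c$, and $(\Ed{s}{t})^c$ (the bound on $(\Ea{s}{t})^c$ holds more generally).

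The key observation is that $g = e^{-\alpha\Delta} \in (0,1]$, so that $g^2 \le g \le 1$ and hence $\delta^2\alpha \ge \delta^2 g\alpha \ge \delta^2 g^2\alpha = \eta$. Consequently every exponent appearing in the four bounds is at least $\frac{1}{24}\eta(t-s)$: the quantities $\frac{1}{3}\delta^2\alpha$, $\frac{1}{12}\delta^2 g\alpha$, $\frac{1}{24}\delta^2 g^2\alpha$, and $\frac{1}{6}\delta^2 g^2\alpha$ are all $\ge \frac{1}{24}\eta$. Therefore each of the four exponential tails is upper bounded by $e^{-\frac{1}{24}\eta(t-s)}$, and summing the leading coefficients $2 + 2 + 4 + 1 = 9$ yields $P((\E{s}{t})^c) < 9\,e^{-\frac{1}{24}\eta(t-s)}$, which is equivalent to the claimed lower bound on $P(\E{s}{t})$.

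There is no serious obstacle here, as the argument is a routine union bound once the earlier lemmas are in hand; the only point demanding care is the comparison of the four exponents. The binding term is the bound on $(\Ec{s}{t})^c$, whose exponent $\frac{1}{24}\delta^2 g^2\alpha$ already equals $\frac{1}{24}\eta$ exactly and thus dictates both the factor $\frac{1}{24}$ and the fact that the loners (the $\Ec$ event, discounted by $g^2$) govern the overall rate; every other exponent is strictly larger and so cannot worsen the bound. I would also remark that the restriction $s < t - \frac{80}{\delta}$ cannot be removed, since it is inherited from Lemma~\ref{lemma: c exist int}, which underlies the proofs of the $\Eb$ and $\Ec$ tail estimates.
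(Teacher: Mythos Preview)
Your proposal is correct and follows essentially the same approach as the paper: a union bound over the four complementary events, invoking Lemmas~\ref{lemma: c prob Ea}--\ref{lemma: c prob Ed}, and then observing that each exponent dominates $\frac{1}{24}\eta$ so that the coefficients sum to $9$. Your write-up is in fact more explicit than the paper's about why the four exponents can be absorbed into the common rate $\frac{1}{24}\eta$, but the argument is the same.
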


\begin{proof}
By Lemma \ref{lemma: c prob Ea},  Lemma \ref{lemma: c prob Eb},  Lemma \ref{lemma: c prob Ec}, and  Lemma \ref{lemma: c prob Ed}, we have
\begin{align}
P\left(\E{s}{t}\right) & = 1-P\left((E^{\delta }[s,t])^c\right) \\
& \ge 1- P\left((\Ea{s}{t})^c\right) - P\left((\Eb{s}{t})^c\right) - P\left((\Ec{s}{t})^c\right)-P\left((\Ed{s}{t})^c\right) \\
& > 1-9e^{-\frac{1}{24}\delta^2 g^2 \alpha (t-s)} \label{c 2.4} \\
& = 1-9e^{-\frac{1}{24}\eta(t-s)}.
\end{align}
\end{proof}
Intuitively, during a longer time interval the number of each type of blocks mined is more likely to be close to their expected value, thus the probability of $\E{s}{t}$ is higher.
Lemma \ref{lemma: c prob of good events} proves the probability that $\E{s}{t}$ does not occur vanishes exponentially with $t-s$.

\begin{definition} \label{def: c typical event}
(Typical event) The $\delta$-typical event on interval $(s,t]$ is defined as:
\begin{align}
    \G{s}{t} = \bigcap_{a\in [0,s], b \in [t,\infty)} \E{a}{b}. \label{equ: c def typical}
\end{align}
\end{definition}

\begin{definition}
For $0\le s < t$, define
\begin{align}
    \J{s}{t} =  \bigcap_{k\in \{0, \ldots \lceil s\rceil\}, \ell \in \{\lfloor t \rfloor, \lfloor t \rfloor + 1, \ldots \}}\E{k}{\ell}.
\end{align}
\end{definition}

Evidently, the $J$ event is the intersection of countably many events, whereas the $G$ event is the intersection of uncountably many events. The following relationship is important:

\begin{lemma} \label{lemma: c J in G}
For all real numbers $0\le s < t-\coeff$,
\begin{align}
    \Jth{s}{t} \subset \G{s}{t}.
\end{align}
\end{lemma}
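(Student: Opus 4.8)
The plan is to show that each real-interval good event $\E{a}{b}$ that is intersected in $\G{s}{t}$ is already implied by a pair of \emph{integer}-interval good events (with the smaller typicality factor $\coef$) that are intersected in $\Jth{s}{t}$. So I would fix an outcome in $\Jth{s}{t}$, fix arbitrary $a\in[0,s]$ and $b\in[t,\infty)$, and reduce the claim to proving that $\E{a}{b}$ occurs for this $a,b$. The idea is to sandwich the real interval $(a,b]$ between an \emph{inner} integer interval $(\lceil a\rceil,\lfloor b\rfloor]$ and an \emph{outer} integer interval $(\lfloor a\rfloor,\lceil b\rceil]$. Since $0\le a\le s$ and $b\ge t$, the endpoints satisfy $\lceil a\rceil,\lfloor a\rfloor\in\{0,\ldots,\lceil s\rceil\}$ and $\lfloor b\rfloor,\lceil b\rceil\ge\lfloor t\rfloor$, so both $\Eth{\lceil a\rceil}{\lfloor b\rfloor}$ and $\Eth{\lfloor a\rfloor}{\lceil b\rceil}$ appear in the intersection defining $\Jth{s}{t}$ and therefore occur. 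A preliminary check is that $t-s>\coeff$ forces $\lfloor t\rfloor>\lceil s\rceil$ (as $\coeff>160$), so these integer intervals are nondegenerate and indeed indexed in $\Jth{s}{t}$.

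The engine of the argument is monotonicity of the block counts under interval inclusion: $\N{}{}$, $\X{}{}$, $\Y{}{}$, and $\Z{}{}$ are all nondecreasing as the interval grows. For the lower-bound constituents, namely the left inequality of $\Ea{a}{b}$ together with $\Eb{a}{b}$ and $\Ec{a}{b}$, I would pass to the inner interval, e.g.\ $\N{a}{b}\ge\N{\lceil a\rceil}{\lfloor b\rfloor}>(1-\coef)(\lfloor b\rfloor-\lceil a\rceil)\alpha$ on the event $\Eth{\lceil a\rceil}{\lfloor b\rfloor}$. The inner interval is shorter than $(a,b]$ by at most $2$, so it remains to verify $(1-\coef)\bigl((b-a)-2\bigr)\ge(1-\delta)(b-a)$, which rearranges to $(\delta-\coef)(b-a)\ge 2(1-\coef)$. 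Since $\delta-\frac{19\delta}{20}=\frac{\delta}{20}$ and $b-a>\coeff$, the left side exceeds $4$ while the right side is below $2$, so the inequality holds with room to spare. The factors $g$ and $g^2$ in $\Eb{}{}$ and $\Ec{}{}$ cancel on both sides, so the same slack computation covers all three lower bounds at once.

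Dually, for the upper-bound constituents, namely the right inequality of $\Ea{a}{b}$ and the adversarial bound $\Ed{a}{b}$, I would pass to the outer interval, which is longer by at most $2$; the count bound $(1+\coef)\bigl((b-a)+2\bigr)\le(1+\delta)(b-a)$ follows by the same $\frac{\delta}{20}$-versus-$\coeff$ estimate. I expect the main obstacle to be $\Ed{a}{b}$, whose threshold $(b-a)\beta+(b-a)g^2\alpha\delta$ mixes two terms with different coefficients, so no single factor captures the slack. Writing it as $(b-a)\bigl(\beta+g^2\alpha\coef\bigr)$ on the outer interval, the target inequality $\bigl((b-a)+2\bigr)\bigl(\beta+g^2\alpha\coef\bigr)\le(b-a)\bigl(\beta+g^2\alpha\delta\bigr)$ reduces to $2\beta+2g^2\alpha\coef\le(b-a)g^2\alpha\frac{\delta}{20}$; substituting $b-a>\coeff$ makes it suffice to show $g^2\alpha(2-\coef)\ge\beta$, which I would discharge from the standing assumption \eqref{equ: alpha beta} (it yields $g^2\alpha>\beta$, and $2-\coef>1$). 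Once all four constituents of $\E{a}{b}$ are verified, the arbitrariness of $a\in[0,s]$ and $b\in[t,\infty)$ gives the whole intersection $\G{s}{t}$, establishing $\Jth{s}{t}\subset\G{s}{t}$.
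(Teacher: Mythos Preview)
Your proposal is correct and follows essentially the same route as the paper: sandwich the real interval $(a,b]$ between the integer intervals $(\lceil a\rceil,\lfloor b\rfloor]$ and $(\lfloor a\rfloor,\lceil b\rceil]$, use monotonicity of the block counts to pass to these integer intervals, and then absorb the rounding error (at most $2$ in each direction) into the slack $\delta-\coef=\tfrac{\delta}{20}$ afforded by $b-a>\coeff$. The only cosmetic difference is that the paper packages the slack multiplicatively via the factor $(1\pm\coefff)$ whereas you compare the additive quantity $\tfrac{\delta}{20}(b-a)$ directly to $2$; both amount to the same computation, and your treatment of the $\Ed{a}{b}$ case (invoking $g^2\alpha>\beta$ from \eqref{equ: alpha beta}) mirrors the paper's step~\eqref{c 2.64}.
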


\begin{draft}
\begin{proof}
We show that if $\Jth{s}{t}$ occurs, $\E{a}{b}$ occurs for all $a\in [0, s]$ and $b \in [t, \infty)$.

To prove $\Ea{a}{b}$ occurs, we have
\begin{align}
\N{a}{b} & \ge \N{\lceil a \rceil}{\lfloor b \rfloor} \\
& > (1-\coef) \alpha ( \lfloor b \rfloor-\lceil a \rceil) \label{c 2.20}\\
& >  (1-\coef) \alpha (b-a -2) \\
& = (1-\coef) \alpha (b-a)(1 - \frac{2}{b-a}) \\
& >  (1-\coef)(1-\coefff) \alpha (b-a) \label{c 2.22}\\
& > (1-\delta)(b-a)\alpha
\end{align}
where \eqref{c 2.20} is due to \eqref{c equ: E1} and {\eqref{c 2.22}} is due to $b-a>\coeff$.
Also,
\begin{align}
\N{a}{b} & < \N{\lfloor a\rfloor}{\lceil b \rceil}  \\
& = (1+\coef) \alpha (\lceil b \rceil - \lfloor a\rfloor) \label{c 2.30}\\
& < (1+\coef) \alpha (b-a + 2) \\
& = (1+\coef) \alpha (b-a)(1 + \frac{2}{b-a}) \\
& <  (1+\coef)(1+\coefff) \alpha (b-a)\label{c 2.32}\\
& < (1+\delta)(b-a)\alpha
\end{align}
where \eqref{c 2.30} is due to \eqref{c equ: E1} and {\eqref{c 2.32}} is due to $b-a>\coeff$.

To prove $\Eb{a}{b}$ occurs, we have
\begin{align}
\X{a}{b} & > \X{\lceil a \rceil}{\lfloor b \rfloor}\\
& = (1-\coef) g \alpha (\lfloor b \rfloor-\lceil a \rceil)  \label{c 2.40}\\
& >  (1-\coef) g\alpha (b-a -2) \\
& = (1-\coef) g\alpha (b-a)(1 - \frac{2}{b-a}) \\
& >  (1-\coef)(1-\coefff) g\alpha (b-a) \label{c 2.42}\\
& > (1-\delta)(b-a)g\alpha
\end{align}
where \eqref{c 2.40} is due to \eqref{c equ: E2} and {\eqref{c 2.42}} is due to $b-a>\coeff$.

To prove $\Ec{a}{b}$ occurs, we have
\begin{align}
\Y{a}{b} & > \Y{\lceil a \rceil}{\lfloor b \rfloor}   \\
& = (1-\coef) g^2 \alpha (\lfloor b \rfloor - Y\lceil a \rceil ) \label{c 2.50}\\
& >  (1-\coef) g^2 \alpha (b-a -2) \\
& = (1-\coef) g^2 \alpha (b-a)(1 - \frac{2}{b-a}) \\
& >  (1-\coef)(1-\coefff) g^2 \alpha (b-a)\label{c 2.52}\\
& > (1-\delta)(b-a)g^2\alpha
\end{align}
where \eqref{c 2.50} is due to \eqref{c equ: E3}and {\eqref{c 2.52}} is due to $b-a>\coeff$.

At last, to prove $\Ed{a}{b}$ occurs, we have
\begin{align}
\Z{a}{b} & < \Z{\lfloor a\rfloor}{\lceil b \rceil}\\
& =  \beta (\lceil b \rceil - \lfloor a\rfloor) + \coef  g^2\alpha  (\lceil  b \rceil - \lfloor a\rfloor) \label{c 2.60}  \\
& <  \beta (b-a + 2) + \coef  g^2\alpha  (b-a + 2) \\
& =  \beta (b-a)(1 + \frac{2}{b-a}) + \coef  g^2\alpha(b-a)(1 + \frac{2}{b-a}) \\
& < (1+\coefff)\beta(b-a) + (1+\coefff)\coef  g^2\alpha(b-a) \\
& < \beta(b-a) + \delta g^2\alpha(b-a) \label{c 2.64} \\
& = \mathbb{E}[\Z{a}{b}] + \delta (b-a)g^2\alpha
\end{align}
where \eqref{c 2.60} is due to \eqref{c equ: E4} and \eqref{c 2.64} is due to \eqref{equ: alpha beta}.

To sum up, under {$\Jth{s}{t}$},  events $\Ea{a}{b}$, $\Eb{a}{b}$, $\Ec{a}{b}$, and $\Ed{a}{b}$ occur for all  $a\in[0,s]$ and $b\in[t,\infty)$. Thus $\E{a}{b}$ occurs for all {those $a$ and $b$,  which implies that $\G{s}{t}$ also occurs.}
\end{proof}
\end{draft}

Bounding the probability of the $G$ event by the $J$ event allows us to use the union bound{, which eases} the calculation of the probability of typical events.


For convenience, we introduce the following parameter:
\begin{align} \label{def: c mu}
    \mu = \frac{9e^{\frac{2}{{27}}\eta}}{\left(1-e^{-\frac{1}{{27}}\eta}\right)^2}.
\end{align}

\begin{lemma}\label{lemma: c prob of typical event}
For all real numbers $0\le s < t-\coeff$,
\begin{align}
    P\left(\G{s}{t}\right) > 1-  \mu  e^{-\frac{1}{{27}}\eta(t-s)}.
\end{align}
\end{lemma}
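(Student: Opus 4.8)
The plan is to collapse the uncountable intersection defining $\G{s}{t}$ into a countable one and then apply a union bound. By Lemma~\ref{lemma: c J in G} we have $\Jth{s}{t}\subset\G{s}{t}$, so it suffices to upper bound $P\bigl((\Jth{s}{t})^c\bigr)$. Since $\Jth{s}{t}$ is the intersection of the countably many good events $\Eth{k}{\ell}$ over $k\in\{0,\dots,\lceil s\rceil\}$ and $\ell\in\{\lfloor t\rfloor,\lfloor t\rfloor+1,\dots\}$, the union bound yields
\begin{align}
P\bigl((\Jth{s}{t})^c\bigr)\le \sum_{k=0}^{\lceil s\rceil}\ \sum_{\ell=\lfloor t\rfloor}^{\infty} P\bigl((\Eth{k}{\ell})^c\bigr).
\end{align}

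Next I would bound each term with Lemma~\ref{lemma: c prob of good events}, applied with the typicality parameter $\frac{19\delta}{20}$ in place of $\delta$. This replaces $\eta=\delta^2g^2\alpha$ by $\bigl(\frac{19}{20}\bigr)^2\eta$, giving $P\bigl((\Eth{k}{\ell})^c\bigr)<9\,e^{-\frac{1}{24}(\frac{19}{20})^2\eta(\ell-k)}$. The arithmetic fact $\frac{1}{24}\bigl(\frac{19}{20}\bigr)^2=\frac{361}{9600}\ge\frac{1}{27}$ then lets me weaken every exponent uniformly, producing the clean bound $P\bigl((\Eth{k}{\ell})^c\bigr)<9\,e^{-\frac{1}{27}\eta(\ell-k)}$. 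This is exactly where the constant $\frac{1}{27}$ in the statement originates, and I would flag both the verification of this inequality and the applicability of Lemma~\ref{lemma: c prob of good events} to every pair $(k,\ell)$ (which I would need to confirm under the hypothesis $s<t-\frac{80}{\delta}$, since that lemma requires sufficiently long intervals) as the points deserving care.

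It then remains to evaluate the double geometric sum. Writing $r=e^{-\frac{1}{27}\eta}\in(0,1)$, each term factors as $9\,r^{\ell-k}=9\,r^{-k}r^{\ell}$, so the two sums separate:
\begin{align}
\sum_{k=0}^{\lceil s\rceil} r^{-k}<\frac{r^{-\lceil s\rceil}}{1-r},\qquad
\sum_{\ell=\lfloor t\rfloor}^{\infty} r^{\ell}=\frac{r^{\lfloor t\rfloor}}{1-r}.
\end{align}
Using $\lceil s\rceil<s+1$ and $\lfloor t\rfloor>t-1$ together with the monotonicity of $r^{x}$ (as $r<1$), I would bound $r^{-\lceil s\rceil}<r^{-(s+1)}$ and $r^{\lfloor t\rfloor}<r^{t-1}$, so the product of the two sums is at most $\frac{r^{t-s-2}}{(1-r)^2}$. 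Multiplying by $9$ and rewriting $r^{-2}=e^{\frac{2}{27}\eta}$ produces precisely $\frac{9e^{\frac{2}{27}\eta}}{(1-e^{-\frac{1}{27}\eta})^2}\,e^{-\frac{1}{27}\eta(t-s)}=\mu\,e^{-\frac{1}{27}\eta(t-s)}$, and passing to complements gives the claim.

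The main obstacle is not conceptual: the essential work has already been done by Lemma~\ref{lemma: c J in G}, which turns the uncountable intersection into a countable one on which the union bound can act. The delicate part is rather the bookkeeping of constants—choosing to degrade $\frac{361}{9600}$ down to the common rate $\frac{1}{27}$ so that a single geometric ratio $r$ serves both sums, and tracking the floor/ceiling rounding so that the two stray factors of $r^{-1}$ (one from $\lceil s\rceil<s+1$, one from $\lfloor t\rfloor>t-1$) combine into exactly the $e^{\frac{2}{27}\eta}$ appearing in $\mu$. Making these constants align is what yields the stated $\mu$ rather than merely some constant multiple of it.
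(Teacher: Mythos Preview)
Your proposal is correct and follows essentially the same approach as the paper: reduce to the countable intersection via Lemma~\ref{lemma: c J in G}, apply Lemma~\ref{lemma: c prob of good events} with parameter $\tfrac{19\delta}{20}$, weaken the exponent $\tfrac{361}{9600}$ to $\tfrac{1}{27}$, and sum the resulting double geometric series using $\lceil s\rceil<s+1$ and $\lfloor t\rfloor>t-1$ to absorb the two stray factors into the $e^{\frac{2}{27}\eta}$ in $\mu$. Your flagged concern about the interval-length hypothesis of Lemma~\ref{lemma: c prob of good events} for the smallest pair $(k,\ell)=(\lceil s\rceil,\lfloor t\rfloor)$ is well taken---the paper's own proof applies that lemma without verifying this hypothesis either, so your proof is at least as rigorous as the original.
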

\begin{draft}
\begin{proof}
By Lemma \ref{lemma: c J in G},
\begin{align}
    P\left((\G{s}{t})^c\right) & \le  P\left(({\Jth{s}{t}})^c\right) \\
    & = P\left( \bigcup_{k\in \{0, \ldots, \lceil s\rceil\}, \ell \in \{\lfloor t \rfloor, \lfloor t \rfloor + 1, \ldots \}}\left(\Eth{k}{\ell}\right)^c\right) \\
    & < \sum_{k\in \{0, \ldots, \lceil s\rceil\}, \ell \in \{\lfloor t \rfloor, \lfloor t \rfloor + 1, \ldots \}} 9e^{-\frac{1}{24}\left(\coef\right)^2g^2\alpha(\ell - k)} \\
    & < 9\left( \sum_{k=0}^{\lceil s\rceil} e^{\frac{1}{{27}}\eta k}  \right) \left( \sum_{\ell = \lfloor t \rfloor}^{\infty}e^{-\frac{1}{{27}}\eta \ell}\right)\\
    & = 9\frac{1-e^{\frac{1}{{27}}\eta(\lceil s\rceil + 1)}}{1-e^{\frac{1}{{27}}\eta}}\cdot\frac{e^{-\frac{1}{{27}}\eta \lfloor t \rfloor}}{1-e^{-\frac{1}{{27}}\eta}}\\
    & = 9\frac{\left( e^{\frac{1}{{27}}\eta \ceils} - e^{-\frac{1}{{27}}\eta}\right)e^{-\frac{1}{{27}}\eta \floort}}{\left(1-e^{-\frac{1}{{27}}\eta}\right)^2} \\
    & < 9\frac{\left( e^{\frac{1}{{27}}(s+1)\eta } - e^{-\frac{1}{{27}}\eta}\right)e^{-\frac{1}{{27}}(t-1)\eta }}{\left(1-e^{-\frac{1}{{27}}\eta}\right)^2} \\
    & < \mu e^{-\frac{1}{{27}}\eta(t-s)}.\label{c 2.8}
\end{align}
\end{proof}
\end{draft}

\begin{draft}
\begin{short}
\begin{proof}
Lemma \ref{lemma: c prob of typical event} can be proved based on and Lemma \ref{lemma: c J in G} and Lemma \ref{lemma: c prob of good events}.
\end{proof}
\end{short}
\end{draft}

\begin{lemma} \label{lemma: height concensus}
If a $t$-credible blockchain has height $h$, then the heights of all $(t+\Delta)$-credible blockchains are at least $h$.
\end{lemma}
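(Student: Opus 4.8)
The plan is to unwind the two defining conditions of credibility directly; this is a deterministic structural claim, so no probabilistic estimate is needed. Let $b$ be the given $t$-credible blockchain with $\h{b}=h$, and let $b'$ be an arbitrary $(t+\Delta)$-credible blockchain. First I would record, from the first clause of Definition~\ref{def: c longest blockchain} applied to $b$, that $P_b \le t$, i.e.\ $b$ is published by time $t$.

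Next I would invoke the second clause of credibility, this time for $b'$ at credibility time $t+\Delta$. That clause asserts $\h{b'} \ge \h{k}$ for every blockchain $k$ with $P_k \le (t+\Delta)-\Delta = t$. The crucial observation is that the $\Delta$ lookback built into the definition exactly cancels the $+\Delta$ in the credibility time, so the set of blockchains against which $b'$ must be no shorter is precisely the set of those published by time $t$. Since $P_b \le t$, the chain $b$ lies in this set, so taking $k=b$ yields $\h{b'} \ge \h{b} = h$. As $b'$ was an arbitrary $(t+\Delta)$-credible blockchain, every such blockchain has height at least $h$, which is the claim.

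The only real point requiring care is the bookkeeping of the time offsets: one must notice that a blockchain published by time $t$ qualifies as a competitor in the height comparison defining $(t+\Delta)$-credibility, because that comparison reaches back $\Delta$ units from the credibility time $t+\Delta$. Once this alignment is seen, the lemma is immediate and there is no further obstacle; in particular, no use of the mining processes, the good/typical events, or inequality~\eqref{equ: c Z bound} is required.
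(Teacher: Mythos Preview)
Your proposal is correct and is essentially the same approach as the paper's: the paper simply states that the lemma is obvious from Definition~\ref{def: c longest blockchain}, and you have written out precisely the two-line unwinding of that definition. There is nothing to add.
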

\begin{proof}
Lemma \ref{lemma: height concensus} is obvious by the Definition \ref{def: c longest blockchain}.
\end{proof}

\begin{lemma} \label{lemma: c lagger diff height}
(Lemma 4 in \cite{Ren2019Analysis}) Laggers have different heights.
\end{lemma}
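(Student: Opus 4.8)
The plan is to prove the slightly stronger claim that among any two distinct laggers, the one mined later has strictly greater height; this immediately yields that distinct laggers have distinct heights. So I would fix two laggers $j \neq k$. First I would observe that they cannot be mined simultaneously: if $T_j = T_k$, then both $j$ and $k$ would be honest blocks mined during $[T_k-\Delta, T_k]$, contradicting Definition \ref{def: lagger}, which requires $k$ to be the \emph{only} honest block in that interval. Hence I may assume without loss of generality that $T_j < T_k$, so in particular $k$ is non-genesis.

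Next I would use the lagger property of $k$ to place $j$ strictly before $T_k - \Delta$. Since $k$ is the only honest block mined during $[T_k-\Delta, T_k]$, and $j$ is an honest block with $j \neq k$ and $T_j < T_k$, block $j$ cannot lie in $[T_k-\Delta, T_k]$, so $T_j < T_k - \Delta$. Because $j$ is honest it is published as soon as it is mined, i.e.\ $P_j = T_j$, and therefore $P_j < T_k - \Delta$.

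The crux of the argument is to compare heights through the \emph{parent} blockchain $f_k$ rather than through blockchain $k$ itself. By Lemma \ref{lemma: honest block append to honest blockchain}, since $k$ is honest, blockchain $f_k$ is $T_k$-credible. By the credibility condition \eqref{equ: c height of longest blockchain} in Definition \ref{def: c longest blockchain}, a $T_k$-credible blockchain is at least as tall as every blockchain published by time $T_k-\Delta$; since $P_j < T_k-\Delta$ (hence $P_j \le T_k-\Delta$), this gives $h(f_k) \ge h(j)$. Finally, blockchain $k$ extends blockchain $f_k$ by the single non-genesis block $k$, so $h(k) = h(f_k)+1$, and combining with the previous inequality yields $h(k) \ge h(j)+1 > h(j)$, as desired.

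The main obstacle, and the reason a naive attempt fails, is obtaining a \emph{strict} height separation. Applying the credibility of blockchain $k$ directly only delivers $h(k) \ge h(j)$, which does not rule out equal heights and hence does not prove the claim. The resolution is to invoke the credibility of the parent blockchain $f_k$, whose height is exactly one less than $h(k)$: since $f_k$ is itself $T_k$-credible and dominates every chain published by $T_k-\Delta$, the quantity $h(k)-1 = h(f_k)$ already matches or exceeds $h(j)$, forcing $h(k) > h(j)$.
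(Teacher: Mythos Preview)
Your proof is correct and follows essentially the same approach as the paper's: both arguments separate the mining times of the two laggers by more than $\Delta$ using the lagger definition, then use the $T_k$-credibility of the later lagger's parent blockchain (the paper routes this through Lemma~\ref{lemma: height concensus}, you go directly to Definition~\ref{def: c longest blockchain}) to obtain the strict height inequality. The paper phrases it as a proof by contradiction whereas you prove the monotonicity statement $h(k)>h(j)$ directly, but the mathematical content is identical.
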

\begin{draft}
\begin{proof}
Suppose two laggers block $b$ and block $d$ with $T_d\ge T_b$ have the same height $k$.
Because block $d$ is a lagger, we must have {$T_d > T_b + \Delta$}.
According to Lemma \ref{lemma: honest block append to honest blockchain}, blockchain $b$ is {$T_b$}-credible. According to Lemma \ref{lemma: height concensus}, {the heights of} all {$(T_b+\Delta)$}-credible blocks are at least $k$. So the height of block $d$ is at least $k+1$, which contradicts the assumption.
\end{proof}
\end{draft}

\begin{lemma} \label{lemma: c unique kth block}
(Lemma 4 in \cite{Ren2019Analysis}) A loner is the only honest block at its height.
\end{lemma}
\begin{draft}

\begin{proof}
Suppose block $b$ mined at time $t$ is a loner.
By definition of a loner, no other honest block is mined during $[t-\Delta,t+\Delta]$.  Since blockchain $b$ is $t$-credible, the heights of all honest blocks mined after $t+\Delta$ must be at least ${\h{b}+1}$.
If an honest block is mined before $t-\Delta$, its height must be smaller than $\h{b}$ (if {its} height were $\h{b}$ or higher, block $b$'s height would be at least $\h{b}+1$).
\end{proof}
\end{draft}

\begin{definition}\label{def: c k deep}
($k$-deep block, $k$-deep prefix) Suppose {$k\in \{1,\ldots, n\}$}. By the $k$-deep block of blockchain $(\bl{0},\bl{1}, \ldots, \bl{n})$ we mean block {$\bl{n-k+1}$}. By the $k$-deep prefix of blockchain $(\bl{0},\bl{1}, \ldots, \bl{n})$ we mean blockchain $\bl{n-k}$.
\end{definition}
{By Definition \ref{def: c k deep}, a $k$-deep block extends a $k$-deep prefix }{of the same blockchain.}

\begin{lemma} \label{lemma: c growth}
Suppose positive integer $k$ and real number $t$ satisfy $t\ge \frac{k}{2\alpha}$. Under event $\E{t-\frac{k}{2\alpha}}{t}$, the   $(k-1)$-deep prefix of every blockchain mined by time $t$ must be mined no later than time $t-\frac{k}{2\alpha}$.
\end{lemma}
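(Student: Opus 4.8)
The plan is to argue by contradiction and reduce the height-growth claim to a counting bound on the total number of blocks mined during the window $(s,t]$, where I set $s=t-\frac{k}{2\alpha}$ (so $s\ge 0$ by hypothesis). Write the blockchain as $(\bl{0},\ldots,\bl{n})$; since it is mined by time $t$ we have $T_{\bl{n}}\le t$. We may assume $n\ge k-1$ so that its $(k-1)$-deep prefix, namely blockchain $\bl{n-k+1}$, is well defined; the degenerate cases ($n<k-1$, or $\bl{n-k+1}$ being the genesis block) are trivial because then the prefix is mined at time $0\le s$. Suppose, for contradiction, that the $(k-1)$-deep prefix is mined \emph{after} time $s$, i.e. $T_{\bl{n-k+1}}>s$.

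First I would use the fact that consecutive blocks along a chain are mined in strictly increasing time, since a block's parent must already exist when the block is mined. Thus $T_{\bl{n-k+1}}<T_{\bl{n-k+2}}<\cdots<T_{\bl{n}}\le t$, and combined with $T_{\bl{n-k+1}}>s$ this places the $k$ distinct (necessarily non-genesis) blocks $\bl{n-k+1},\ldots,\bl{n}$ entirely inside $(s,t]$. Counting non-genesis blocks then gives $M_t-M_s\ge k$, and since every block is either honest or adversarial, $\N{s}{t}+\Z{s}{t}=M_t-M_s\ge k$.

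Next I would contradict this lower bound using the good event $\E{s}{t}$ with $t-s=\frac{k}{2\alpha}$. The component \eqref{c equ: E1} gives $\N{s}{t}<(1+\delta)(t-s)\alpha=(1+\delta)\frac{k}{2}$, and the component \eqref{c equ: E4} gives $\Z{s}{t}<(t-s)\beta+(t-s)g^2\alpha\delta=\frac{k\beta}{2\alpha}+\frac{kg^2\delta}{2}$. Adding these, $\N{s}{t}+\Z{s}{t}<\frac{k}{2}\bigl(1+\delta+\frac{\beta}{\alpha}+g^2\delta\bigr)$, so it remains to verify that this is at most $k$, i.e. that $\frac{\beta}{\alpha}<1-\delta(1+g^2)$. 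By the standing assumption \eqref{equ: alpha beta} we have $\frac{\beta}{\alpha}<(1-\frac{81}{40}\delta)g^2$, and the desired inequality reduces to $g^2(1-\frac{41}{40}\delta)\le 1-\delta$; this holds because $1-\frac{41}{40}\delta>0$ (as $\delta<\frac{40}{81}$) and $g=e^{-\alpha\Delta}\le 1$, whence $g^2(1-\frac{41}{40}\delta)\le 1-\frac{41}{40}\delta\le 1-\delta$. This yields $\N{s}{t}+\Z{s}{t}<k$, contradicting $\N{s}{t}+\Z{s}{t}\ge k$, and the lemma follows.

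The step I expect to be the main obstacle is the counting argument, not the closing arithmetic: one must pin down that $T_{\bl{n-k+1}}>s$ forces exactly the $k$ on-chain blocks $\bl{n-k+1},\ldots,\bl{n}$ into $(s,t]$, being careful with the off-by-one in the deep-prefix indexing (the $(k-1)$-deep prefix is blockchain $\bl{n-k+1}$, so that the on-chain blocks $\bl{n-k+1},\ldots,\bl{n}$ number exactly $k$) and with the degenerate short-chain and genesis cases. Once $\N{s}{t}+\Z{s}{t}\ge k$ is secured, the reduction to the parameter condition \eqref{equ: alpha beta} is routine, the only nontrivial observation being the propagation-discount bound $g\le 1$.
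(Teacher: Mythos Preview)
Your proposal is correct and follows essentially the same approach as the paper: both bound $\N{s}{t}+\Z{s}{t}<k$ on the window $(s,t]=(t-\frac{k}{2\alpha},t]$ using \eqref{c equ: E1}, \eqref{c equ: E4}, the assumption \eqref{equ: alpha beta}, and $g\le 1$, then conclude that at most $k-1$ blocks can lie in the window so the $(k-1)$-deep prefix must have been mined by $s$. The paper does this as a direct count; you phrase it as a contradiction and are slightly more explicit about the chain-ordering and degenerate cases, but the argument and arithmetic are the same.
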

\begin{proof}
Under $\E{t-\frac{k}{2\alpha}}{t}$,
the total number of blocks mined by all miners during $(t-\frac{k}{2\alpha}, t]$ is upper bounded by
\begin{align}
    \N{t-\frac{k}{2\alpha}}{t}+\Z{t-\frac{k}{2\alpha}}{t}
    & < (1+\delta)\alpha \frac{k}{2\alpha} + \beta \frac{k}{2\alpha} + \delta g^2\alpha \frac{k}{2\alpha} \label{c 3.0}  \\
    & < (1+\delta)\alpha \frac{k}{2\alpha} + \left(1-\frac{81}{40}\delta\right)g^2\alpha \frac{k}{2\alpha} + \delta g^2\alpha \frac{k}{2\alpha} \label{c 3.005} \\
    & < (1+\delta)\alpha \frac{k}{2\alpha} + \left(1-\frac{41}{40}\delta\right)\alpha \frac{k}{2\alpha} \label{c 3.01}\\
    & < k \label{c 3.02}
\end{align}
where \eqref{c 3.0} is due to \eqref{c equ: E1} and \eqref{c equ: E3}, \eqref{c 3.005} is due to \eqref{equ: alpha beta}, and \eqref{c 3.01} is due to $g<1$. So the number of block mined during $(t-\frac{k}{2\alpha}, t]$ is at most $k-1$. Therefore,  the  $(k-1)$-deep prefix of every blockchain mined by time $t$ must be mined no later than time $t-\frac{k}{2\alpha}$.
\end{proof}

\begin{theorem} \label{thm: c growth}
(Blockchain growth theorem) Suppose real numbers $s$ and $t$ satisfy $0\le s < t-\coeft$. Under event $\E{s+\Delta}{t-\Delta}$,
the height of every $t$-credible blockchain is at least  $\growth g\alpha (t-s)$ larger than the maximum height of all $s$-credible blockchains. As a consequence, the probability that some $t$-credible blockchain is less than $\growth g\alpha(t-s)$ {higher} than some $s$-credible blockchain does not exceed $9e^{-\frac{1}{24}\eta(t-s)}$.
\end{theorem}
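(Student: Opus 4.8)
The plan is to show that the blockchain height gains at least one unit for every lagger mined in the interior interval $(s+\Delta, t-\Delta]$, and that the number of such laggers is guaranteed to be large under $\E{s+\Delta}{t-\Delta}$. Write $H$ for the maximum height among all $s$-credible blockchains. It then suffices to prove that, under $\E{s+\Delta}{t-\Delta}$, every $t$-credible blockchain has height at least $H + \X{s+\Delta}{t-\Delta}$, and afterwards to lower-bound the lagger count $\X{s+\Delta}{t-\Delta}$ by $\growth g\alpha(t-s)$ using the $\Eb{s+\Delta}{t-\Delta}$ component of the good event.

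First I would argue that every lagger $k$ mined during $(s+\Delta,t-\Delta]$ satisfies $\h{k}\ge H+1$. Since $k$ is honest, Lemma~\ref{lemma: honest block append to honest blockchain} shows blockchain $\f{k}$ is $T_k$-credible. Any $s$-credible blockchain $b$ is published by time $s$, and because $T_k>s+\Delta$ we have $P_b\le s\le T_k-\Delta$; hence the $T_k$-credibility of blockchain $\f{k}$ forces $\h{\f{k}}\ge \h{b}$. Maximizing over $b$ gives $\h{\f{k}}\ge H$, so $\h{k}=\h{\f{k}}+1\ge H+1$. Next, by Lemma~\ref{lemma: c lagger diff height} the laggers have pairwise distinct heights, so the $\X{s+\Delta}{t-\Delta}$ laggers in this interval occupy that many distinct heights, all at least $H+1$; in particular the tallest such lagger $m$ has $\h{m}\ge H+\X{s+\Delta}{t-\Delta}$. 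Finally, $m$ is honest and hence published at $P_m=T_m\le t-\Delta$, so Definition~\ref{def: c longest blockchain} yields that every $t$-credible blockchain has height at least $\h{m}\ge H+\X{s+\Delta}{t-\Delta}$, establishing the reduction.

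It then remains to verify the arithmetic and the probability statement. Under $\Eb{s+\Delta}{t-\Delta}$ we have $\X{s+\Delta}{t-\Delta}>(1-\delta)(t-s-2\Delta)g\alpha$, and the hypothesis $t-s>\coeft$ gives $\frac{\delta}{40}(t-s)>2\Delta$, which is exactly the slack needed to absorb the two-sided interval shrinkage and conclude $(1-\delta)(t-s-2\Delta)g\alpha\ge\growth g\alpha(t-s)$; this proves the deterministic claim. For the consequence, I would observe that the event ``some $t$-credible blockchain fails to exceed some $s$-credible blockchain by $\growth g\alpha(t-s)$'' is contained in $(\E{s+\Delta}{t-\Delta})^c$, whose probability I would bound through Lemma~\ref{lemma: c prob of good events} applied to the interval $(s+\Delta,t-\Delta]$.

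\textbf{The main obstacle} is the deterministic height-gain argument: correctly pinning down why each lagger lies strictly above every $s$-credible blockchain (the parent-credibility step, which is precisely what the $+\Delta$ left buffer buys) and why the tallest lagger in turn dominates every $t$-credible blockchain (what the $-\Delta$ right buffer buys). The remaining work is bookkeeping — confirming that the $\coeft$ separation condition simultaneously makes Lemma~\ref{lemma: c prob of good events} applicable to the shortened interval $(s+\Delta,t-\Delta]$ and leaves enough margin in $\delta$ to convert the raw count $(1-\delta)(t-s-2\Delta)g\alpha$ into the clean growth bound $\growth g\alpha(t-s)$.
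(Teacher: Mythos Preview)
Your proposal is correct and follows essentially the same route as the paper: lower-bound the lagger count on $(s+\Delta,t-\Delta]$ via $\Eb{s+\Delta}{t-\Delta}$, use Lemma~\ref{lemma: c lagger diff height} to get distinct heights above the $s$-credible maximum, and let the tallest lagger (published by $t-\Delta$) force the height of every $t$-credible blockchain. Your parent-credibility argument for why each lagger exceeds $H$ is a slightly more explicit version of the paper's appeal to Lemma~\ref{lemma: height concensus}, and your arithmetic reduction $(1-\delta)(t-s-2\Delta)>(1-\tfrac{41}{40}\delta)(t-s)$ via $\tfrac{\delta}{40}(t-s)>2\Delta$ matches the paper's chain of inequalities.
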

\begin{proof}
\begin{draft}
Assume {the maximum height of all} $s$-credible blockchains {is} $\ell$. According to Lemma \ref{lemma: height concensus}, the heights of all $(s+\Delta)$-credible blockchains are at least $\ell$.

Under event $\E{s+\Delta}{t-\Delta}$, during time interval $(s+\Delta, t-\Delta]$ the number of laggers is lower bounded:
\begin{align}
\X{s+\Delta}{t-\Delta} & > (1-\delta)g\alpha (t-s-2\Delta) \\
& = (1-\delta)g\alpha \frac{t-s-2\Delta}{t-s} (t-s) \\
&> (1-\delta)\left(1-\frac{\delta}{40}\right)g\alpha(t-s) \label{c 3.03}\\
& > \growth g\alpha(t-s)
\end{align}
where \eqref{c 3.03} is due to $t-s>\coeft$. According to Lemma \ref{lemma: c lagger diff height},
these laggers have different heights, thus there {must exist} a lagger with height of at least  $\ell + \growth g\alpha(t-s)$ {by} time $t-\Delta$. This height lower bounds the {heights} of all $t$-credible blockchains.

By Lemma \ref{lemma: c prob of good events}, the probability that some $t$-credible blockchain is less than $\growth g\alpha(t-s)$ higher than some $s$-credible blockchain does not exceed $9e^{-\frac{1}{24}\eta(t-s)}$.
\end{draft}

\begin{short}
Theorem \ref{thm: c growth} can be derived by \eqref{c equ: E2} and Lemma \ref{lemma: c lagger diff height}.
\end{short}
\end{proof}


\begin{lemma} \label{lemma: c t-credible height}
Suppose positive integer $k$ and real number $t$ satisfy $t \ge \frac{k}{\growth g\alpha}$. Then under event {$\G{t-\frac{k}{2\alpha}+\Delta}{t-\Delta}$}, the height of {every} $t$-credible blockchain is {at least} $k$.
\end{lemma}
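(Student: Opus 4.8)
The plan is to bound the height of every $t$-credible blockchain from below by exhibiting a tall \emph{lagger} that is published no later than time $t-\Delta$. A lagger counted in $\X{0}{t-\Delta}$ is a non-genesis honest block mined during $(0,t-\Delta]$; its blockchain is therefore credible at its mining time (Lemma~\ref{lemma: honest block append to honest blockchain}) and has height at least $1$, and by Lemma~\ref{lemma: c lagger diff height} distinct laggers have distinct heights. Hence, if $\X{0}{t-\Delta}\ge k$, the heights of these laggers form a set of at least $k$ distinct positive integers, so some lagger $b$ satisfies $\h{b}\ge k$. Being honest, $b$ is published at $P_b=T_b\le t-\Delta$, so by Definition~\ref{def: c longest blockchain} every $t$-credible blockchain is no shorter than blockchain $b$ and thus has height at least $k$. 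The whole problem therefore reduces to showing $\X{0}{t-\Delta}\ge k$.

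First I would extract the needed good event from the typical event. Because $\growth g<1$, we have $t\ge \frac{k}{\growth g\alpha}>\frac{k}{2\alpha}$, so $0$ lies in $[0,\,t-\frac{k}{2\alpha}+\Delta]$ and $t-\Delta$ lies in $[t-\Delta,\infty)$; consequently $\G{t-\frac{k}{2\alpha}+\Delta}{t-\Delta}\subseteq \E{0}{t-\Delta}\subseteq \Eb{0}{t-\Delta}$, and under the stated event $\X{0}{t-\Delta}>(1-\delta)(t-\Delta)g\alpha$.

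The crux is the arithmetic verifying $(1-\delta)(t-\Delta)g\alpha>k-1$, which forces the integer $\X{0}{t-\Delta}$ to be at least $k$. I would use the identity $1-\delta=\growth+\frac{\delta}{40}$ together with the hypothesis $\growth g\alpha t\ge k$ to obtain $(1-\delta)(t-\Delta)g\alpha\ge k+\frac{\delta}{40}g\alpha t-(1-\delta)g\alpha\Delta$. The main obstacle is to absorb the $\Delta$-shortfall created by counting laggers only up to $t-\Delta$ rather than $t$; this is exactly where the propagation discount enters, through $g\alpha\Delta=\alpha\Delta e^{-\alpha\Delta}\le \frac1e<1$. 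Dropping the nonnegative term $\frac{\delta}{40}g\alpha t$ gives $(1-\delta)g\alpha\Delta-\frac{\delta}{40}g\alpha t\le (1-\delta)g\alpha\Delta<g\alpha\Delta<1$, whence $(1-\delta)(t-\Delta)g\alpha>k-1$ as required. Notably, this direct estimate avoids invoking the blockchain growth theorem (Theorem~\ref{thm: c growth}) with $s=0$, whose hypothesis $t>\coeft$ can fail when $k$ is small; no left $\Delta$-padding is needed here because the genesis has height $0$ and every counted lagger already has height at least $1$, which sharpens the bound just enough to close the gap.

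Assembling the pieces, under $\G{t-\frac{k}{2\alpha}+\Delta}{t-\Delta}$ we have $\X{0}{t-\Delta}\ge k$, the tallest such lagger has height at least $k$ and is published by $t-\Delta$, and therefore every $t$-credible blockchain has height at least $k$. (When $k\le 4\alpha\Delta$ the start point $t-\frac{k}{2\alpha}+\Delta$ is at least the end point $t-\Delta$, so the event $\G{t-\frac{k}{2\alpha}+\Delta}{t-\Delta}$ is empty and the claim holds vacuously; the uniform calculation above thus covers all cases.)
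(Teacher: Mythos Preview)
Your argument is correct, and it takes a genuinely different route from the paper. The paper's proof sets $r=\frac{k}{\growth g\alpha}$, notes that $r>\frac{k}{2\alpha}$ so that $\E{t-r+\Delta}{t-\Delta}$ is contained in the stated typical event, and then invokes the blockchain growth theorem (Theorem~\ref{thm: c growth}) on the interval $(t-r,t]$ to conclude that every $t$-credible blockchain is at least $\growth g\alpha r=k$ higher than every $(t-r)$-credible blockchain.

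By contrast, you bypass Theorem~\ref{thm: c growth} entirely and argue directly from $\Eb{0}{t-\Delta}$, using the pigeonhole on distinct lagger heights and the sharp bound $g\alpha\Delta=\alpha\Delta e^{-\alpha\Delta}\le 1/e<1$ to push the lagger count above $k-1$. This buys you something real: Theorem~\ref{thm: c growth} has the side hypothesis $t-s>\coeft$, which in the paper's application becomes $r>\coeft$, i.e., $k>\growth g\alpha\cdot\coeft$. The lemma's statement does not assume this, so the paper's proof is tacitly relying on the fact that the lemma is only ever applied downstream with $k\ge\coefk$. Your direct estimate removes that implicit dependence (you only need the vacuous case $k\le 4\alpha\Delta$, which you handle), and you correctly observe that no left $\Delta$-padding is needed at $s=0$ because the genesis already anchors height $0$. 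The trade-off is that the paper's proof is a three-line corollary of the growth theorem, while yours redoes the lagger-counting inequality by hand.
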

\begin{proof}
{Let $r = \frac{k}{\growth g\alpha}$. Evidently, $r>\frac{k}{2\alpha}$.}
If event
{$\G{t-\frac{k}{2\alpha}+\Delta}{t-\Delta}$ } occurs, event  $\E{t-r+\Delta}{t-\Delta}$ occurs.
According to Theorem \ref{thm: c growth},
every $t$-credible blockchain is at least $\growth g\alpha r = k$ higher than the maximum height of all $\left(t-r\right)$-credible blockchains. Hence the proof.
\end{proof}

\begin{theorem}\label{thm: c quality}
(Blockchain quality theorem)
Suppose positive integer $k$ and real number $t$ satisfy {$k \ge \coefk$} and $t \ge \frac{k}{\growth g\alpha}$.
Under event $\G{t-\frac{k}{2\alpha}+\Delta}{t-\Delta}$, at least $\left(1-g\right)k$ of the last $k$ blocks of every $t$-credible blockchain are honest. As a consequence, the probability that more than  $gk$ of the last $k$ blocks of some $t$-credible blockchain are adversarial does not exceed $\mu {e^{-\frac{\eta}{{27}}\left(\frac{k}{2\alpha}-2\Delta\right)}}$.
\end{theorem}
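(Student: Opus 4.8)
The plan is to bound the number of adversarial blocks among the last $k$ blocks of an arbitrary $t$-credible blockchain $b$ by racing the growth of $b$ against the adversary's limited mining budget over a correctly identified time window. Throughout I would work under the event $\G{t-\frac{k}{2\alpha}+\Delta}{t-\Delta}$ and only invoke good events that this typical event contains. First I would apply Lemma~\ref{lemma: c t-credible height} (valid since $t\ge\frac{k}{\growth g\alpha}$) to guarantee $\h{b}\ge k$, so that the last $k$ blocks, occupying heights $\h{b}-k+1,\dots,\h{b}$, are well defined and the genesis block lies at height $0\le \h{b}-k$.

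The key device is to anchor the segment with an honest block \emph{below} it rather than with the $k$-deep block itself, which sidesteps the private-chain pitfall stressed in the introduction. Let $w$ be the honest block of $b$ of largest height among those at height at most $\h{b}-k$ (genesis always qualifies), and set $H_0=\h{w}$ and $\tau_0=T_w$. By maximality, every block of $b$ at heights $H_0+1,\dots,\h{b}-k$ is adversarial, so all honest blocks of $b$ above height $H_0$ lie among the last $k$. Since heights strictly increase along a chain, all $\h{b}-H_0\ge k$ blocks of $b$ above $w$ are mined during $(\tau_0,t]$.

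Next I would pin down the window and apply the growth theorem. A counting step in the spirit of Lemma~\ref{lemma: c growth}, applied to $\E{t-\frac{k}{2\alpha}}{t}$, shows fewer than $k$ blocks are mined in $(t-\frac{k}{2\alpha},t]$; since at least $k$ blocks of $b$ are mined after $\tau_0$, this forces $\tau_0\le t-\frac{k}{2\alpha}$, so that $\E{\tau_0}{t}$ and $\E{\tau_0+\Delta}{t-\Delta}$ are both available inside the typical event. As $w$ is honest, Lemma~\ref{lemma: honest block append to honest blockchain} makes blockchain $w$ $\tau_0$-credible, whence the growth theorem (Theorem~\ref{thm: c growth}) from $\tau_0$ to $t$ gives $\h{b}\ge H_0+\growth g\alpha(t-\tau_0)$, i.e.\ $t-\tau_0\le (\h{b}-H_0)/(\growth g\alpha)$. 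The adversarial blocks among the last $k$ number at most $\Z{\tau_0}{t}$, which by $\Ed{\tau_0}{t}$ is below $(t-\tau_0)(\beta+g^2\alpha\delta)$; using \eqref{equ: alpha beta} to bound $\beta+g^2\alpha\delta<\growth g^2\alpha$ and substituting the window estimate yields $\Z{\tau_0}{t}<(\h{b}-H_0)g$. Hence more than $(\h{b}-H_0)(1-g)\ge k(1-g)$ of the blocks above $w$ are honest, and all of them sit in the last $k$, establishing the first assertion. The probability statement then follows from Lemma~\ref{lemma: c prob of typical event}, whose interval-length precondition is supplied by $k\ge\coefk$.

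I expect the main obstacle to be precisely this window identification: one must resist using the $k$-deep block directly, because an adversarial $k$-deep block may be mined long before $t-\frac{k}{\growth g\alpha}$ yet published only late, so the growth theorem cannot be anchored there. Anchoring instead at an honest block $w$ strictly below the segment is what makes blockchain $w$ credible, turns the mining-rate inequality \eqref{equ: alpha beta} into the exact factor $g$, and delivers the constant $(1-g)$. The remaining work — the counting step bounding $\tau_0$, and verifying that $k\ge\coefk$ furnishes every interval-length hypothesis (including $t-\tau_0>\coeft$ for Theorem~\ref{thm: c growth} and $\frac{k}{2\alpha}-2\Delta>\coeff$ for the probability bound) — I expect to be routine.
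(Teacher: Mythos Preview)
Your proposal is correct and follows essentially the same approach as the paper: both anchor at the highest honest block on the chain at height at most $\h{b}-k$ (your $w$, the paper's $e$), use Lemma~\ref{lemma: c growth} to place its mining time before $t-\frac{k}{2\alpha}$, invoke Theorem~\ref{thm: c growth} from that time to $t$, and balance the resulting height growth against the $\Ed{\cdot}{\cdot}$ bound and \eqref{equ: alpha beta}. The only cosmetic difference is the closing arithmetic: the paper uses the mediant-type inequality $\frac{z}{k}\le\frac{z+y}{k+y}$ to bound the adversarial \emph{fraction} among the last $k$ blocks, whereas you bound the number of honest blocks above $w$ directly and then note they all lie in the last $k$; the two routes are equivalent.
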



\begin{proof}
The intuition is that under good events,
the heights of credible blockchains grow by at least $\X{s+\Delta}{t-\Delta}$ during $(s,t]$, which is lower bounded by \eqref{c equ: E2}. Meanwhile, the number of adversarial blocks mined is upper bounded by \eqref{c equ: E4}.
Thus, at least a fraction of blocks must be honest even in the worst case that all adversarial blocks are included in a credible blockchain.
\begin{short}
Theorem \ref{thm: c quality} can be proved based on Theorem \ref{thm: c growth}.
\end{short}

\begin{draft}
To be precise,
suppose blockchain $d$ is $t$-credible. According to Lemma \ref{lemma: c t-credible height}, under $\G{t-\frac{k}{2\alpha}+\Delta}{t-\Delta}$ we have $h(d)\ge k$. Denote the $k$-deep block of blockchain $d$ as block $b$.
Let block $e$ be the highest honest block mined before block $b$ on blockchain $b$. Then we have $0\le \h{e} < \h{b}$.
\DG{The relationship between these blocks is illustrated as follows:
\begin{align}
k \text{ blocks}  \qquad \nonumber \\
\fbox{$e$}-\cdots-\fbox{\color{white}i}-\overbrace{\fbox{$b$}-\cdots-\fbox{$d$}} \quad\\
\text{time } s \qquad \qquad \qquad \qquad \qquad \text{time } t
\nonumber
\end{align}}

{It is easy to check that $t>\frac{k}{2\alpha}$. Let $s = T_e$ for convenience.}
According to Lemma \ref{lemma: c growth}, we have {$t-s>  \frac{k}{2\alpha} > \coeft$} under event $\E{t-\frac{k}{2\alpha}}{t}$.

Denote the number of adversarial blocks between block $b$ (inclusive) and block $d$ (inclusive) as $z$.
By definition, on blockchain $d$, all blocks at heights  $\{\h{e}+1, \dots, \h{b}-1\}$ are adversarial. Let $y = \h{b}-\h{e}-1$, then $z + y$ is the number of adversarial blocks between block $e$ (exclusive) and the block $d$ (inclusive). Obviously these adversarial blocks must be mined during $(s, t]$, thus we have
\begin{align}
    z+ y & \le \Z{s}{t}  \\
    & < \beta(t-s) + \delta g^2 \alpha(t-s)\label{c 4.490}
\end{align}
where \eqref{c 4.490} is due to \eqref{c equ: E4}.

By Lemma \ref{lemma: honest block append to honest blockchain}, blockchain $e$ is  $s$-credible because block $e$ is honest. Blockchain $d$ is $t$-credible by definition.
By Theorem \ref{thm: c growth}, under $\E{s+\Delta}{t-\Delta}$,  a $t$-credible blockchain is more than  $\growth g\alpha(r-s)$ longer than an $s$-credible blockchain. Since the height difference of
blockchain $e$ and blockchain $d$ is $k+y$, we have
\begin{align}
   k+ y & \ge  \growth g\alpha(t-s).  \label{c 4.491}
\end{align}

Thus, {under event $\G{t-\frac{k}{2\alpha}+\Delta}{t-\Delta}$}, we have
\begin{align}
    \frac{z}{k} & \le \frac{z+ y}{k+y} \label{c 4.50} \\
    & < \frac{\beta(t-s)+\delta g^2\alpha(t-s)}{\growth  g\alpha (t-s)} \label{c 4.52} \\
    & = \frac{1}{\growth }\left( \frac{\beta}{g\alpha} + \delta g\right) \label{c 4.53} \\
    & < \frac{1}{\growth }\left( \left(1-\frac{81}{40}\delta\right)g + \delta g\right) \label{c 4.54} \\
    & = g, \label{c 4.55}
\end{align}
where \eqref{c 4.50} is due to $z \le k$,  \eqref{c 4.52} is due to \eqref{c 4.490} and \eqref{c 4.491}, and \eqref{c 4.54} is due to \eqref{equ: alpha beta}.

{According to Lemma \ref{lemma: c prob of typical event},
\begin{align}
{P\left(\G{t-\frac{k}{2\alpha}+\Delta}{t-\Delta}\right)} > 1-\mu {e^{-\frac{\eta}{{27}}\left(\frac{k}{2\alpha}-2\Delta\right)}}.
\end{align}

As a consequence, with probability at least $1-\mu {e^{-\frac{\eta}{{27}}\left(\frac{k}{2\alpha}-2\Delta\right)}}$, event {$\G{t-\frac{k}{2\alpha}+\Delta}{t-\Delta}$} occurs, under which the fraction of adversarial blocks in {the last} $k$  blocks of any $t$-credible blockchains is at most $g$. Hence the proof.}
\end{draft}

\end{proof}

\begin{definition}\label{def: c permanent}
A block or a sequence (of blocks) is said to be {\em permanent after $t$} if the block or sequence remains in all $r$-credible blockchains with $r\ge t$.
\end{definition}


\begin{lemma}\label{lemma: c Y > Z}
Suppose real numbers $s$, $t$, $r$, $\epsilon$ and integer $k$ satisfy  $k\ge \coefk$, $0 < s \le t-\frac{k}{2\alpha} < t \le r$,  and $0\le \epsilon < \frac{\delta}{40}(r-t)$. Then under event $\G{t-\frac{k}{2\alpha}+\Delta}{t-\Delta}$, we have
\begin{align}
   \Y{s +\Delta}{r-\epsilon-\Delta} > \Z{s}{r}.
\end{align}
\end{lemma}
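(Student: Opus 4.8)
The plan is to bound the two quantities separately using two good events guaranteed by the typical event, and then reduce the claim to a deterministic inequality controlled by the parameter assumption \eqref{equ: alpha beta} and the size constraint $k\ge\coefk$.

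First I would verify that both $\E{s}{r}$ and $\E{s+\Delta}{r-\epsilon-\Delta}$ are implied by $\G{t-\frac{k}{2\alpha}+\Delta}{t-\Delta}$. Since that typical event is the intersection of $\E{a}{b}$ over all $a\in[0,\,t-\frac{k}{2\alpha}+\Delta]$ and $b\in[t-\Delta,\infty)$, I only need to check that the two index pairs lie in this range. The hypothesis $s\le t-\frac{k}{2\alpha}$ gives $s\le s+\Delta\le t-\frac{k}{2\alpha}+\Delta$, and $0\le\epsilon<\frac{\delta}{40}(r-t)<r-t$ gives $r-\epsilon-\Delta>t-\Delta$ as well as $r\ge t>t-\Delta$; the two intervals are nonempty because $r-s>\frac{\delta}{40}(r-s)>\epsilon+2\Delta$, a bound established in the final step. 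Hence both good events hold.

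Next I would apply the loner lower bound \eqref{c equ: E3} to $\E{s+\Delta}{r-\epsilon-\Delta}$ and the adversarial upper bound \eqref{c equ: E4} to $\E{s}{r}$, yielding $\Y{s+\Delta}{r-\epsilon-\Delta}>(1-\delta)(r-\epsilon-s-2\Delta)g^2\alpha$ and $\Z{s}{r}<(r-s)(\beta+\delta g^2\alpha)$. Feeding $\beta<(1-\frac{81}{40}\delta)g^2\alpha$ from \eqref{equ: alpha beta} into the second bound collapses it to $\Z{s}{r}<(r-s)\growth g^2\alpha$. Thus it suffices to show $(1-\delta)(r-\epsilon-s-2\Delta)\ge(r-s)\growth$, which after expanding and cancelling is equivalent to $\frac{\delta}{40}(r-s)\ge(1-\delta)(\epsilon+2\Delta)$, and this follows from $\frac{\delta}{40}(r-s)>\epsilon+2\Delta$.

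Finally I would close this last inequality using the two quantitative hypotheses. Writing $r-s=(r-t)+(t-s)$ and using $t-s\ge\frac{k}{2\alpha}$ together with $k\ge\coefk$ gives $\frac{\delta}{40}(t-s)\ge 2(1+\Delta)=2+2\Delta$, while $\epsilon<\frac{\delta}{40}(r-t)$ accounts for the remaining segment; adding the two produces $\frac{\delta}{40}(r-s)\ge\frac{\delta}{40}(r-t)+2+2\Delta>\epsilon+2\Delta$. The main obstacle is precisely this accounting: the thin $\frac{\delta}{40}$ slack arising from the gap between $1-\delta$ and $\growth$ must simultaneously absorb the $\epsilon$ offset of the loner window and the $2\Delta$ shrinkage from the propagation margins at both ends, and one must check that the constraint $k\ge\coefk$ furnishes exactly the $2(1+\Delta)$ budget required to do so.
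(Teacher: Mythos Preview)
Your proposal is correct and follows essentially the same route as the paper: both arguments invoke $\E{s}{r}$ and $\E{s+\Delta}{r-\epsilon-\Delta}$ from the typical event, apply \eqref{c equ: E3} and \eqref{c equ: E4}, and reduce everything to the deterministic bound $\epsilon+2\Delta<\frac{\delta}{40}(r-s)$, which is then obtained by splitting $r-s=(r-t)+(t-s)$ and using $k\ge\coefk$ on the second piece. The only cosmetic difference is ordering: the paper first establishes $\epsilon<\frac{\delta}{40}(r-s)-2\Delta$ and then runs the chain $\Y{}{}>\cdots>\Z{}{}$ directly, whereas you bound $Y$ and $Z$ separately before isolating the needed inequality.
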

\begin{proof}
{By assumption, it is easy to verify that}
\begin{align}
    \epsilon & < \frac{\delta}{40}(r-t) \\
    & <  \frac{\delta}{40}(r-s + s - t) \\
    & < \frac{\delta}{40}(r-s) - 2\Delta. \label{c 4.91}
\end{align}

{If $\G{t-\frac{k}{2\alpha}+\Delta}{t-\Delta}$ occurs, then $\E{s}{r}$ and $\E{s+\Delta}{r-\epsilon-\Delta}$ both occur, under which we have}
\begin{align}
    \Y{s +\Delta}{r-\epsilon-\Delta} & > (1-\delta)g^2\alpha(r-\epsilon-s-2\Delta) \label{c 5.0}\\
    & = (1-\delta)\frac{r-\epsilon-s-2\Delta}{r-s} g^2\alpha(r-s) \label{c 5.1}\\
    & > (1-\delta)\left(1-\frac{1}{40}\delta\right)g^2\alpha(r-s) \label{c 5.2}\\
    & > \left(1-\frac{81}{40}\delta\right)g^2\alpha(r-s) + \delta g^2\alpha (r-s)\\
    & > \beta (r-s)  + \delta g^2 \alpha (r-s)\label{c 5.3} \\
    & > \Z{s}{r} \label{c 5.4}
\end{align}
where \eqref{c 5.0} is due to \eqref{c equ: E3}, \eqref{c 5.2} is due to \eqref{c 4.91}, \eqref{c 5.3} is due to \eqref{equ: alpha beta}, and \eqref{c 5.4} is due to \eqref{c equ: E4}.
\end{proof}

\DG{
\begin{lemma} \label{lemma: c Y < Z}
  Suppose real numbers $s\le t \le r$.  If the highest honest block shared by an $r$-credible blockchain and a $t$-credible blockchain is mined at time $s$, then
  \begin{align} \label{eq: c Y < Z}
    \Y{s+\Delta}{t-\Delta} \le \Z{s}{r}.
\end{align}
\end{lemma}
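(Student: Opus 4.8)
The plan is to build an injection from the loners mined during $(s+\Delta, t-\Delta]$ into the adversarial blocks mined during $(s,r]$, sending each loner to a \emph{distinct} adversarial block sitting at the \emph{same height}. Let $e$ denote the highest honest block shared by the $r$-credible blockchain (call it $B_r$) and the $t$-credible blockchain (call it $B_t$); by hypothesis $T_e = s$, and such an $e$ exists because the genesis block is honest and common to every blockchain. If $s+\Delta \ge t-\Delta$ then $\Y{s+\Delta}{t-\Delta}=0$ and the claim is vacuous, so I assume $s+\Delta < t-\Delta$ and fix an arbitrary loner $L$ with $T_L \in (s+\Delta, t-\Delta]$.

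First I would pin down the height of $L$ relative to $e$. Since $L$ is honest, Lemma~\ref{lemma: honest block append to honest blockchain} makes blockchain $\f{L}$ a $T_L$-credible blockchain; moreover $e$ is published at time $s = T_e \le T_L-\Delta$ because $T_L > s+\Delta$, so the credibility condition of Definition~\ref{def: c longest blockchain} gives $\h{\f{L}} \ge \h{e}$ and hence $\h{L} = \h{\f{L}}+1 > \h{e}$. Next, since $L$ is honest it is published at $T_L \le t-\Delta \le r-\Delta$, so by $t$-credibility and $r$-credibility both $B_t$ and $B_r$ have height at least $\h{L}$ and therefore each contains a block at height $\h{L}$. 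As $e$ lies on both chains at height $\h{e} < \h{L}$, the height-$\h{L}$ block on each chain is a descendant of $e$. Crucially, $L$ cannot appear on \emph{both} $B_t$ and $B_r$: otherwise $L$ would be an honest block shared by the two chains with $\h{L} > \h{e}$, contradicting the maximality of $e$.

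Consequently $L$ is absent from at least one of the two chains; on such a chain the height-$\h{L}$ block is some $c \ne L$, and since a loner is the only honest block at its height (Lemma~\ref{lemma: c unique kth block}), $c$ must be adversarial. Being a descendant of $e$, block $c$ satisfies $T_c > T_e = s$, and being a block of a $t$-credible or $r$-credible chain it is published by time $r$, so $T_c \le P_c \le r$; thus $c$ is an adversarial block mined during $(s,r]$, counted by $\Z{s}{r}$. Finally, distinct loners have distinct heights (each is the unique honest block at its own height), so the blocks $c$ assigned to distinct loners lie at distinct heights and are therefore distinct. This injection yields $\Y{s+\Delta}{t-\Delta} \le \Z{s}{r}$.

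The main obstacle is the middle step: certifying that an \emph{unshared} loner forces an adversarial block at its own height on one of the two chains. This rests on simultaneously establishing three facts — that the loner sits strictly above $e$, that both credible chains are tall enough to reach the loner's height, and that the loner cannot be common to both chains — and only then invoking loner uniqueness to conclude that the conflicting block is adversarial. The remaining bookkeeping (that $c$ is mined within $(s,r]$ and that the assignment is injective through distinct heights) is routine once these height comparisons are in place.
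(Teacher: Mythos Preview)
Your proof is correct and follows essentially the same strategy as the paper: associate to each loner in $(s+\Delta,t-\Delta]$ an adversarial block at the same height on one of the two credible chains, then use that distinct loners occupy distinct heights to conclude the injection. The paper differs only cosmetically---it introduces the highest (possibly adversarial) shared block $b$ and splits into the cases $\h{e}<\h{c}\le\h{b}$ and $\h{b}<\h{c}$---whereas your observation that the loner cannot lie on \emph{both} chains handles both cases at once without naming $b$.
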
}


\begin{proof}
Suppose the highest block shared by $t$-credible  blockchain $d$ and $r$-credible blockchain $d'$ is block $b$.
Denote the highest honest block on blockchain $b$ \DG{as} block $e$ with $T_e = s$. Block $b$ and block $e$ may or may not be the same. 
\DG{The relationship between these blocks is illustrated as follows:
\begin{align}
\setlength{\jot}{0pt}
\text{time } r \nonumber \\
\fbox{\color{white}a}-\;\;\cdots\;-\fbox{d'} \quad \nonumber\\
|\quad\;  \qquad\qquad\qquad \\
\fbox{e}-\cdots-\fbox{b}-\cdots-\fbox{d} \qquad \nonumber\\
\text{time } s \quad \qquad \qquad \qquad \text{time } t \quad
\nonumber
\end{align}}

If $t-s\le 2\Delta$ or no loner is mined during $(s+\Delta, t-\Delta]$, obviously $ \Y{{s}+\Delta}{t-\Delta} \le \Z{{s}}{r}$.
Otherwise, consider loner $c$ mined during $({s}+\Delta, t-\Delta]$.
Since blockchain $e$ is ${s}$-credible and block $c$ is mined after time ${s}+\Delta$,  we have $\h{c} \ge \h{e}$ by Lemma \ref{lemma: height concensus}. Since blockchain $d$ is $t$-credible and blockchain  $d'$ is $r$-credible, we have $\h{c} \le \min\{\h{d}, \h{d'}\}$.  \DG{Consider the following two only possible cases:}
\begin{enumerate}
\item If $\h{e} < \h{c} \le \h{b}$, there exists at least one adversarial block at height $h(c)$ because all blocks between block $e$ (exclusive) and block $b$ (inclusive) are adversarial by definition.
\item If $\h{b} < \h{c} \le \min\{\h{d}, \h{d'}\}$,
there is at least one adversarial block at height $h(c)$, because two diverging blockchains exist but loner $c$ is the only honest block at its height by Lemma \ref{lemma: c unique kth block}.
\end{enumerate}
Thus, for \DG{every} loner $c$ mined during $({s}+\Delta, \DG{t}-\Delta]$, at least one adversarial block must be mined during $({s},r]$ at the same height.
\DG{In particular, the adversarial block must be mined before $r$ because it is published by time $r$.
Thus~\eqref{eq: c Y < Z} is proved.}
\end{proof}

\begin{lemma}\label{lemma: c common prefix lemma}
Suppose positive integer $k$ and real number $t$ satisfy $k \ge \frac{160\alpha(1+\Delta)}{\delta}$ and $t\ge \frac{k}{\growth g\alpha}$. Under event $\G{t-\frac{k}{2\alpha}+\Delta}{t-\Delta}$, the {$(k-1)$}-deep prefix of every $t$-credible blockchain  must be a prefix of all other $t$-credible blockchains (it may or may not be exactly {$(k-1)$}-deep in other $t$-credible blockchains).
\end{lemma}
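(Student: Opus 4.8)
The plan is to argue by contradiction. Fix two $t$-credible blockchains; call the one whose prefix we examine $d$ (with $\h{d}=n$), and the other $d'$. Suppose the $(k-1)$-deep prefix of $d$, namely blockchain $\bl{n-k+1}$, is \emph{not} a prefix of $d'$. Since two blockchains share exactly the path from the genesis up to their highest common block, this assumption forces the highest common block $b^*$ of $d$ and $d'$ to lie strictly below the $k$-deep block of $d$, so $\h{b^*}\le n-k$. Let block $e$ be the highest honest block on blockchain $b^*$ (the genesis block guarantees such an $e$ exists), and set $s=T_e$. Because the blocks shared by $d$ and $d'$ are precisely those of blockchain $b^*$, block $e$ is the highest honest block shared by $d$ and $d'$.

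Next I would pin down $s$. On any blockchain, heights increase with mining time, so $e$ is mined no later than the $(k-1)$-deep prefix of $d$; invoking Lemma~\ref{lemma: c growth} (applicable because $\G{t-\frac{k}{2\alpha}+\Delta}{t-\Delta}$ contains $\E{t-\frac{k}{2\alpha}}{t}$ and $t>\frac{k}{2\alpha}$), that prefix is mined by time $t-\frac{k}{2\alpha}$, whence $s\le t-\frac{k}{2\alpha}$. Two competing bounds then collide. On one hand, Lemma~\ref{lemma: c Y < Z} applied with $r=t$ (both chains being $t$-credible) yields $\Y{s+\Delta}{t-\Delta}\le \Z{s}{t}$, since each loner in $(s+\Delta,t-\Delta]$ is the unique honest block at its height (Lemma~\ref{lemma: c unique kth block}) and a fork below it must be filled by a distinct adversarial block. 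On the other hand, $\G{t-\frac{k}{2\alpha}+\Delta}{t-\Delta}$ contains both $\E{s+\Delta}{t-\Delta}$ and $\E{s}{t}$, so the loner lower bound \eqref{c equ: E3} and the adversarial upper bound \eqref{c equ: E4}, combined with $t-s\ge\frac{k}{2\alpha}$ and \eqref{equ: alpha beta}, give $\Y{s+\Delta}{t-\Delta}>\Z{s}{t}$. These two statements contradict each other, forcing the $(k-1)$-deep prefix of $d$ to be a prefix of $d'$; as $d$ and $d'$ are arbitrary, the lemma follows.

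I expect the delicate point to be twofold. First, the reverse inequality $\Y{s+\Delta}{t-\Delta}>\Z{s}{t}$ is exactly the boundary case $r=t$ of Lemma~\ref{lemma: c Y > Z}, which that lemma does not formally cover (it needs $r>t$ to leave room for $\epsilon$); I would therefore re-run its short computation directly from \eqref{c equ: E3} and \eqref{c equ: E4}, verifying that the $2\Delta$ loss at both endpoints of the loner interval is absorbed precisely because $k\ge\coefk$ forces $\frac{2\Delta}{t-s}<\frac{\delta}{40}$, after which the slack in \eqref{equ: alpha beta} (the $\frac{81}{40}\delta$ margin versus the $\frac{41}{40}\delta$ actually consumed) closes the gap. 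Second, the geometric step---converting ``the prefix is not contained in $d'$'' into the quantitative bound $s\le t-\frac{k}{2\alpha}$ on the age of the highest shared honest ancestor---is where the growth estimate must be threaded carefully; I would also confirm that the degenerate case $e=$ genesis (so $s=0$) is harmless and that $t-s>2\Delta$, so that $\Y{s+\Delta}{t-\Delta}$ is a genuine count rather than the conventional zero.
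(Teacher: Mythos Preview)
Your proposal is correct and follows essentially the same route as the paper's proof: identify the highest shared honest block $e$ between two $t$-credible chains, use Lemma~\ref{lemma: c growth} to tie its position to a time bound, and then play Lemma~\ref{lemma: c Y < Z} against Lemma~\ref{lemma: c Y > Z} to force a contradiction unless the divergence occurs above the $(k-1)$-deep prefix. The only structural difference is that the paper argues directly (show $T_e>t-\frac{k}{2\alpha}$, hence $e$ sits above the prefix) whereas you phrase it as a contradiction (assume divergence below the prefix, deduce $s\le t-\frac{k}{2\alpha}$, collide the two inequalities); these are contrapositives of one another. Your observation that the case $r=t$, $\epsilon=0$ is not literally covered by the hypotheses of Lemma~\ref{lemma: c Y > Z} as stated, and your plan to re-derive the needed inequality directly from \eqref{c equ: E3}, \eqref{c equ: E4}, and \eqref{equ: alpha beta}, is well taken---the paper invokes Lemma~\ref{lemma: c Y > Z} with $\epsilon=0$ at $r=t$ and relies implicitly on the fact that its proof still goes through in that boundary case.
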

\begin{proof}
If there exists only one $t$-credible blockchain, then the claim holds. Suppose there are multiple $t$-credible blockchains. Let block $e$ be the highest honest block shared by two of those blockchains: blockchain $d$ and blockchain $d'$.
For every $s\le t-\frac{k}{2\alpha}$, by Lemma \ref{lemma: c Y > Z} with $\epsilon=0$, we have $\Y{s + \Delta}{t-\Delta} > \Z{s}{t}$.
\DG{Meanwhile}, we have $\Y{T_e + \Delta}{t-\Delta} \le \Z{T_e}{t}$ by Lemma \DG{\ref{lemma: c Y < Z} with $s=T_e$ and $r=t$}.
Thus we must have $T_e > t - \frac{k}{2\alpha}$.
\DG{By Lemma \ref{lemma: c growth}, block $e$ cannot be on the {$(k-1)$}-deep prefix of blockchain $d$ or blockchain $d'$.  Thus, those two}
blockchains must diverge after their {$(k-1)$}-deep prefixes.
Hence the proof \DG{of this lemma}.
\end{proof}


\begin{theorem} \label{thm: c common prefix}
(Common prefix theorem)
Suppose positive integer $k$ and real number $t$ satisfy $k \ge \frac{160\alpha(1+\Delta)}{\delta}$ and $t\ge \frac{k}{\growth g\alpha}$. Under event $\G{t-\frac{k}{2\alpha}+\Delta}{t-\Delta}$, the {$(k-1)$}-deep prefix of a $t$-credible blockchain is extended by all {$r$}-credible blockchains with {$r\ge t$}. As a consequence, with probability at least  $1-\mu {e^{-\frac{\eta}{{27}}\left(\frac{k}{2\alpha}-2\Delta\right)}}$,
the {$(k-1)$}-deep prefix is permanent after time $t$. \end{theorem}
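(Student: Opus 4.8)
The plan is to establish the Common Prefix Theorem in two parts: first the deterministic ``under the event'' claim, then the probabilistic guarantee. The key insight is that Lemma~\ref{lemma: c common prefix lemma} already gives me the conclusion for all $t$-credible blockchains simultaneously, but the theorem demands more: the $(k-1)$-deep prefix of one $t$-credible blockchain must be extended by every $r$-credible blockchain for \emph{all} $r \ge t$, not just at the single time $t$. So the heart of the argument is bridging from ``consensus among $t$-credible blockchains'' to ``permanence across all future times.''

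\textbf{Deterministic part.} First I would fix a $t$-credible blockchain $d$ and let $\pi$ denote its $(k-1)$-deep prefix. I want to show $\pi$ is a prefix of every $r$-credible blockchain $d'$ with $r \ge t$. The natural approach is to apply Lemma~\ref{lemma: c common prefix lemma}'s underlying mechanism but with the pair $(t,r)$ rather than $(t,t)$. Concretely, suppose for contradiction that some $r$-credible blockchain $d'$ does not extend $\pi$; then $d$ and $d'$ diverge at or before the $(k-1)$-deep prefix, so their highest shared honest block $e$ satisfies a height constraint placing $T_e \le t - \frac{k}{2\alpha}$ by Lemma~\ref{lemma: c growth} (since $e$ lies on the $(k-1)$-deep prefix). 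On one hand, Lemma~\ref{lemma: c Y < Z} with $s = T_e$ gives $\Y{T_e+\Delta}{t-\Delta} \le \Z{T_e}{r}$. On the other hand, I would invoke Lemma~\ref{lemma: c Y > Z} --- this is exactly why that lemma is stated with a general $r$ and the slack parameter $\epsilon$ --- to obtain $\Y{T_e+\Delta}{r-\epsilon-\Delta} > \Z{T_e}{r}$ for a suitable $\epsilon \ge 0$ (taking $\epsilon=0$ when $r=t$, or a small positive $\epsilon$ absorbing the $r-t$ gap as permitted by the hypothesis $\epsilon < \frac{\delta}{40}(r-t)$). Combining these with monotonicity of $Y$ in its upper endpoint yields $\Z{T_e}{r} < \Y{T_e+\Delta}{t-\Delta} \le \Z{T_e}{r}$, a contradiction. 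Hence no such $d'$ exists, and $\pi$ is extended by every $r$-credible blockchain.

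\textbf{Probabilistic part.} This is the easy step: Lemma~\ref{lemma: c prob of typical event} applied to the interval with endpoints $t-\frac{k}{2\alpha}+\Delta$ and $t-\Delta$ (whose length is $\frac{k}{2\alpha}-2\Delta$) gives
\begin{align}
P\left(\G{t-\frac{k}{2\alpha}+\Delta}{t-\Delta}\right) > 1 - \mu e^{-\frac{\eta}{27}\left(\frac{k}{2\alpha}-2\Delta\right)},\nonumber
\end{align}
and since the deterministic claim holds under this event, the $(k-1)$-deep prefix is permanent after $t$ (per Definition~\ref{def: c permanent}) with at least this probability. I must check the length is positive, which follows from $k \ge \frac{160\alpha(1+\Delta)}{\delta}$ guaranteeing $\frac{k}{2\alpha} > 2\Delta$, and that $t-\frac{k}{2\alpha}+\Delta < t-\Delta - \coeff$ so Lemma~\ref{lemma: c prob of typical event} applies.

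\textbf{The main obstacle} will be the bookkeeping in the deterministic part around the choice of $\epsilon$ and confirming that $T_e \le t-\frac{k}{2\alpha}$ really forces the contradiction uniformly over all $r \ge t$. The subtle point is that as $r$ grows, $\Z{T_e}{r}$ grows too, so I cannot reuse a single fixed bound; instead Lemma~\ref{lemma: c Y > Z} must be wielded with $r$ as a free parameter and $\epsilon$ tuned to the gap $r-t$, which is precisely the role of its hypotheses. Verifying that the interval constraints of Lemma~\ref{lemma: c Y > Z} (namely $0 < s \le t-\frac{k}{2\alpha}$ and $0 \le \epsilon < \frac{\delta}{40}(r-t)$) are met for the divergence-induced $s = T_e$ is where care is needed, but once the two opposing inequalities on $\Z{T_e}{r}$ are in hand, the contradiction and hence permanence follow immediately.
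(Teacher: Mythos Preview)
Your deterministic argument has a genuine gap at the monotonicity step. You obtain from Lemma~\ref{lemma: c Y < Z} (applied to the $t$-credible chain $d$ and the $r$-credible chain $d'$) the inequality $\Y{T_e+\Delta}{t-\Delta}\le \Z{T_e}{r}$, and from Lemma~\ref{lemma: c Y > Z} the inequality $\Y{T_e+\Delta}{r-\epsilon-\Delta}>\Z{T_e}{r}$ with $\epsilon<\tfrac{\delta}{40}(r-t)$. But then $t-\Delta\le r-\epsilon-\Delta$, so monotonicity of $Y$ in its upper endpoint only gives $\Y{T_e+\Delta}{t-\Delta}\le \Y{T_e+\Delta}{r-\epsilon-\Delta}$, which is the \emph{wrong} direction: you cannot deduce $\Z{T_e}{r}<\Y{T_e+\Delta}{t-\Delta}$ from $\Z{T_e}{r}<\Y{T_e+\Delta}{r-\epsilon-\Delta}$. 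The two inequalities you have are perfectly compatible (the $Y$-count over the longer interval can well exceed $\Z{T_e}{r}$ while the $Y$-count over the shorter interval does not), so no contradiction results.

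The paper's proof avoids this by a first-failure-time argument rather than comparing $d$ directly with $d'$. It lets $r$ be the smallest time in $(t,\infty)$ at which some $r$-credible chain $d$ fails to extend the prefix, picks $\epsilon=\tfrac{\delta}{80}(r-t)$, and compares $d$ with an $(r-\epsilon)$-credible chain $d'$ (which by minimality still extends the prefix). Now Lemma~\ref{lemma: c Y < Z} applies with the pair $(r-\epsilon,r)$ in place of $(t,r)$, yielding $\Y{s+\Delta}{r-\epsilon-\Delta}\le \Z{s}{r}$ with $s=T_e$; this has the \emph{same} $Y$-interval as Lemma~\ref{lemma: c Y > Z}, and the two inequalities contradict each other directly. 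The point you missed is that to force a contradiction you need the two credible chains being compared to live at nearly the same time; the minimality of $r$ is what furnishes such a pair.
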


\begin{proof}

Let blockchain $b$ be the $(k-1)$-deep prefix of a $t$-credible blockchain. According to Lemma \ref{lemma: c common prefix lemma}, under event $\G{t-\frac{k}{2\alpha}+\Delta}{t-\Delta}$, all $t$-credible blockchains extend blockchain $b$ \DG{(i.e., the claim holds for $r=t$)}. We will show contradiction if blockchain $b$ is not permanent {after $t$}.

Contrary to the claim, assume there exists a credible blockchain after time $t$ which does not extend blockchain $b$.
Suppose $r$ is the smallest number in $(t, \infty)$ such that there
exists an $r$-credible blockchain (denoted as blockchain $d$) which does not \DG{extend} blockchain $b$. Then  all $(r-\epsilon)$-credible blockchains extend blockchain $b$ for $\epsilon \in (0, r-t]$.
Pick $\epsilon = \frac{\delta}{80}(r-t)$.
Denote one of the $(r-\epsilon)$-credible blockchains as blockchain $d'$.
Let block $e$ be the highest honest block shared by blockchain $d$ and blockchain $d'$. Let $T_e = s$ for convenience. We have $s < t-\frac{k}{2\alpha}$ by Lemma \ref{lemma: c growth}.
Note that $\epsilon < \frac{\delta}{40}(r-t)$.
According to Lemma \ref{lemma: c Y > Z}, under event $\G{t-\frac{k}{2\alpha}+\Delta}{t-\Delta}$ we have $\Y{s + \Delta}{r-\epsilon-\Delta} > \Z{s}{r}$.
On the other hand, by Lemma \ref{lemma: c Y < Z} we have $\Y{s + \Delta}{r-\epsilon-\Delta} \le \Z{s}{r}$. Contradiction arises. Thus under event $\G{t-\frac{k}{2\alpha}+\Delta}{t-\Delta}$, there can not exist any credible blockchain after time $t$ which does not extend blockchain $b$.

According to Lemma \ref{lemma: c prob of typical event},
\begin{align}
P\left(\G{t-\frac{k}{2\alpha}+\Delta}{t-\Delta}\right) > 1-\mu e^{-\frac{\eta}{27}\left(\frac{k}{2\alpha}-2\Delta\right)}.
\end{align}
As a consequence, the probability that there exist some credible blockchain after time $t$ which do not extend the $k$-deep prefix is does not exceed $\mu  e^{-\frac{\eta}{9}\left(\frac{k}{2\alpha}-2\Delta\right)}$.
\end{proof}

{Some other authors have developed a technique to prove results like Theorem \ref{thm: c common prefix}, which} defines a time interval $[t_0, t_1]$ where $t_1$ is the first time some honest miner adopts a blockchain $d'$ with a different $(k-1)$-deep prefix from blockchain $d$, and $t_0$ is the mining time of the highest honest block shared by blockchains $d$ and $d'$. Desired properties {are then claimed} under a good event with respect to interval $[t_0, t_1]$. However, such a proof is flawed because by picking $t_0$ and $t_1$ based on miners' behavior, the posterior statistics of the mining processes during $[t_0, t_1]$ are different than the prior statistics.

In this section, we have defined good events, studied the properties of various blocks under these events, and {provided bounds for their probabilities. Using these tools, we have proved} the blockchain growth theorem, the blockchain quality theorems, and the common prefix theorem, which guarantee the liveness and consistency of bitcoin blockchains. In essence, {a} bitcoin transaction deep enough in any {credible} blockchain is with high probability guaranteed to remain in the {transaction} ledger.

As a numerical example, consider a bitcoin payment system where the block propagation delay is upper bounded by $\Delta = 2$ seconds. Consistent with the bitcoin protocol, assume the arrival rate of honest blocks is $\alpha = {6}$ blocks per {hour} {$= \frac{1}{600}$ blocks per second}. Assume there is up to $25\%$ of adversarial mining power, thus $\beta \le {2}$ blocks per {hour}. We pick $\delta = 0.3285$ which satisfies \eqref{equ: alpha beta}.  Then $\eta=0.65$ and $\mu = 201.8$ by {\eqref{def: c eta} and \eqref{def: c mu}}.

Suppose a user broadcast a transaction $tx$ which is collected by block $b$. Let $k=26000$, then
\begin{align}
 1-\mu {e^{-\frac{\eta}{{27}}\left(\frac{k}{2\alpha}-2\Delta\right)}} > 1-10^{-20}.
\end{align}
It is easy to check that $k > 3000 > \coefk$ is satisfied.
According to Theorem \ref{thm: c common prefix}, if the user observes block $b$ is $26000$-deep, then with probability at least $1-\mu {e^{-\frac{\eta}{{27}}\left(\frac{k}{2\alpha}-2\Delta\right)}} > 1-10^{-20}$, transaction $tx$ is permanent.
{This guarantee 
may be improved by tightening the bounds developed in this section. This is left to future work.}

\section{The Prism backbone protocol}
The Prism protocol was invented and fully described in \cite{bagaria2019prism}. Here we describe the Prism  backbone {protocol} with just enough details to facilitate its analysis.

{\subsection{Model}}
Blocks are generated in a peer-to-peer network where honest and adversarial miners mine and publish blocks over time. The blocks are classified into $m+1$ categories, referred to as $0$-blocks,  $1$-blocks, \ldots, $m$-blocks.
A block is mined before knowing which kind of block it is, {so it contains} enough information for all $(m+1)$ kinds of blocks. Sortition relies on the range the new block's hash lands in: If a miner constructs a new block whose hash is within $[j\gamma, j\gamma + \gamma)$ for $j\in \{0,\ldots,m\}$, the mined block is a $j$-block.
Parameter $\gamma$ can be adjusted to control the mining rate.
All $j$-blocks mined by honest miners are called honest $j$-blocks. All $j$-blocks mined by adversaries are called adversarial $j$-blocks.

\begin{definition}
(The mining processes of j-blocks) For $j\in \{0, \ldots, m\}$, we assume  an honest genesis $j$-block, referred to as $j$-block $0$, is mined at time $0$.
Subsequent $j$-blocks are referred to as $j$-block $1$, $j$-block $2$, and so on, in the order they are mined in time after time $0$.
For $t>0$, let $\Mj{t}$ denote the
total number of non-genesis $j$-blocks mined by time $t$.
If a single $j$-block is mined at time $t$, it must be $j$-block $\Mj{t}$. If $k>1$ $j$-blocks are mined at exactly the same time $t$, we assume the tie is broken in some deterministic manner so that their block numbers are $\Mj{t}-k+1, \ldots, \Mj{t}$, respectively.
\end{definition}

\begin{definition}
($j$-blockchain) For $j\in \{0, \ldots, m\}$, every non-genesis $j$-block must contain the hash value of a unique parent $j$-block which is mined strictly earlier.  We use {$\fj{k} \in \{0,1,...,k-1\}$} to denote $j$-block $k$'s parent {$j$-block} number.
For $j$-block $k$ to be valid, there must exist a unique sequence of $j$-block numbers $\bl{0},\bl{1}, \dots, \bl{n}$ where $\bl{0}=0$, $\bl{n}=k$, and {$\fj{\bl{i}}=\bl{i-1}$} for $i =1,\ldots ,n$.  This sequence is referred to as $j$-blockchain $(\bl{0},\ldots,\bl{n})$ or simply $j$-blockchain $k$ since it is determined by $k$.
\end{definition}
We assume that $j$-block $k$ is validated {by} the existence of the entire $j$-blockchain $k$: To validate a $j$-block $k$, one needs access to the entire $j$-blockchain $k$. {We let $T^j_k$  denote the time when $j$-block $k$ is mined.   We let $P^j_k$ denote the time when $j$-block $k$ is published.}
\begin{definition}
{{(Height)} The height of $j$-block $k$, denoted as $\hj{k}$, is defined as the height of $j$-blockchain $k$, which is in turn defined as the number of non-genesis $j$-blocks in it.}
\end{definition}

\begin{definition} \label{def: cj credible voter blockchain}
($t$-credible {$j$-blockchains}) For $j\in\{0, \ldots,m\}$, we say $j$-blockchain $b$ is \emph{$t$-credible} if the $j$-blockchain has been published by time $t$, and is no shorter than any $j$-blockchain published by time $t-\Delta$.
That is to say,
\begin{align}
{P^j_{b}} &  \le t,
\end{align}
and
\begin{align}
{\hj{b}} & \ge \hj{k}, \quad \forall k: P^j_{k} \le t - \Delta.\label{equ: cj height of longest blockchain}
\end{align}
If there is no need to specify $t$ explicitly, $j$-blockchain $b$ can also be simply called a credible $j$-blockchain.
\end{definition}

For $j\in \{0,\ldots, m\}$, let $N^j_t$ denote the total number of honest $j$-blocks mined during $(0, t]$. We assume the sum mining rate of honest miners is $(m+1)\alpha$. Since the sortition scheme ensures the mining power of both honest
and adversarial miners evenly distributed across $j$-blockchains,
$(N^j_{t}, t\ge 0)$ is {an independent} homogeneous Poisson point process with rate $\alpha$.
Let $Z^j_t$ denote the total number of adversarial blocks mined {during $(0, t]$}, then $N^j_t + Z^j_t = M^j_t$.
We assume the sum mining rate of all adversaries is no larger than $(m+1)\beta$ and that the number of adversarial blocks mined during any time interval is upper bounded
probabilistically. Specifically, for real number $a$ and $0\le s<t$,
\begin{align}
    P\left(Z^j_{t}-Z^j_s\le a\right) \ge e^{-\beta(t-s)} \sum_{i=0}^{\lfloor a \rfloor}\frac{\left(\beta(t-s)\right)^i}{i!}. \label{equ: cj Z bound}
\end{align}
Note that the definition of a (credible) $j$-blockchain is identical to that of a (credible) bitcoin blockchain.  Thus, the blockchain growth theorem, blockchain quality theorem, common prefix theorem, and other properties of bitcoin blockchains remain valid in all $j$-blockchains.

However, it does not suffice to generate a high-throughput transaction ledger by simply putting $(m+1)$ bitcoin blockchains in parallel. In particular, while all transactions in each blockchain itself are consistent, transactions on different blockchains may contradict each other, {e.g., there may be double spending across different blockchains.}
In Prism protocol, these $(m+1)$ bitcoin blockchains {are} building block{s}.
An additional process, referred to as voting, is executed {to resolve conflicts and achieve} global consensus.
To be specific,
block are classified into {proposer} blocks ($0$-blocks) and {voter} blocks (all $j$-blocks with $j\in\{1,\ldots, m\}$).
Blockchains are classified into proposer blockchains ($0$-blockchains) and voter blockchains (all $j$-blockchains with $j\in\{1,\ldots, m\}$).
The voting {by credible voter blockchains} elects a series of proposer blocks called a {credible leader sequence}, which is responsible for generating a final transaction ledger. A {credible} leader sequence may or may not be a credible $0$-blockchain.
While the properties of bitcoin blockchains ensure the liveness and consistency of voter blockchains, the voting process ensures the liveness and consistency of {credible} leader sequences. Below we {briefly describe} voting and {transaction} ledger generation.

\begin{definition} \label{def: R_h}
For positive integer $h$, we let $R_h$ denote the time when the first proposer block on height $h$ is published.
\end{definition}

By saying a voter  $j$-block $b$ votes on a height $h$, we mean {the} voter block chooses one proposer block among all proposer blocks at height $h$ and points to {the proposer block} with a reference link. The reference link is part of the content of  voter {$j$-block} $b$, thus it is immutable. Obviously voter {$j$-block} $b$ can not vote on height $h$ with {$R_h \ge T^j_b$}.

According to the Prism protocol, when voting on a height, an honest voter block always {chooses the first observed proposer} block of this height.
An honest voter {$j$-block} $b$ votes on all heights $h$ as long as 1)  $R_h \le {T^j_b}-\Delta$ and 2) height $h$ has not been voted on by {the voter block's} ancestors.
An adversarial voter block may not choose the {first observed proposer} block when voting.
An adversarial voter block may refuse to vote on some height or repeatedly vote on some height that has already been voted by its ancestors.

At each height, the vote(s) from one voter blockchain is counted only once (only the first vote is valid if there exist several). In other words, proposer blocks on the same height  receive {up to} $m$ votes from $m$ voter blockchains in total.


\begin{definition} \label{def: cp credible leader sequence}
($t$-credible leader sequence)
Let proposer block $b_1,\ldots, b_n$ be proposer blocks at heights $1,\ldots, n$, respectively, where $n$ is greater or equal to the maximum height of all proposer blockchains published by time $t-\Delta$.
We say $({0, b_1}, \ldots, b_n)$ is a $t$-credible leader sequence if it is elected by a collection of $m$ $t$-credible voter blockchains including one $j$-blockchain for every $j\in \{1,\ldots,m\}$. That is, for every $\ell\in \{1,\ldots, n\}$,  proposer block $b_\ell$ receives the most votes among all proposer blocks of height $\ell$ at time $t$ from that collection of voter blockchains.
In particular, we have $n\ge h$ if height $h$ satisfies $R_h \le t-\Delta.$
{Block $b_\ell$} is called a $t$-credible leader block.
If there is no need to specify $t$ explicitly, {a} $t$-credible leader sequence {(block)} can also be simply {referred to as} a credible leader sequence {(block)}.
\end{definition}
{Note} that even if  all proposer blocks of height $h$ {have received} zero vote, {a} credible leader block of height $h$ can still exist according to the tie breaking rule. Moreover, {as} a $t$-credible leader sequence is defined with respect to a collection of $t$-credible voter blockchains,  in general there {can be} multiple $t$-credible leader sequences.
\begin{definition} \label{def: c observable}
{(Observable) Suppose block $b$ and block $d$ are published at time $t$ and time $s$, respectively. If $s\le t-\Delta$, we say block $d$ is {\em observable} from block $b$. If $s\ge t$, we say block $d$ is not observable from block $b$. If $t-\Delta<s<t$, block $d$ may or may not be observable from block $b$.}
\end{definition}

{Every} $t$-credible leader sequence {determines} a {transaction} ledger at time $t$.
According to the Prism protocol, as part of its content, an honest proposer block $b$ includes a reference link to every  proposer and voter block {that is observable from it and}  has not been pointed to by other reference links.
\begin{definition} \label{def: c reachable}
({Reachable}) By saying block $d$ is reachable from block $b$ (or block $b$ reaches block $d$), we mean block $b$ {points} to block $d$ by a sequence of reference links.
\end{definition}
Given a credible leader sequence $({0}, b_1, \ldots, b_n)$, each {credible} leader block $b_h$ defines an epoch. Added to the ledger are the blocks which are pointed to by $b_h$, as well as other blocks reachable from $b_h$ but have not been included in previous epochs.
The list of blocks are sorted topologically, with ties broken by their contents. Since the blocks referenced are mined independently, there can be double spends or redundant transactions. A {transaction} ledger {is created}  by keeping only the first transaction among double spends or redundant transactions.

{Based on the preceding definitions, we have the following properties.}
\begin{lemma}\label{lemma: cj honest block append to honest blockchain}
For $j\in \{0,\ldots, m\}$, if $j$-block $k$ is honest, then both $j$-blockchain $\fj{k}$ and $j$-blockchain $k$ must be {$T^j_k$}-credible.
\end{lemma}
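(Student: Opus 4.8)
The plan is to mirror the proof of Lemma~\ref{lemma: honest block append to honest blockchain} essentially verbatim, exploiting the fact that the paper has already observed that the definition of a (credible) $j$-blockchain (Definition~\ref{def: cj credible voter blockchain}) is identical to that of a (credible) bitcoin blockchain (Definition~\ref{def: c longest blockchain}). Since each $j$-blockchain independently follows the bitcoin protocol, every structural fact used in the bitcoin case carries over by simply attaching the superscript $j$ to heights, parent pointers, mining times, and publication times. So I expect no genuinely new argument is required; the work is to instantiate the earlier reasoning in the Prism notation.

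First I would invoke the protocol rule stated just above Definition~\ref{def: R_h} (and inherited from the bitcoin case): an honest voter/proposer $j$-block extends a credible $j$-blockchain and is published as soon as it is mined, so that $P^j_k = T^j_k$. By this rule, $j$-block $k$ extends a $T^j_k$-credible $j$-blockchain, which is precisely $j$-blockchain $\fj{k}$; hence $j$-blockchain $\fj{k}$ is $T^j_k$-credible.

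Next I would verify the two defining conditions of $T^j_k$-credibility for $j$-blockchain $k$ itself. The publication condition holds because $P^j_k = T^j_k \le T^j_k$. For the height condition, note that $\hj{k} = \hj{\fj{k}} + 1$. Since $j$-blockchain $\fj{k}$ is $T^j_k$-credible, its height satisfies $\hj{\fj{k}} \ge \hj{\ell}$ for every $\ell$ with $P^j_\ell \le T^j_k - \Delta$ by \eqref{equ: cj height of longest blockchain}. Therefore $\hj{k} = \hj{\fj{k}} + 1 > \hj{\ell}$ for all such $\ell$, so $j$-blockchain $k$ meets the height requirement in \eqref{equ: cj height of longest blockchain} as well, and is $T^j_k$-credible.

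The statement has no substantive obstacle: the only point that must be checked carefully is the first one, namely that the honest $j$-block genuinely extends a $T^j_k$-credible $j$-blockchain. This is guaranteed by the Prism protocol's stipulation that honest miners apply the bitcoin longest-chain rule independently on each of the $m+1$ blockchains, together with the explicit remark in the text that all bitcoin-blockchain properties remain valid on every $j$-blockchain. Given this, the remaining steps are purely a matter of unwinding the definitions.
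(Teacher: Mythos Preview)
Your proposal is correct and follows essentially the same approach as the paper, which simply states that the lemma admits the same proof as Lemma~\ref{lemma: honest block append to honest blockchain}; you have merely spelled out the details in the $j$-notation.
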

\begin{proof}
For $j= 0,\ldots, m$, the lemma admits essentially the same proof as that for Lemma \ref{lemma: honest block append to honest blockchain}.
\end{proof}

\begin{definition}\label{def: j lagger}
($j$-lagger and $j$-loner) For $j\in\{0,\ldots,m\}$, an honest $j$-block {$k$} is called a {\em $j$-lagger} if it is the only honest $j$-block mined during {$[{T^j_k-\Delta, T^j_k}]$}.
The $j$-lagger is also called a {\em $j$-loner} if it is also the only honest $j$-block mined during $[{T^j_k, T^j_k}+\Delta]$.
\end{definition}

Suppose $0\le s < t$ and $j\in \{0,\ldots, m\}$. Let $\Nj{s}{t}= N^j_{t}-N^j_s$ denote the total number of honest $j$-blocks mined during time interval $(s,t]$. Let $\Xj{s}{t}$ denote the total number of $j$-laggers mined during  $(s, t]$. Let $\Yj{s}{t}$ denote the total number of $j$-loners mined during $(s, t]$. Let $\Zj{s}{t}$ denote the total number of adversarial $j$-blocks mined during $(s, t]$. By convention, $\Nj{s}{t} = \Xj{s}{t} = \Yj{s}{t} = \Zj{s}{t} = 0$ for all $s \ge t$.

{\subsection{Analysis of the Prism backbone protocol}}

\begin{definition}
For all non-negative real numbers $0\le s < t$, $0<\delta<\frac{1}{2}$, and integer $0\le j \le m$, define
\begin{align}
    \Ej{s}{t} = \Eaj{s}{t} \cap \Ebj{s}{t} \cap \Ecj{s}{t} \cap \Edj{s}{t}
\end{align}
where
\begin{align}
    \Eaj{s}{t} & =  \left\{(1-\delta)(t-s)\alpha < \Nj{s}{t} < (1+\delta)(t-s)\alpha \right\} \label{cj equ: E1} \\
    \Ebj{s}{t} & =  \left\{(1-\delta)(t-s)g\alpha < \Xj{s}{t} \right\}\label{cj equ: E2} \\
    \Ecj{s}{t} &  = \left\{(1-\delta)(t-s)g^2\alpha < \Yj{s}{t} \right\}\label{cj equ: E3} \\
    \Edj{s}{t} & = \left\{\Zj{s}{t} <(t-s)\beta + (t-s)g^2\alpha\delta\right\}. \label{cj equ: E4}
\end{align}
\end{definition}

\begin{lemma}\label{lemma: cj prob of good events}
For all $0<\delta<\frac{1}{2}$, $0\le s < t$, and $j\in \{0,\ldots,m\}$, we have
\begin{align}
    P\left(\Ej{s}{t}\right)  >1 - 9e^{-\frac{1}{24}\eta(t-s)}.
\end{align}
\end{lemma}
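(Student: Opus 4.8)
The plan is to reduce this statement to the bitcoin backbone analysis already completed, observing that for each fixed $j$ the $j$-block system is a verbatim copy of the bitcoin block system. First I would record the three structural facts that drive the reduction: (i) by the model assumption, the sortition scheme makes $(N^j_t, t\ge 0)$ a homogeneous Poisson point process of rate $\alpha$, exactly like $(N_t)$; (ii) the adversarial $j$-block count obeys the domination bound \eqref{equ: cj Z bound}, which is identical in form to \eqref{equ: c Z bound}; and (iii) the definitions of $j$-lagger, $j$-loner, the counting variables $\Nj{s}{t},\Xj{s}{t},\Yj{s}{t},\Zj{s}{t}$, and the good event $\Ej{s}{t}$ in \eqref{cj equ: E1}--\eqref{cj equ: E4} are the $j$-superscript transcriptions of their bitcoin counterparts. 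Consequently every probabilistic lemma proved for the bitcoin system applies word for word to the $j$-system, and in particular $\eta=\delta^2g^2\alpha$ is unchanged.

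Concretely, I would establish the $j$-analogs of Lemmas \ref{lemma: c prob Ea}--\ref{lemma: c prob Ed}. The analog of Lemma \ref{lemma: c prob Ea} follows because $\Nj{s}{t}$ is Poisson with mean $\alpha(t-s)$, so Lemma \ref{lemma: Poisson property} applies directly. For the analogs of Lemmas \ref{lemma: c prob Eb} and \ref{lemma: c prob Ec}, the indicators $\U{i}$ and $\V{i}$ are replaced by their $j$-versions; since the inter-arrival times of $(N^j_t)$ are i.i.d.\ exponential with rate $\alpha$, the $j$-lagger indicators are i.i.d.\ Bernoulli$(g)$ and the $j$-loner indicators factor as products of consecutive lagger indicators, so Lemmas \ref{lemma: Ui bernoulli} and \ref{lemma: Vi bernolli} hold verbatim and the integer-rounding/Chernoff argument via Lemma \ref{lemma: c exist int} and Proposition \ref{c prop: Chernoff bound} goes through unchanged. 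The analog of Lemma \ref{lemma: c prob Ed} uses only \eqref{equ: cj Z bound} together with the moment-generating-function bound, again reproducing the bitcoin computation. Combining the four bounds with the union bound over their complements, exactly as in Lemma \ref{lemma: c prob of good events}, yields $P(\Ej{s}{t}) > 1 - 9e^{-\frac{1}{24}\eta(t-s)}$.

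There is essentially no hard step: the entire content is that sortition renders each $(N^j_t)$ a rate-$\alpha$ Poisson process subject to the same adversarial domination bound, so no new estimate is required and no use of independence across different $j$ is needed. The one point worth flagging concerns the admissible range of $s$ and $t$. The bitcoin component lemmas (and hence Lemma \ref{lemma: c prob of good events}) require $t-s>\frac{80}{\delta}$ for the integer-rounding step of Lemma \ref{lemma: c exist int}, so strictly the argument above delivers the bound on that range; for $t-s\le\frac{80}{\delta}$ one should either observe that the right-hand side $1-9e^{-\frac{1}{24}\eta(t-s)}$ is already negative (rendering the claim vacuous) or read the same $\frac{80}{\delta}$ restriction into the hypothesis, matching the bitcoin statement.
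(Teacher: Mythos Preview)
Your proposal is correct and matches the paper's own proof, which simply states that for each $j$ the lemma admits essentially the same proof as Lemma~\ref{lemma: c prob of good events}. Your observation about the range condition is well taken: the bitcoin version carries the hypothesis $0\le s<t-\frac{80}{\delta}$ (needed for Lemma~\ref{lemma: c exist int}), while the Prism statement as written says only $0\le s<t$; this appears to be a minor inconsistency in the paper, and your suggestion to read in the same restriction is the right fix, consistent with how the lemma is actually invoked downstream (e.g., in Lemma~\ref{lemma: cj prob of typical event}).
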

\begin{proof}
For $j= 0,\ldots, m$, the lemma admits essentially the same proof as that for Lemma \ref{lemma: c prob of good events}.
\end{proof}

\begin{definition}
For $j\in \{0,\ldots,m\}$, define
\begin{align}
    \Gj{s}{t} = \bigcap_{a\in [0,s], b \in [t,\infty)} \Ej{a}{b}. \label{equ: cj def typical}
\end{align}
\end{definition}

\begin{lemma}\label{lemma: cj prob of typical event}
For all real numbers $0\le s < t-\coeff$ and $j\in \{0,\ldots,m\}$,
\begin{align}
    P\left(\Gj{s}{t}\right) > 1-  \mu  e^{-\frac{1}{{27}}\eta(t-s)}.
\end{align}
\end{lemma}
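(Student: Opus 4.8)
The plan is to follow the proof of Lemma~\ref{lemma: c prob of typical event} verbatim for each fixed $j$, exploiting the fact that the $j$-blockchain model is structurally identical to the bitcoin backbone model. Concretely, for each fixed $j \in \{0,\ldots,m\}$, the process $(N^j_t, t\ge 0)$ is a homogeneous Poisson point process of rate $\alpha$, and the adversarial $j$-blocks are dominated by a Poisson process of rate $\beta$ in the sense of \eqref{equ: cj Z bound}; these are exactly the single-chain hypotheses used throughout the bitcoin backbone analysis. Consequently every supporting lemma invoked to bound $P(\G{s}{t})$ has a verbatim $j$-analog with identical constants, and in particular $\eta$ and $\mu$ (defined in \eqref{def: c eta} and \eqref{def: c mu}) are unchanged because they depend only on $\alpha$, $\beta$, $\delta$, and $g$.

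First I would introduce the countable-intersection event $\Jjth{s}{t}$, the $j$-analog of $\Jth{s}{t}$, and establish the pathwise containment $\Jjth{s}{t} \subset \Gj{s}{t}$ (the $j$-analog of Lemma~\ref{lemma: c J in G}). This step is purely a deterministic monotonicity argument about the counts $\Nj{a}{b}$, $\Xj{a}{b}$, $\Yj{a}{b}$, and $\Zj{a}{b}$ within the single chain $j$: rounding $a$ up or down and $b$ down or up controls each count, and the slack $b-a > \coeff$ absorbs the rounding loss. Because this argument never couples distinct chains, it transfers directly from the proof of Lemma~\ref{lemma: c J in G} with every symbol carrying a superscript $j$.

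Next I would combine this containment with the single-chain good-event bound already recorded in Lemma~\ref{lemma: cj prob of good events}. Taking complements and applying the union bound over the countable index set $k\in\{0,\ldots,\lceil s\rceil\}$ and $\ell\in\{\lfloor t\rfloor,\lfloor t\rfloor+1,\ldots\}$, each term is at most $9e^{-\frac{1}{24}(\coef)^2 g^2\alpha(\ell-k)}$. Since $\frac{1}{24}(\frac{19}{20})^2 > \frac{1}{27}$, every term is further bounded by $9e^{-\frac{1}{27}\eta(\ell-k)}$, and the double sum factors into a product of two geometric series. Summing and collapsing exactly as in the bitcoin proof yields $P\big((\Gj{s}{t})^c\big) < \mu e^{-\frac{1}{27}\eta(t-s)}$, which is the claim.

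The hard part is essentially bookkeeping rather than a genuine obstacle: one must confirm that fixing $j$ truly decouples the analysis, so that independence across the $m+1$ chains is never needed (we only ever work with one chain at a time) and so that the per-chain rates match the bitcoin parameters. Once this is verified, the result is immediate, and the proof may be stated simply as ``for $j=0,\ldots,m$, the lemma admits essentially the same proof as that for Lemma~\ref{lemma: c prob of typical event}.''
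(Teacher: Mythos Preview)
Your proposal is correct and takes exactly the same approach as the paper: the paper's entire proof is the single sentence ``For $j= 0,\ldots, m$, the lemma admits essentially the same proof as that for Lemma~\ref{lemma: c prob of typical event},'' which is precisely what you anticipated in your final paragraph. Your detailed unpacking of why this transfer is legitimate (the per-chain Poisson hypotheses, the purely deterministic containment $\Jjth{s}{t}\subset\Gj{s}{t}$, and the unchanged constants $\eta,\mu$) is accurate and more explicit than what the paper records.
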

\begin{proof}
For $j= 0,\ldots, m$, the lemma admits essentially the same proof as that for Lemma \ref{lemma: c prob of typical event}.
\end{proof}



\begin{lemma} \label{lemma: cj lagger diff height}
For $j\in\{0,\ldots, m\}$, all $j$-laggers have different heights.
\end{lemma}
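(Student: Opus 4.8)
The plan is to argue by contradiction, mirroring the proof of Lemma~\ref{lemma: c lagger diff height} (the corresponding statement for the bitcoin blockchain). The guiding observation is that each $j$-blockchain is structurally identical to a bitcoin blockchain: the validity rules, the convention that honest blocks are published immediately, and the notion of $t$-credibility (Definition~\ref{def: cj credible voter blockchain}) are all obtained from their bitcoin counterparts simply by superscripting every object with $j$. Hence every structural lemma proved for bitcoin blockchains transfers verbatim, in particular the honest-append Lemma~\ref{lemma: cj honest block append to honest blockchain} and the $j$-analogue of the height-consensus Lemma~\ref{lemma: height concensus}.

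First I would suppose, toward a contradiction, that two $j$-laggers, say $j$-block $b$ and $j$-block $d$ with $T^j_d \ge T^j_b$, share the same height $k$. Because $j$-block $d$ is a $j$-lagger, Definition~\ref{def: j lagger} says no other honest $j$-block is mined during $[T^j_d-\Delta, T^j_d]$; since $j$-block $b$ is honest and distinct from $d$, this forces $T^j_d > T^j_b + \Delta$. Next, by Lemma~\ref{lemma: cj honest block append to honest blockchain}, since $j$-block $d$ is honest its parent $j$-blockchain $\fj{d}$ is $T^j_d$-credible, so its height is no smaller than that of any $j$-blockchain published by time $T^j_d - \Delta$. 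As $j$-block $b$ is honest, $j$-blockchain $b$ (of height $k$) is published at time $T^j_b \le T^j_d - \Delta$; therefore $\hj{\fj{d}} \ge k$ and consequently $\hj{d} \ge k+1$, contradicting $\hj{d} = k$.

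The argument contains no genuine obstacle---the only point to verify is that the height-consensus property (Lemma~\ref{lemma: height concensus}) carries over to $j$-blockchains, which is immediate because that lemma follows directly from the definition of $t$-credibility, and Definition~\ref{def: cj credible voter blockchain} is word-for-word the bitcoin definition with a $j$ superscript. Accordingly, the whole proof collapses to the phrase already used throughout this section: for $j = 0,\ldots,m$, the lemma admits essentially the same proof as that for Lemma~\ref{lemma: c lagger diff height}.
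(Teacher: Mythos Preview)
Your proposal is correct and follows essentially the same contradiction argument as the paper's proof of Lemma~\ref{lemma: c lagger diff height}, and indeed the paper itself dispatches this lemma with exactly the one-line reduction you arrive at. The only cosmetic difference is that the paper applies height consensus to blockchain $b$ at time $T_b+\Delta$, whereas you apply credibility directly to the parent of $d$ at time $T^j_d$; both routes yield $\hj{d}\ge k+1$ immediately.
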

\begin{proof}
For $j= 0,\ldots, m$, the lemma admits essentially the same proof as that for Lemma \ref{lemma: c lagger diff height}.
\end{proof}

\begin{lemma} \label{lemma: cj unique kth block}
For $j\in\{0,\ldots, m\}$, a $j$-loner is the only honest $j$-block at its height.
\end{lemma}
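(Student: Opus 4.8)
The plan is to mirror the proof of Lemma~\ref{lemma: c unique kth block} exactly, transposing each step from bitcoin blocks to $j$-blocks. Because the $j$-blockchain model (the mining process, block validity, height, and $t$-credibility) is defined identically to the bitcoin blockchain model, every structural property carries over verbatim. In particular, the height-consensus property (the $j$-block analog of Lemma~\ref{lemma: height concensus}) and Lemma~\ref{lemma: cj honest block append to honest blockchain} hold for $j$-blockchains, and these are the only two facts the argument needs.

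First I would suppose $j$-block $b$ is a $j$-loner mined at time $T^j_b = t$. By Definition~\ref{def: j lagger}, no honest $j$-block other than $b$ is mined during $[t-\Delta, t+\Delta]$. By Lemma~\ref{lemma: cj honest block append to honest blockchain}, $j$-blockchain $b$ is $t$-credible, so its height is at least the height of every $j$-blockchain published by time $t-\Delta$. I would then rule out competing honest $j$-blocks at height $\hj{b}$ on each temporal side of $b$. For any honest $j$-block mined after $t+\Delta$: since $j$-blockchain $b$ is $t$-credible, the $j$-block height-consensus property forces that block's height to be at least $\hj{b}+1$. For any honest $j$-block mined before $t-\Delta$: if it had height $\hj{b}$ or greater, then $j$-blockchain $b$ (being $t$-credible and published by time $t$) would be compelled to have height strictly exceeding $\hj{b}$, a contradiction; hence every such block has height strictly below $\hj{b}$. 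Combining these two observations with the fact that no honest $j$-block other than $b$ is mined in $[t-\Delta, t+\Delta]$ establishes that $b$ is the unique honest $j$-block at its height.

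The main (and only minor) obstacle is bookkeeping rather than mathematics: I must invoke the height-consensus statement in its $j$-block form, and handle the credibility window $[t-\Delta, t+\Delta]$ with the correct inequalities at its two endpoints so that the ``before'' and ``after'' cases together cover all heights coinciding with $\hj{b}$. No new probabilistic reasoning enters — the argument is entirely deterministic given the credibility definitions — so the transposition is routine, and the lemma follows by precisely the same logic as Lemma~\ref{lemma: c unique kth block}.
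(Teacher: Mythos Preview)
Your proposal is correct and takes essentially the same approach as the paper, which simply states that the lemma ``admits essentially the same proof as that for Lemma~\ref{lemma: c unique kth block}.'' You have accurately spelled out that transposition, invoking the $j$-block analogs of credibility (Lemma~\ref{lemma: cj honest block append to honest blockchain}) and height consensus exactly as in the bitcoin case.
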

\begin{proof}
For $j= 0,\ldots, m$, the lemma admits essentially the same proof as that for Lemma \ref{lemma: c unique kth block}.
\end{proof}

\begin{lemma} \label{lemma: cj height concensus}
For $j\in \{0,\ldots, m\}$, if a $t$-credible $j$-blockchain has height $h$, then the heights of all $(t+\Delta)$-credible $j$-blockchains are at least $h$.
\end{lemma}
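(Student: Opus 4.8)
The plan is to recognize that this is the $j$-blockchain analog of Lemma \ref{lemma: height concensus}, whose proof was immediate from the definition of credibility, and to carry the identical argument over. The entire content lies in unfolding Definition \ref{def: cj credible voter blockchain} and tracking the $\Delta$ offset; no probabilistic machinery is needed.

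First I would fix a $t$-credible $j$-blockchain $b$ with $\hj{b}=h$. By Definition \ref{def: cj credible voter blockchain}, this means in particular that $P^j_b \le t$. Next I would take an arbitrary $(t+\Delta)$-credible $j$-blockchain $b'$ and apply the second condition of Definition \ref{def: cj credible voter blockchain} at time $t+\Delta$: every $j$-blockchain $k$ published by time $(t+\Delta)-\Delta = t$ satisfies $\hj{b'} \ge \hj{k}$. Since $b$ is itself published by time $t$, instantiating $k=b$ yields $\hj{b'} \ge \hj{b} = h$. Because $b'$ was an arbitrary $(t+\Delta)$-credible $j$-blockchain, all such blockchains have height at least $h$, which is the claim.

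There is no real obstacle here; the statement is a direct consequence of the definition, exactly as in the bitcoin case. The only point requiring care is the bookkeeping of the $\Delta$ offset: the comparison window for a $(t+\Delta)$-credible blockchain reaches back precisely to time $t$, which is exactly the publication deadline enjoyed by the height-$h$ blockchain, so the two thresholds coincide and the instantiation $k=b$ is legitimate. Consistent with the paper's established style for the Prism lemmas, one could alternatively state that for each $j \in \{0,\ldots,m\}$ the lemma admits essentially the same proof as that for Lemma \ref{lemma: height concensus}.
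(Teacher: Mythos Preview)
Your proposal is correct and follows essentially the same approach as the paper, which simply states that the lemma is obvious by Definition~\ref{def: cj credible voter blockchain}. You have merely spelled out the one-line unfolding of that definition, and your final remark about invoking Lemma~\ref{lemma: height concensus} matches the style the paper uses for the other Prism lemmas.
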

\begin{proof}
Lemma \ref{lemma: cj height concensus} is obvious by the Definition \ref{def: cj credible voter blockchain}.
\end{proof}

\begin{lemma} \label{lemma: cj growth}
Suppose integer $k$, $j$ and real number $t$ satisfy $t\ge \frac{k}{2\alpha}$ and $j \in \{0,\ldots, m\}$. Under event $\Ej{t-\frac{k}{2\alpha}}{t}$, the  $(k-1)$-deep  prefix of every $j$-blockchain mined by time $t$ must be mined no later than time $t-\frac{k}{2\alpha}$.
\end{lemma}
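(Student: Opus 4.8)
The plan is to observe that the $j$-blockchain system is, by construction, an exact replica of the single bitcoin blockchain system analyzed in the previous section. Each honest $j$-mining process $(N^j_t, t\ge 0)$ is a homogeneous Poisson point process of rate $\alpha$, the adversarial $j$-block count obeys the domination bound \eqref{equ: cj Z bound}, which is identical in form to \eqref{equ: c Z bound}, and the good event $\Ej{s}{t}$ is defined by the very same inequalities \eqref{cj equ: E1}--\eqref{cj equ: E4} as $\E{s}{t}$ is by \eqref{c equ: E1}--\eqref{c equ: E4}. Consequently the entire argument of Lemma \ref{lemma: c growth} transfers after decorating every quantity with the superscript $j$; no step of that proof invokes any property special to the single-chain setting.

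Concretely, I would fix $j\in\{0,\ldots,m\}$ and condition on $\Ej{t-\frac{k}{2\alpha}}{t}$. The total number of $j$-blocks (honest plus adversarial) mined during the window $(t-\frac{k}{2\alpha},t]$ is $\Nj{t-\frac{k}{2\alpha}}{t}+\Zj{t-\frac{k}{2\alpha}}{t}$. Bounding the honest count by $(1+\delta)\alpha\frac{k}{2\alpha}$ via \eqref{cj equ: E1} and the adversarial count by $\beta\frac{k}{2\alpha}+\delta g^2\alpha\frac{k}{2\alpha}$ via \eqref{cj equ: E4}, the same chain of inequalities used in the proof of Lemma \ref{lemma: c growth} applies: invoking the parameter assumption \eqref{equ: alpha beta} together with $g<1$ shows this total is strictly less than $k$.

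Hence at most $k-1$ $j$-blocks are mined during $(t-\frac{k}{2\alpha},t]$, so any $j$-blockchain mined by time $t$ can gain at most $k-1$ blocks over that window, forcing its $(k-1)$-deep prefix to have been mined no later than $t-\frac{k}{2\alpha}$, which is the claim. There is no genuine obstacle here: the only thing to verify is that the structural correspondence between $j$-blockchains and bitcoin blockchains is tight enough that the original derivation carries over unchanged. Since the model definitions and the governing inequalities \eqref{cj equ: E1}--\eqref{cj equ: E4} are literal copies of their bitcoin counterparts, this is immediate, and the lemma follows by the same reasoning as Lemma \ref{lemma: c growth}.
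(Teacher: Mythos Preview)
Your proposal is correct and mirrors the paper's own approach exactly: the paper's proof is the one-line remark that the lemma admits essentially the same proof as Lemma~\ref{lemma: c growth}, and you have simply spelled out that transfer in detail. Nothing is missing or different.
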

\begin{proof}
For $j= 0,\ldots, m$, the lemma admits essentially the same proof as that for Lemma \ref{lemma: c growth}.
\end{proof}

\begin{theorem} \label{thm: cj growth}
{(Prism blockchain growth theorem)}
Suppose real numbers $s$, $t$ and integer $j$ satisfy $0\le s < t-\coeft$ and $j\in\{0, \ldots, m\}$. Under event $\Ej{s+\Delta}{t-\Delta}$,
the height of every $t$-credible $j$-blockchain is at least $\growth g\alpha (t-s)$ larger than the maximum height of all $s$-credible $j$-blockchains. As a consequence, the probability that some $t$-credible $j$-blockchain is less than $\growth g\alpha(t-s)$ higher than some $s$-credible $j$-blockchain does not exceed $9e^{-\frac{1}{24}\eta(t-s)}$.
\end{theorem}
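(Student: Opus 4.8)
The plan is to replicate the proof of the bitcoin blockchain growth theorem (Theorem \ref{thm: c growth}) essentially line for line, substituting each bitcoin ingredient by its $j$-blockchain counterpart. This works because, for each fixed $j$, the sortition assumption makes $(N^j_t, t\ge 0)$ a homogeneous Poisson process of rate $\alpha$ and endows the adversarial $j$-blocks with the domination bound \eqref{equ: cj Z bound}; consequently the good event $\Ej{s}{t}$ serves exactly the role that $\E{s}{t}$ does in the bitcoin analysis, and Lemmas \ref{lemma: cj height concensus}, \ref{lemma: cj lagger diff height}, and \ref{lemma: cj prob of good events} are the $j$-analogues of the three bitcoin lemmas invoked in Theorem \ref{thm: c growth}.

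Concretely, I would fix $j$, let $\ell$ be the maximum height among all $s$-credible $j$-blockchains, and apply Lemma \ref{lemma: cj height concensus} to conclude that every $(s+\Delta)$-credible $j$-blockchain already has height at least $\ell$. Working under the event $\Ej{s+\Delta}{t-\Delta}$, I would then lower-bound the number of $j$-laggers produced during the window $(s+\Delta, t-\Delta]$ via \eqref{cj equ: E2}. Since the hypothesis $t-s > \coeft$ forces $\frac{t-s-2\Delta}{t-s} > 1-\coefff$, this yields
\begin{align}
\Xj{s+\Delta}{t-\Delta} > (1-\delta)\left(1-\coefff\right)g\alpha(t-s) > \growth g\alpha(t-s). \nonumber
\end{align}
By Lemma \ref{lemma: cj lagger diff height} these $j$-laggers occupy distinct heights, so at least one of them attains height $\ell + \growth g\alpha(t-s)$ no later than time $t-\Delta$; being honest, it is published as soon as it is mined, and therefore its height lower-bounds the height of every $t$-credible $j$-blockchain, establishing the deterministic claim.

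The probabilistic consequence then follows by applying Lemma \ref{lemma: cj prob of good events} to bound $P\left((\Ej{s+\Delta}{t-\Delta})^c\right)$. I do not anticipate a genuine obstacle here: the entire difficulty was already absorbed into proving the $j$-indexed lemmas above. The only point meriting care is confirming that the sortition-induced independence and Poisson structure of each $(N^j_t)$ legitimately transfers every step of the bitcoin argument to the $j$-blockchain setting—which is precisely what the ``essentially the same proof'' reductions in the preceding lemmas certify. Accordingly, I expect the write-up to consist either of a brief restatement of the above four steps with the bitcoin symbols replaced by their $j$-decorated versions, or simply the one-line observation that Theorem \ref{thm: c growth} applies verbatim to each $j$-blockchain.
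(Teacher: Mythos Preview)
Your proposal is correct and mirrors the paper's own proof exactly: the paper simply states that the theorem ``admits essentially the same proof as that for Theorem \ref{thm: c growth},'' and the four-step outline you give (Lemma \ref{lemma: cj height concensus} to anchor the height at $\ell$, the lagger count from \eqref{cj equ: E2} combined with $t-s>\coeft$, distinct lagger heights via Lemma \ref{lemma: cj lagger diff height}, and the tail bound from Lemma \ref{lemma: cj prob of good events}) is precisely the bitcoin argument with $j$-decorations.
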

\begin{proof}
For $j= 0,\ldots, m$, the theorem admits essentially the same proof as that for Theorem \ref{thm: c growth}.
\end{proof}

\begin{theorem}\label{thm: cj quality}
{(Prism blockchain quality theorem)}
Suppose positive integer $k$, $j$ and real number $t$ satisfy $k \ge \coefk$, $j\in \{0,\ldots, m\}$ and $t \ge \frac{k}{\growth g\alpha}$.
Under event $\Gj{t-\frac{k}{2\alpha}+\Delta}{t-\Delta}$, at least $\left(1-g\right)k$ of the last $k$ blocks of every $t$-credible $j$-blockchain are honest. As a consequence, the probability that more than  $gk$ of the last $k$ blocks of some $t$-credible $j$-blockchain are adversarial does not exceed $\mu {e^{-\frac{\eta}{{27}}\left(\frac{k}{2\alpha}-2\Delta\right)}}$.
\end{theorem}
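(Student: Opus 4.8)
The plan is to transfer the bitcoin blockchain quality proof (Theorem~\ref{thm: c quality}) essentially verbatim to the $j$-blockchain setting, after confirming that every ingredient it invokes has already been re-established with a superscript~$j$. The crucial observation is structural: the Prism sortition scheme makes each $(N^j_t,\, t\ge 0)$ a homogeneous Poisson process of exactly the rate $\alpha$ used in the bitcoin analysis, and the adversarial $j$-block count obeys the identical domination bound~\eqref{equ: cj Z bound} with the same $\beta$. Consequently the good event $\Ej{s}{t}$, the typical event $\Gj{s}{t}$, and their probability guarantees (Lemmas~\ref{lemma: cj prob of good events} and~\ref{lemma: cj prob of typical event}) coincide in form with their bitcoin counterparts, and the governing parameter inequality~\eqref{equ: alpha beta} among $\alpha$, $\beta$, $g$, and $\delta$ is unchanged.

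First I would fix a $t$-credible $j$-blockchain $d$ and argue $\hj{d}\ge k$ under $\Gj{t-\frac{k}{2\alpha}+\Delta}{t-\Delta}$; this is the $j$-analog of Lemma~\ref{lemma: c t-credible height}, which follows immediately from the Prism growth theorem (Theorem~\ref{thm: cj growth}) once one notes that the typical event implies the relevant single good event. Let block $b$ be the $k$-deep block of $j$-blockchain $d$, let $e$ be the highest honest $j$-block at or below $b$ on $j$-blockchain $b$, and set $s=T^j_e$. Exactly as in the bitcoin proof, all blocks strictly above $e$ and strictly below $b$ are adversarial, so writing $z$ for the adversarial $j$-blocks between $b$ and $d$ inclusive and $y=\hj{b}-\hj{e}-1$, the total $z+y$ of adversarial $j$-blocks mined during $(s,t]$ is bounded by $\Zj{s}{t}<\beta(t-s)+\delta g^2\alpha(t-s)$ via~\eqref{cj equ: E4}.

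Next I would invoke the Prism growth theorem again to lower-bound the height gap between the $s$-credible blockchain $e$ (credible by Lemma~\ref{lemma: cj honest block append to honest blockchain}) and the $t$-credible blockchain $d$, yielding $k+y\ge\growth g\alpha(t-s)$. Combining the upper and lower bounds precisely as in~\eqref{c 4.50}--\eqref{c 4.55}, and using~\eqref{equ: alpha beta}, gives $z/k\le (z+y)/(k+y)<g$, so fewer than $gk$ of the last $k$ blocks of $d$ are adversarial, i.e. at least $(1-g)k$ are honest. The asserted failure probability $\mu e^{-\frac{\eta}{27}\left(\frac{k}{2\alpha}-2\Delta\right)}$ then follows by applying Lemma~\ref{lemma: cj prob of typical event} to $\Gj{t-\frac{k}{2\alpha}+\Delta}{t-\Delta}$.

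I expect the only genuine subtlety---rather than the routine arithmetic---to be checking that nothing in the bitcoin argument secretly relied on a property special to the single-chain world. The one place to be careful is the per-$j$ mining statistics: the sortition must truly deliver a rate-$\alpha$ honest Poisson process and the adversarial domination~\eqref{equ: cj Z bound} on each individual $j$-blockchain, so that the ingredient lemmas apply to the chosen $j$ \emph{in isolation} with no cross-chain coupling exploited. Since these marginal bounds hold for every $j$ regardless of how the adversary coordinates across chains, once this is confirmed the argument is literally the bitcoin proof with each symbol decorated by $j$, which is precisely why it admits essentially the same proof as Theorem~\ref{thm: c quality}.
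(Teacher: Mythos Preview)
Your proposal is correct and follows exactly the approach the paper intends: the paper's own proof of Theorem~\ref{thm: cj quality} is simply the one-line statement that it ``admits essentially the same proof as that for Theorem~\ref{thm: c quality},'' and you have faithfully spelled out that transfer, invoking the $j$-indexed analogs (Lemma~\ref{lemma: cj honest block append to honest blockchain}, Theorem~\ref{thm: cj growth}, Lemma~\ref{lemma: cj prob of typical event}) exactly where their bitcoin counterparts appear. One small wording slip: you should take $e$ to be the highest honest $j$-block \emph{strictly before} $b$ on $j$-blockchain $b$ (as in the bitcoin proof), not ``at or below $b$,'' so that $y=\hj{b}-\hj{e}-1\ge 0$ is well-defined; otherwise the argument is verbatim.
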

\begin{proof}
For $j= 0,\ldots, m$, the theorem admits essentially the same proof as that for Theorem \ref{thm: c quality}.
\end{proof}

\begin{definition}\label{def: cj permanent}
For $j\in \{1,\ldots, m\}$, a $j$-block or a sequence (of $j$-blocks) is said to be {\em permanent after time $t$} if the $j$-block or sequence remains in all $r$-credible $j$-blockchains with $r\ge t$.
\end{definition}

\begin{theorem} \label{thm: cj common prefix}
{(Prism common prefix theorem)}
Suppose positive integer $k$, $j$ and real number $t$ satisfy $k \ge \frac{160\alpha(1+\Delta)}{\delta}$ and $t\ge \frac{k}{\growth g\alpha}$ and $j\in \{1,\ldots, m\}$. Under event $\Gj{t-\frac{k}{2\alpha}+\Delta}{t-\Delta}$, the $(k-1)$-deep prefix of a $t$-credible $j$-blockchain must be extended by all {$r$-credible $j$-blockchains with $r\ge t$}.
As a consequence, with probability at least  $1-\mu {e^{-\frac{\eta}{{27}}\left(\frac{k}{2\alpha}-2\Delta\right)}}$, the $(k-1)$-deep prefix is permanent after time $t$.
\end{theorem}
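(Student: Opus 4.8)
The plan is to reduce Theorem~\ref{thm: cj common prefix} to the bitcoin common prefix theorem (Theorem~\ref{thm: c common prefix}) by observing that, for each fixed $j\in\{1,\ldots,m\}$, the $j$-blockchain system is probabilistically and structurally identical to the single bitcoin blockchain analyzed earlier. Indeed, the honest $j$-block process $(N^j_t)$ is a homogeneous Poisson process of rate $\alpha$, the adversarial $j$-blocks obey the same domination bound~\eqref{equ: cj Z bound}, and the notions of $t$-credible $j$-blockchain (Definition~\ref{def: cj credible voter blockchain}), $j$-lagger, and $j$-loner are verbatim copies of their bitcoin counterparts. Crucially, none of the voting or leader-sequence machinery of Prism enters a statement about $j$-blockchains alone, so every lemma used to prove Theorem~\ref{thm: c common prefix} should transfer to level $j$ with an unchanged proof.

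First I would record the $j$-analogues of the three supporting results that the bitcoin proof invokes but that were not restated in the Prism section: the common-prefix lemma (Lemma~\ref{lemma: c common prefix lemma}) and the two counting lemmas $\Yj{}{}>\Zj{}{}$ and $\Yj{}{}\le\Zj{}{}$ (Lemmas~\ref{lemma: c Y > Z} and~\ref{lemma: c Y < Z}). The first two rely only on the good-event bounds~\eqref{cj equ: E3}--\eqref{cj equ: E4}, the growth lemma (Lemma~\ref{lemma: cj growth}), and the parameter inequality~\eqref{equ: alpha beta}, all of which hold per-$j$; the reverse inequality additionally uses a height/divergence argument, which carries over because a $j$-loner is the unique honest $j$-block at its height (Lemma~\ref{lemma: cj unique kth block}) and heights on credible $j$-blockchains are monotone in time (Lemma~\ref{lemma: cj height concensus}).

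Then I would replay the contradiction argument of Theorem~\ref{thm: c common prefix} at level $j$. Let $j$-blockchain $b$ be the $(k-1)$-deep prefix of a $t$-credible $j$-blockchain; by the $j$-common-prefix lemma all $t$-credible $j$-blockchains extend $b$. Assuming $b$ is not permanent, take the infimum $r>t$ at which some $r$-credible $j$-blockchain fails to extend $b$, set $\epsilon=\frac{\delta}{80}(r-t)$, choose an $(r-\epsilon)$-credible $j$-blockchain, and let the highest shared honest $j$-block be mined at time $s$. The growth lemma forces $s<t-\frac{k}{2\alpha}$, so the $j$-analogue of Lemma~\ref{lemma: c Y > Z} yields $\Yj{s+\Delta}{r-\epsilon-\Delta}>\Zj{s}{r}$, while the $j$-analogue of Lemma~\ref{lemma: c Y < Z} yields the opposite inequality, a contradiction. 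The explicit probability bound $1-\mu e^{-\frac{\eta}{27}\left(\frac{k}{2\alpha}-2\Delta\right)}$ then follows from Lemma~\ref{lemma: cj prob of typical event} exactly as in the bitcoin case.

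The main obstacle is not a new computation but making precise that the intermediate lemmas genuinely transfer. The one step warranting care is the $\Yj{}{}\le\Zj{}{}$ lemma, whose argument is geometric rather than purely distributional: it injects each $j$-loner mined in $(s+\Delta,t-\Delta]$ into a distinct adversarial $j$-block at the same height on a diverging $j$-blockchain. Since this injection uses only height consensus and $j$-loner uniqueness, both of which are independent of the sortition that splits mining power across the $m+1$ chains, it survives unchanged. Once this transfer is confirmed, the reduction is complete and the theorem holds for every voter index $j$.
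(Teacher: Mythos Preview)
Your proposal is correct and matches the paper's approach: the paper simply states that for each $j\in\{1,\ldots,m\}$ the theorem admits essentially the same proof as Theorem~\ref{thm: c common prefix}, relying on the fact that the $j$-blockchain system is structurally and probabilistically identical to the single bitcoin blockchain. Your write-up is in fact more detailed than the paper's, explicitly identifying which intermediate lemmas (the $j$-analogues of Lemmas~\ref{lemma: c Y > Z}, \ref{lemma: c Y < Z}, and~\ref{lemma: c common prefix lemma}) must transfer and why they do.
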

\begin{proof}
For $j= {1},\ldots, m$, the theorem admits essentially the same proof as that for Theorem \ref{thm: c common prefix}.
\end{proof}

Next we will investigate the properties of {credible} leader sequences, which is the bases of {transaction} ledger generation.

\begin{lemma} \label{lemma: cp height concensus}
{If a $t$-credible leader sequence has height $h$, then the heights of all $(t+\Delta)$-credible leader sequences are at least $h$.}
\end{lemma}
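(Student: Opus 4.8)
The plan is to reduce the claim to the ``in particular'' clause of Definition~\ref{def: cp credible leader sequence}, which asserts that a $\tau$-credible leader sequence has height at least $h$ whenever the height $h$ satisfies $R_h \le \tau - \Delta$. Taking $\tau = t+\Delta$, it suffices to prove $R_h \le t$, i.e., that the first proposer block at height $h$ is published by time $t$.

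First I would locate a proposer block of height $h$ that is published by time $t$. By hypothesis the given $t$-credible leader sequence $(0, b_1, \ldots, b_h)$ has height $h$, so its top element $b_h$ is a proposer ($0$-)block at height $h$. By Definition~\ref{def: cp credible leader sequence}, $b_h$ is elected at time $t$ as the proposer block of height $h$ receiving the most votes at time $t$ (possibly via the tie-breaking rule in the degenerate case where all height-$h$ proposer blocks have zero votes). For $b_h$ to be compared against the other proposer blocks of height $h$ at time $t$, it must be observable at that time, so $P^0_{b_h} \le t$. Since $R_h$ is, by Definition~\ref{def: R_h}, the publication time of the \emph{first} proposer block of height $h$, we conclude $R_h \le P^0_{b_h} \le t$.

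Finally, applying the ``in particular'' clause of Definition~\ref{def: cp credible leader sequence} with reference time $t+\Delta$: because $R_h \le t = (t+\Delta) - \Delta$, every $(t+\Delta)$-credible leader sequence must have height at least $h$, which is exactly the assertion of the lemma.

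The step I expect to demand the most care is establishing $P^0_{b_h} \le t$, namely that the top leader block of a $t$-credible leader sequence is genuinely \emph{published} (not merely mined) by time $t$. This rests on reading ``receives the most votes among all proposer blocks of height $h$ at time $t$'' as a comparison restricted to blocks observable at time $t$, including the zero-vote tie-breaking case; once this reading is fixed, the remaining implications are immediate from the definitions, mirroring the one-line proof of Lemma~\ref{lemma: height concensus} for ordinary blockchains.
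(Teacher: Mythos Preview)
Your proposal is correct and follows essentially the same approach as the paper's proof: the paper also argues that a $t$-credible leader sequence of height $h$ forces the existence of a published proposer block at height $h$, and then invokes Definition~\ref{def: cp credible leader sequence} (at time $t+\Delta$) to conclude. You have simply made explicit the step $P^0_{b_h}\le t$ (and hence $R_h\le t$) that the paper leaves implicit, and your caution about that reading of the definition is well placed.
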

\begin{proof}
If a $t$-credible leader sequence has height $h$, there must exist at least one published proposer block with height $h$.
Then Lemma \ref{lemma: cp height concensus} is obvious by the Definition \ref{def: cp credible leader sequence}.
\end{proof}

\begin{lemma} \label{lemma: cp growth}
Suppose integer $k$ and real number $t$ satisfy $t\ge \frac{k}{2\alpha}$. Suppose a $t$-credible leader sequence has height $n$. Under event $\Ezero{t-\frac{k}{2\alpha}}{t}$, a propose block  whose height is less or equal to $n-k+1$ must be mined no later than time $t-\frac{k}{2\alpha}$.
\end{lemma}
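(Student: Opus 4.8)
The plan is to reduce this statement to the proposer-blockchain growth lemma, namely Lemma \ref{lemma: cj growth} specialized to $j=0$. The key observation is that a $t$-credible leader sequence of height $n$ exhibits a genuine proposer blockchain (a $0$-blockchain) of height $n$, all of whose blocks are mined by time $t$; growth of that blockchain then hands us the required early-mined proposer block.

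First I would extract the top leader block. By Definition \ref{def: cp credible leader sequence}, a $t$-credible leader sequence of height $n$ contains a leader block at height $n$, say proposer block $b$, which receives the most votes among all proposer blocks of height $n$ at time $t$. To be eligible for election at time $t$ --- even under the zero-vote tie-breaking convention, in which $b$ is merely selected among the existing height-$n$ proposer blocks --- block $b$ must be a published proposer block by time $t$, and hence $b$ is mined by time $t$. Its $0$-blockchain (the unique path from $b$ back to the genesis proposer block) therefore has height $n$ and consists entirely of proposer blocks mined by time $t$.

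Next I would apply growth. Since $t \ge \frac{k}{2\alpha}$ by hypothesis and the $0$-blockchain $b$ is mined by time $t$, Lemma \ref{lemma: cj growth} with $j=0$ gives that, under $\Ezero{t-\frac{k}{2\alpha}}{t}$, the $(k-1)$-deep prefix of blockchain $b$ is mined no later than time $t-\frac{k}{2\alpha}$. By Definition \ref{def: c k deep}, this $(k-1)$-deep prefix is precisely the proposer block at height $n-k+1$ on blockchain $b$, which is therefore a proposer block of height at most $n-k+1$ mined no later than $t-\frac{k}{2\alpha}$, as claimed.

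The step I expect to be the main obstacle is the first one: rigorously justifying that the top leader block $b$, and hence its entire $0$-blockchain, is mined by time $t$. This is delicate because a leader sequence is defined through votes tallied at time $t$ rather than directly through mining times, and the tie-breaking rule permits a leader block carrying zero votes. The argument must use that any proposer block counted among the height-$n$ candidates at time $t$ is observable (cf.\ Definition \ref{def: c observable}), hence published, and thus mined, by time $t$. Once this is secured, the remainder is a direct invocation of the proposer-blockchain growth lemma.
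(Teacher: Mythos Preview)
Your proof is correct and takes essentially the same approach as the paper: both rest on bounding the total number of proposer blocks mined during $(t-\frac{k}{2\alpha},t]$ by at most $k-1$ under $\Ezero{t-\frac{k}{2\alpha}}{t}$ and then reading off the conclusion. The only difference is packaging---the paper repeats the counting calculation explicitly (mirroring the proof of Lemma~\ref{lemma: c growth}), whereas you extract the $0$-blockchain of the top leader block and invoke Lemma~\ref{lemma: cj growth} with $j=0$; your extra care in arguing that this blockchain is mined by time $t$ is a detail the paper leaves implicit.
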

\begin{proof}
Under $\Ezero{t-\frac{k}{2\alpha}}{t}$,
the total number of proposer blocks mined during $(t-\frac{k}{2\alpha}, t]$ is upper bounded by
\begin{align}
    \Nzero{t-\frac{k}{2\alpha}}{t}+\Zzero{t-\frac{k}{2\alpha}}{t}
    & < (1+\delta)\alpha \frac{k}{2\alpha} + \beta \frac{k}{2\alpha} + \delta g^2\alpha \frac{k}{2\alpha} \label{cp 3.0}  \\
    & < (1+\delta)\alpha \frac{k}{2\alpha} + \left(1-\frac{81}{40}\delta\right)g^2\alpha \frac{k}{2\alpha} + \delta g^2\alpha \frac{k}{2\alpha} \label{cp 3.005} \\
    & < (1+\delta)\alpha \frac{k}{2\alpha} + \left(1-\frac{41}{40}\delta\right)\alpha \frac{k}{2\alpha} \label{cp 3.01}\\
    & < k \label{cp 3.02}
\end{align}
where \eqref{cp 3.0} is due to \eqref{cj equ: E1} and \eqref{cj equ: E3}, \eqref{cp 3.005} is due to \eqref{equ: alpha beta}, and \eqref{cp 3.01} is due to $g<1$. So the number of proposer block mined during $(t-\frac{k}{2\alpha}, t]$ is at most $k-1$. Therefore, every propose block  whose height is less or equal to $n-k+1$ must be mined no later than time $t-\frac{k}{2\alpha}$.
\end{proof}

\begin{theorem} \label{thm: cp growth}
{(Leader sequence growth theorem)}
Suppose real numbers $s$, $t$ satisfy $0\le s < t-\coeft$. Under event $\Ezero{s+\Delta}{t-\Delta}$,
the height of every $t$-credible leader sequence is at least $\growth g\alpha (t-s)$ larger than the maximum height of all $s$-credible leader sequences {and all $s$-credible $0$-blockchains}. As a consequence, the probability that some $t$-credible leader sequence is less than $\growth g\alpha(t-s)$ higher than some $s$-credible leader sequence {or some $s$-credible $0$-blockchain} does not exceed $9e^{-\frac{1}{24}\eta(t-s)}$.
\end{theorem}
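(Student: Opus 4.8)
The plan is to reduce everything about leader sequences to statements about proposer ($0$-)blockchain heights, and then to reuse the lagger argument that drives the blockchain growth theorem (Theorem~\ref{thm: c growth}, and its $j=0$ instance Theorem~\ref{thm: cj growth}). The key preliminary step is to show that the quantity we must dominate, namely the maximum height $\ell$ taken over all $s$-credible leader sequences and all $s$-credible $0$-blockchains, equals the maximum height $L$ of all $0$-blockchains published by time $s$. On one hand, the longest $0$-blockchain published by time $s$ is itself $s$-credible, since it is published by time $s$ and is no shorter than any $0$-blockchain published by the earlier time $s-\Delta$; hence $\ell\ge L$. On the other hand, every $s$-credible $0$-blockchain is published by time $s$, and every $s$-credible leader sequence of height $n'$ contains a proposer block $b_{n'}$ of height $n'$ that, being a candidate leader block whose votes are tallied at time $s$ by Definition~\ref{def: cp credible leader sequence}, is a published proposer block at time $s$ and therefore furnishes a $0$-blockchain of height $n'$ published by time $s$. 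In either case the height is at most $L$, so $\ell\le L$ and thus $\ell=L$.

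Next I would lower bound the height $n$ of an arbitrary $t$-credible leader sequence. By Definition~\ref{def: cp credible leader sequence}, $n$ is at least the maximum height of all proposer blockchains published by time $t-\Delta$, so it suffices to exhibit a $0$-blockchain of height at least $L+\growth g\alpha(t-s)$ that is published by time $t-\Delta$. This is exactly where the lagger argument enters, applied to $0$-blocks over the interval $(s+\Delta,t-\Delta]$ matching the event $\Ezero{s+\Delta}{t-\Delta}$: from \eqref{cj equ: E2} together with $t-s>\coeft$ one gets $\Xzero{s+\Delta}{t-\Delta}>\growth g\alpha(t-s)$, exactly as in the proof of Theorem~\ref{thm: c growth}. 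These $0$-laggers are honest, hence published immediately (by time $t-\Delta$); they have pairwise distinct heights by Lemma~\ref{lemma: cj lagger diff height}; and each has height strictly greater than $L$, because a $0$-lagger mined at time $T\in(s+\Delta,t-\Delta]$ extends a $T$-credible $0$-blockchain (Lemma~\ref{lemma: cj honest block append to honest blockchain}) which, since $T-\Delta>s$, is no shorter than the longest $0$-blockchain published by time $s$ and thus has height at least $L$. A set of more than $\growth g\alpha(t-s)$ distinct integer heights all exceeding $L$ must contain a value at least $L+\growth g\alpha(t-s)$, realized by a $0$-lagger published by time $t-\Delta$. Therefore $n\ge L+\growth g\alpha(t-s)=\ell+\growth g\alpha(t-s)$, which is the deterministic claim.

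The probability statement then follows immediately from Lemma~\ref{lemma: cj prob of good events} with $j=0$, which bounds $P\bigl((\Ezero{s+\Delta}{t-\Delta})^c\bigr)$ precisely as in the bitcoin case, giving the stated $9e^{-\frac{1}{24}\eta(t-s)}$.

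The hard part is the first step, the bookkeeping that pins $\ell$ to $L$; this is the only place the proof genuinely departs from the $0$-blockchain growth theorem. One must argue carefully from Definition~\ref{def: cp credible leader sequence} that a credible leader sequence is sandwiched by proposer-blockchain heights: its height is bounded below by the maximum proposer-blockchain height published by time $t-\Delta$ (directly by the definition) and above by the maximum proposer-blockchain height published by time $t$ (because its top leader block is a published proposer block). Once the leader sequence heights are trapped between $0$-blockchain heights in this way at both times $s$ and $t$, the growth of credible leader sequences is inherited directly from the growth of the proposer blockchain, and the remainder of the argument is identical to the bitcoin case.
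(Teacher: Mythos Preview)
Your proof is correct and follows essentially the same approach as the paper's: bound the number of $0$-laggers over $(s+\Delta,t-\Delta]$ via \eqref{cj equ: E2} and the condition $t-s>\coeft$, use Lemma~\ref{lemma: cj lagger diff height} to get distinct heights, and then invoke Definition~\ref{def: cp credible leader sequence} to pass the resulting proposer-blockchain height to the $t$-credible leader sequence; the probability bound comes from Lemma~\ref{lemma: cj prob of good events}.

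The only notable difference is in the bookkeeping: the paper simply lets $\ell$ be the maximum height of $s$-credible leader sequences, invokes Lemma~\ref{lemma: cp height concensus} to push this to time $s+\Delta$, and then asserts the $0$-laggers sit above $\ell$; it does not explicitly handle the ``$s$-credible $0$-blockchain'' clause of the statement. Your detour through $L$ (the maximum height of $0$-blockchains published by time $s$) and the argument that $\ell=L$ is more explicit and actually closes that loop, since it makes transparent why a $0$-lagger mined after $s+\Delta$ must have height exceeding $\ell$ (it extends a credible $0$-blockchain, not a leader sequence). This extra care is a genuine improvement over the paper's somewhat elliptical proof, but the underlying mechanism is identical.
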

\begin{proof}
Assume {the maximum height of all} $s$-credible leader sequences {is} $\ell$. According to Lemma \ref{lemma: cp height concensus}, the heights of all $(s+\Delta)$-credible leader sequences are at least $\ell$.

Under event $\Ezero{s+\Delta}{t-\Delta}$, during time interval $(s+\Delta, t-\Delta]$ the number of $0$-laggers is lower bounded:
\begin{align}
\Xzero{s+\Delta}{t-\Delta} & > (1-\delta)g\alpha (t-s-2\Delta) \\
& = (1-\delta)g\alpha \frac{t-s-2\Delta}{t-s} (t-s) \\
&> (1-\delta)\left(1-\frac{\delta}{40}\right)g\alpha(t-s) \label{cp 3.03}\\
& > \growth g\alpha(t-s)
\end{align}
where \eqref{cp 3.03} is due to $t-s>\coeft$. According to Lemma \ref{lemma: cj lagger diff height},
these $0$-laggers have different heights, thus there {must exist} a $0$-lagger with height of at least  $\ell + \growth g\alpha(t-s)$ {by} time $t-\Delta$. This height lower bounds the {heights} of all $t$-credible leader sequences.

By Lemma \ref{lemma: cj prob of good events}, the probability that some $t$-credible leader sequence is less than $\growth g\alpha(t-s)$ higher than some $s$-credible leader sequence does not exceed $9e^{-\frac{1}{24}\eta(t-s)}$.
\end{proof}

\begin{lemma} \label{lemma: cp t-credible height}
Suppose positive integer $k$ and real number $t$ satisfy $t \ge \frac{k}{\growth g\alpha}$. Then under event {$\Gzero{t-\frac{k}{2\alpha}+\Delta}{t-\Delta}$}, the height of {every} $t$-credible leader sequence is {at least} $k$.
\end{lemma}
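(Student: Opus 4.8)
The plan is to mirror the proof of Lemma~\ref{lemma: c t-credible height} almost verbatim, replacing the blockchain growth theorem by its leader-sequence counterpart, Theorem~\ref{thm: cp growth}. First I would set $r = \frac{k}{\growth g\alpha}$, so that the hypothesis reads $t \ge r$, and observe that $r > \frac{k}{2\alpha}$ because $\growth g\alpha < 2\alpha$ (indeed $\growth < 1$ and $g < 1$, so $\growth g < 1 < 2$).

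Next I would establish the event inclusion $\Gzero{t-\frac{k}{2\alpha}+\Delta}{t-\Delta} \subseteq \Ezero{t-r+\Delta}{t-\Delta}$. This is the key bookkeeping step: recalling that $\Gzero{s'}{t'}$ is the intersection of $\Ezero{a}{b}$ over $a \in [0,s']$ and $b \in [t',\infty)$, it suffices to verify that the pair $(a,b) = (t-r+\Delta,\; t-\Delta)$ lies in the index set determined by $s' = t-\frac{k}{2\alpha}+\Delta$ and $t' = t-\Delta$. The upper bound $t-r+\Delta \le t-\frac{k}{2\alpha}+\Delta$ follows from $r \ge \frac{k}{2\alpha}$; the lower bound $t-r+\Delta \ge 0$ follows from $t \ge r$; and $b = t-\Delta$ trivially satisfies $b \ge t-\Delta$. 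Hence whenever $\Gzero{t-\frac{k}{2\alpha}+\Delta}{t-\Delta}$ occurs, so does $\Ezero{t-r+\Delta}{t-\Delta}$.

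Then I would apply Theorem~\ref{thm: cp growth} with $s = t-r$: under $\Ezero{(t-r)+\Delta}{t-\Delta}$, the height of every $t$-credible leader sequence exceeds the maximum height of all $(t-r)$-credible leader sequences and all $(t-r)$-credible $0$-blockchains by at least $\growth g\alpha\bigl(t-(t-r)\bigr) = \growth g\alpha\, r = k$. Since the genesis $0$-block furnishes a $0$-blockchain of height $0$ that is $(t-r)$-credible whenever $t-r \ge 0$, the baseline maximum height is at least $0$, and therefore the height of every $t$-credible leader sequence is at least $k$.

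I expect the only genuine obstacle to be confirming that the hypotheses of Theorem~\ref{thm: cp growth} hold for the chosen $s = t-r$, in particular the requirement $s < t-\coeft$, i.e. $r > \frac{80\Delta}{\delta}$, which must be inherited from the standing assumptions on $k$ and $\delta$ rather than proved afresh. Everything else is a direct transcription of the bitcoin argument; the one Prism-specific ingredient is that Theorem~\ref{thm: cp growth} was deliberately stated to compare against $0$-blockchains as well as leader sequences, which is exactly what supplies the height-$0$ genesis baseline needed to conclude.
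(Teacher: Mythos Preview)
Your proposal is correct and follows essentially the same approach as the paper's proof, which also sets $r = \frac{k}{\growth g\alpha}$, passes from $\Gzero{t-\frac{k}{2\alpha}+\Delta}{t-\Delta}$ to $\Ezero{t-r+\Delta}{t-\Delta}$, and invokes Theorem~\ref{thm: cp growth}. Your additional bookkeeping (verifying the index ranges for the event inclusion, and explicitly naming the genesis $0$-blockchain as the height-$0$ baseline) is more careful than the paper's terse ``Hence the proof,'' and the gap you flag regarding the hypothesis $s < t-\coeft$ of Theorem~\ref{thm: cp growth} is real and present in the paper's own argument as well; it is indeed resolved only by the standing assumption $k \ge \coefk$ (or $k \ge \coeft$) imposed wherever this lemma is invoked.
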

\begin{proof}
{Let $r = \frac{k}{\growth g\alpha}$. Evidently, $r>\frac{k}{2\alpha}$.}
If event
{$\Gzero{t-\frac{k}{2\alpha}+\Delta}{t-\Delta}$} occurs, event  $\Ezero{t-r+\Delta}{t-\Delta}$ occurs.
According to Theorem \ref{thm: cp growth},
every $t$-credible leader sequence is at least $\growth g\alpha r = k$ higher than the maximum height of all $\left(t-r\right)$-credible leader sequences. Hence the proof.
\end{proof}

%

\begin{theorem}\label{thm: cp quality}
{(Leader sequence quality theorem)}
Suppose positive integer $k$ and real number $t$ satisfy $k \ge \coeft$ and $t \ge \frac{k}{\growth g\alpha}$.
Under event $\Gzero{t-\frac{k}{2\alpha}+\Delta}{t-\Delta}$, at least $\left(1-g\right)k$ of the last $k$ blocks of every $t$-credible leader sequence are honest. As a consequence, the probability that more than  $gk$ of the last $k$ blocks of some $t$-credible leader sequence are adversarial does not exceed $\mu {e^{-\frac{\eta}{{27}}\left(\frac{k}{2\alpha}-2\Delta\right)}}$.
\end{theorem}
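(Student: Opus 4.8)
The plan is to mirror the proof of Theorem~\ref{thm: c quality}, replacing the bitcoin blockchain by the proposer ($0$-block) structure and invoking the leader-sequence analogues already in hand: Lemma~\ref{lemma: cp t-credible height} (every $t$-credible leader sequence has height at least $k$), Theorem~\ref{thm: cp growth} (leader-sequence growth, which crucially bounds the height of a $t$-credible leader sequence below by $\growth g\alpha(t-s)$ plus the maximum height of all $s$-credible $0$-blockchains), the adversarial $0$-block bound \eqref{cj equ: E4} with $j=0$, and Lemma~\ref{lemma: cp growth} (low proposer blocks are mined early). Throughout I work under the event $\Gzero{t-\frac{k}{2\alpha}+\Delta}{t-\Delta}$, which implies $\Ezero{a}{b}$ for every $a\in[0,t-\frac{k}{2\alpha}]$ and $b\in[t,\infty)$.

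First I fix a $t$-credible leader sequence $(0,b_1,\ldots,b_n)$ and note $n\ge k$ by Lemma~\ref{lemma: cp t-credible height}; its last $k$ blocks are the leader blocks $b_{n-k+1},\ldots,b_n$ at heights $n-k+1,\ldots,n$. I then choose the anchor: let $e$ be the highest honest $0$-block with $\h{e}\le n-k$ (which exists because the genesis is honest and $n-k\ge 0$), and set $s=T^0_e$. By Lemma~\ref{lemma: cp growth} every proposer block of height at most $n-k+1$ is mined by $t-\frac{k}{2\alpha}$, so $s\le t-\frac{k}{2\alpha}$ and hence $t-s\ge\frac{k}{2\alpha}$, which exceeds $\coeft$ under the lower bound assumed on $k$, making Theorem~\ref{thm: cp growth} applicable. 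Since $e$ is honest, $0$-blockchain $e$ is $s$-credible by Lemma~\ref{lemma: cj honest block append to honest blockchain}, so the maximum height of all $s$-credible $0$-blockchains is at least $\h{e}$; Theorem~\ref{thm: cp growth} then yields $n-\h{e}\ge\growth g\alpha(t-s)$.

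Next I count the adversarial leader blocks exactly as in Theorem~\ref{thm: c quality}. By maximality of $e$, no honest $0$-block sits at any height in $\{\h{e}+1,\ldots,n-k\}$, so the $y:=n-k-\h{e}$ leader blocks at those heights are all adversarial; let $A$ be the number of adversarial blocks among the $k$ leader blocks in the window. These $y+A$ adversarial leader blocks are distinct adversarial $0$-blocks (distinct heights), and $\frac{A}{k}\le\frac{A+y}{k+y}$ with $k+y=n-\h{e}$. Provided $y+A\le\Zzero{s}{t}$, the same chain of inequalities as in \eqref{c 4.50}--\eqref{c 4.55}, using $\Zzero{s}{t}<\beta(t-s)+\delta g^2\alpha(t-s)$ from \eqref{cj equ: E4} and $n-\h{e}\ge\growth g\alpha(t-s)$ from growth, gives $\frac{A}{k}<g$, i.e.\ at least $\quality k$ of the last $k$ leader blocks are honest. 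The probability statement then follows from Lemma~\ref{lemma: cj prob of typical event} with $j=0$, which gives $P\!\left(\Gzero{t-\frac{k}{2\alpha}+\Delta}{t-\Delta}\right)>1-\mu e^{-\frac{\eta}{27}\left(\frac{k}{2\alpha}-2\Delta\right)}$.

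The main obstacle is precisely the inequality $y+A\le\Zzero{s}{t}$, that is, showing that all the adversarial leader blocks at heights in $\{\h{e}+1,\ldots,n\}$ are mined during $(s,t]$. In Theorem~\ref{thm: c quality} this was automatic because the corresponding adversarial blocks lie on the single blockchain $d$ and are therefore descendants of $e$ (hence mined after $s$) and ancestors of $d$ (hence mined before $t$). Here the leader blocks generally lie on different $0$-blockchains, so ancestry gives nothing and a privately withheld, pre-mined adversarial chain must be excluded. The resolution I would pursue exploits the $s$-credibility of $0$-blockchain $e$: since its height is $\h{e}$ and it is $s$-credible, \emph{no} $0$-blockchain of height exceeding $\h{e}$ is published by time $s-\Delta$, so every leader block of height greater than $\h{e}$ is published after $s-\Delta$; combining this with the fact that each leader block must be published by time $t$ (it has to be observable to collect the votes that make it a leader at time $t$) and with the leader-sequence growth estimate, I would argue that these adversarial blocks cannot have been mined on or before $s$ without contradicting growth. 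Making this step airtight — carefully separating mining time from publication time and ruling out withheld chains using only the $0$-typical event — is the delicate heart of the proof.
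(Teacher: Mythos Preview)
You have correctly identified the one genuine obstacle, and the resolution you sketch is not the one the paper uses. The paper does not anchor at the highest honest $0$-block of height at most $n-k$; it anchors at the largest height $h<n-k+1$ such that the \emph{earliest-mined} proposer block at height $h$ is honest, calling that block $b^*$ and setting $s=R_h$ (which equals $T^0_{b^*}$, since $b^*$ is honest and first-mined at its height, hence also first-published). The point of this choice is purely structural: every proposer block at any height $h'>h$ has, along its own $0$-blockchain, an ancestor at height $h$; that ancestor is mined no earlier than $b^*$; hence the block at height $h'$ is mined strictly after $s$. This single observation yields $y+z\le\Zzero{s}{t}$ immediately, with $y$ counting the earliest proposer blocks at heights $h+1,\ldots,n-k$ (adversarial by the maximality of $h$) and $z$ the adversarial leader blocks among $b_{n-k+1},\ldots,b_n$; both groups sit at heights strictly above $h$ and are mined by time $t$, so all of them lie in $(s,t]$. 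The remainder of your outline --- $s$-credibility of $0$-blockchain $b^*$, the growth bound $n-h>\growth g\alpha(t-s)$ from Theorem~\ref{thm: cp growth}, the ratio manipulation as in \eqref{c 4.50}--\eqref{c 4.55}, and the probability bound from Lemma~\ref{lemma: cj prob of typical event} --- then goes through verbatim.

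Your anchor does not deliver this: an adversarial leader block at height $>\h{e}$ need have no ancestral relation to $e$, so nothing forces it to be mined after $T^0_e$. The workaround you propose (use $s$-credibility of $e$ plus growth) only controls \emph{publication} times --- it tells you such a block is not published by $s-\Delta$ --- but says nothing about its mining time, so a withheld adversarial block mined before $s$ is not excluded. Replacing ``highest honest $0$-block'' by ``highest height whose first-mined proposer block is honest'' is exactly the missing idea; once you make that switch, the mining-vs-publication difficulty disappears.
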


\begin{draft}
\begin{proof}
Suppose {proposer blocks $(0,b_1,\ldots, b_n)$ is a $t$-credible leader sequence.} According to Lemma \ref{lemma: cp t-credible height}, $n \ge k$.
Let $h$ be the maximum height strictly less than $n-k+1$ such that the earliest  proposer block mined on height $h$ is honest. We have $0\le h \le n-k$. Let $s = R_h$ for convenience. {If $\Gzero{t-\frac{k}{2\alpha}+\Delta}{t-\Delta}$ occurs, $\Ezero{t-\frac{k}{2\alpha}}{t}$ occurs. We have $t-s>  \frac{k}{2\alpha} > \coeft$ by Lemma \ref{lemma: cp growth}.}

Let{
\begin{align}
    y = n-k-h. \label{c 10.9}
\end{align}}
Then $y$ lower bounds the number of adversarial proposer blocks on heights {$\{h+1, \ldots, n-k\}$} because the earliest proposer blocks on these heights are adversarial by definition.
Denote the number of adversarial blocks in proposer blocks $\{b_{n-k+1},\ldots, b_n\}$ as $z$.
Then $z+y$ lower bounds the number of adversarial proposer blocks generated during $(s,t]$. We have
\begin{align}
    z+ y & \le \Zzero{s}{t}  \\
    & < \beta(t-s) + \delta g^2 \alpha(t-s)\label{c 11}
\end{align}
where \eqref{c 11} is due to \eqref{cj equ: E4}.

Denote earliest proposer block on height $h$ as block $b^*$, which is honest by definition. {Then} $0$-blockchain $b^*$ is $s$-credible. According to Theorem \ref{thm: cp growth}, under $\Ezero{s+\Delta}{t-\Delta}$,
the height of {$t$-credible} leader sequence $(0,b_1,\ldots,b_n)$ is at least $(1-\frac{41}{40}\delta)(t-s)g\alpha$ greater than the height of all $s$-credible $0$-blockchains. Since the height difference of $0$-blockchain $b^*$ and $t$-credible leader sequence $(0,b_1,\ldots,b_n)$ is $n-h$, we have
\begin{align}
    n-h > (1-\frac{41}{40}\delta)(t-s)g\alpha.\label{c 12}
\end{align}

Thus, under event $\Gzero{t-\frac{k}{2\alpha}+\Delta}{t-\Delta}$, we have
\begin{align}
    \frac{z}{k} & \le \frac{z+ y}{k+y} \label{c 13} \\
    & = \frac{z+ y}{n-h} \label{c 13.01} \\
    & < \frac{\beta(t-s)+\delta g^2\alpha(t-s)}{\growth  g\alpha (t-s)} \label{c 13.1} \\
    & = \frac{1}{\growth}\left( \frac{\beta}{g\alpha} + \delta g\right) \label{c 13.2} \\
    & < \frac{1}{\growth}\left( \left(1-\frac{81}{40}\delta\right)g + \delta g\right) \label{c 13.3} \\
    & = g, \label{c 13.4}
\end{align}
where \eqref{c 13} is due to $z \le k$, \eqref{c 13.01} is due to {\eqref{c 10.9}}, \eqref{c 13.1} is due to \eqref{c 11} and \eqref{c 12}, and \eqref{c 13.3} is due to \eqref{equ: alpha beta}.

According to Lemma \ref{lemma: cj prob of typical event},
\begin{align}
{P\left(\Gzero{t-\frac{k}{2\alpha}+\Delta}{t-\Delta}\right)} > 1-\mu {e^{-\frac{\eta}{{27}}\left(\frac{k}{2\alpha}-2\Delta\right)}}.
\end{align}

As a consequence, with probability at least $1-\mu {e^{-\frac{\eta}{{27}}\left(\frac{k}{2\alpha}-2\Delta\right)}}$, event {$\Gzero{t-\frac{k}{2\alpha}+\Delta}{t-\Delta}$} occurs, under which the fraction of adversarial blocks in {the last} $k$  blocks of {every} $t$-credible leader sequence is at most $g$. Hence the proof.
\end{proof}
\end{draft}

\begin{definition}
($h$-high prefix) {For every} {$h\in\{{0},\ldots,n\}$}, by the $h$-high prefix of {$t$-credible} leader sequence $(0,\bl{1}, \ldots, \bl{n})$ we mean {the sequence of} proposer blocks $(0, \ldots, b_h)$.
\end{definition}

\begin{definition}\label{def: cp permanent}
A proposer block or a sequence of proposer blocks is said to be {\em permanent after time $t$} if the proposer block or sequence remains in all $r$-credible leader sequences with $r\ge t$.
\end{definition}

\begin{lemma} \label{lemma: cj mined after}
{Suppose real numbers $s$, $t$ and integer $j$ satisfy $0\le s < t-\coeft$ and $j\in\{0, \ldots, m\}$. Suppose $j$-blockchain $b$ is $t$-credible. Let $\ell = \left \lceil \growth g\alpha(t-s)\right \rceil$. Under event $\Gj{s+\Delta}{t-\Delta}$, an honest $j$-block whose height is greater or equal to $\hj{b} - \ell + 1$ must be mined after time $s$.}
\end{lemma}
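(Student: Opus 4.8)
The plan is to prove the contrapositive: under $\Gj{s+\Delta}{t-\Delta}$, every honest $j$-block $e$ mined by time $s$ (that is, with $T^j_e \le s$) has height at most $\hj{b}-\ell$. Let $L$ denote the maximum height among all $s$-credible $j$-blockchains. First I would invoke the Prism blockchain growth theorem (Theorem \ref{thm: cj growth}): since $0\le s< t-\coeft$ and the typical event $\Gj{s+\Delta}{t-\Delta}$ is contained in the good event $\Ej{s+\Delta}{t-\Delta}$ (take $a=s+\Delta$, $b=t-\Delta$ in the definition of $\Gj{s+\Delta}{t-\Delta}$), the height $\hj{b}$ of the $t$-credible $j$-blockchain $b$ exceeds $L$ by at least $\growth g\alpha(t-s)$. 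Because $\hj{b}$ and $L$ are integers, this forces $\hj{b}-L \ge \left\lceil \growth g\alpha(t-s)\right\rceil = \ell$, i.e.\ $L\le \hj{b}-\ell$.

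The crucial observation is that $L$ also equals the maximum height of \emph{all} $j$-blockchains published by time $s$, not merely the $s$-credible ones. Indeed, the tallest $j$-blockchain published by time $s$ is automatically $s$-credible: it is published by time $s$ by construction, and since every $j$-blockchain published by time $s-\Delta$ is in particular published by time $s$, its height is no smaller than any of theirs, so both clauses of Definition \ref{def: cj credible voter blockchain} hold. Hence no $j$-blockchain published by time $s$ can be taller than $L$.

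Now I would finish as follows. If $j$-block $e$ is honest with $T^j_e \le s$, then by Lemma \ref{lemma: cj honest block append to honest blockchain} the $j$-blockchain $e$ is $T^j_e$-credible and is therefore published by time $T^j_e \le s$; by the previous observation $\hj{e}\le L \le \hj{b}-\ell$. Equivalently, any honest $j$-block of height at least $\hj{b}-\ell+1$ must be mined strictly after time $s$, which is exactly the claim.

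I expect the only nonroutine step to be the identification of $L$ with the maximum height of all $j$-blockchains published by time $s$; once this is in place, the argument reduces to the growth theorem together with an integrality rounding. The points to check carefully are that honest blocks carry a credible (hence published) blockchain by Lemma \ref{lemma: cj honest block append to honest blockchain}, so that $j$-blockchain $e$ legitimately belongs to the family published by time $s$, and that the containment $\Gj{s+\Delta}{t-\Delta}\subseteq \Ej{s+\Delta}{t-\Delta}$ supplies precisely the good event required by Theorem \ref{thm: cj growth}.
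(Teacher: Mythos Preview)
Your proof is correct and follows essentially the same approach as the paper: both arguments combine Lemma~\ref{lemma: cj honest block append to honest blockchain} (an honest $j$-block yields a credible $j$-blockchain at its mining time) with the growth theorem and an integrality rounding. The only cosmetic difference is that the paper applies Theorem~\ref{thm: cj growth} on the interval $(T^j_d,t)$ directly (using that $\Gj{s+\Delta}{t-\Delta}$ contains $\Ej{T^j_d+\Delta}{t-\Delta}$ whenever $T^j_d\le s$), whereas you route the bound through the intermediate quantity $L$ on the interval $(s,t)$; both yield $\hj{e}\le \hj{b}-\ell$.
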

\begin{proof}
Suppose honest $j$-block $d$ satisfies ${T^j_d} \le s$. Since $j$-blockchain $d$ is ${T^j_d}$-credible, by Theorem \ref{thm: cj growth} the heights of all $t$-credible $j$-blockchains must be at least $\hj{d} + \growth g \alpha(t-T^j_d)$. Since $j$-blockchain $b$ with height $\hj{b}$ is $t$-credible, we have
\begin{align}
   \hj{d} & \le \hj{b} -  \growth g \alpha(t-{T^j_d}) \\
    & \le \hj{b} - \growth g \alpha(t-s) \label{c 9} \\
     & < \hj{b} - \ell + 1 \label{c 10}
\end{align}
where \eqref{c 10} is by definition of $\ell$. Thus, an honest $j$-block with {height greater or equal to} $\hj{b} - \ell + 1$ must be mined after time $s$.
\end{proof}

\begin{theorem} \label{thm: c permanent leader sequence}
{(Leader sequence common prefix theorem)} Suppose positive integer $k$ satisfies $k\ge \coefk$.
Suppose integer $h$ {and real number $t$ satisfy}
\begin{align}
    t\ge R_h + \frac{k}{\growth(1-g)g\alpha}+\Delta. \label{c 10.01}
\end{align}
{Let}
\begin{align}
    G = \bigcap_{j\in \{1,\ldots, m\}} \Gj{t-\frac{k}{2\alpha}+\Delta}{t-\Delta}.
\end{align}
Then under event $G$, all $t$-credible leader sequences share the same $h$-high prefix, and the prefix is permanent after time $t$.
As a consequence, with probability at least  $1-m\mu {e^{-\frac{\eta}{{27}}\left(\frac{k}{2\alpha}-2\Delta\right)}}$, the $h$-high prefix of all $t$-credible leader sequences is  permanent after time $t$.
\end{theorem}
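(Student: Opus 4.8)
The plan is to fix a single proposer height and show that its leader block is determined and permanent, and then to observe that the argument applies simultaneously to all heights $\ell\in\{1,\ldots,h\}$. Since every valid height-$h$ proposer block carries ancestors at every lower height that are mined (and, being part of its validated published chain, published) no later than itself, one checks $R_\ell\le R_h$ for all $\ell\le h$; hence the hypothesis $t\ge R_h+\frac{k}{\growth(1-g)g\alpha}+\Delta$ continues to hold with $R_h$ replaced by $R_\ell$. It therefore suffices to treat an arbitrary height $h$ and prove that, under $G$, every $t$-credible and every future credible leader sequence elects the same block at height $h$.

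For the fixed height I would let $p^\ast$ denote the first-published proposer block at height $h$, so $P_{p^\ast}=R_h$. The first step is that every honest vote on height $h$ goes to $p^\ast$: an honest voter $j$-block that votes on height $h$ is mined at a time $T$ with $R_h\le T-\Delta$, so $p^\ast$ is observable from it by Definition \ref{def: c observable}; and since $p^\ast$ is published strictly earlier than every competing height-$h$ proposer block (by definition of $R_h$), it is the earliest observed such block and is the one chosen. Thus any competitor can collect only adversarial votes. Next I would pin down permanence of each voter chain's vote. Using $t-(R_h+\Delta)\ge \frac{k}{\growth(1-g)g\alpha}$, the Prism blockchain growth theorem (Theorem \ref{thm: cj growth}, available under the components of $G$) gives that every $t$-credible voter $j$-blockchain is at least $\growth g\alpha\,(t-R_h-\Delta)\ge \frac{k}{1-g}\ge k$ taller than the tallest $(R_h+\Delta)$-credible $j$-blockchain. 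Consequently the first valid voter block for height $h$ (the lowest one mined after $R_h+\Delta$) sits at depth at least $k-1$, hence lies in the $(k-1)$-deep permanent prefix guaranteed by Theorem \ref{thm: cj common prefix} (note $k\ge \coefk$ and $t\ge \frac{k}{\growth g\alpha}$ hold). Therefore chain $j$'s vote on height $h$ is identical in every $r$-credible $j$-blockchain with $r\ge t$.

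The crux, and the step I expect to be the main obstacle, is to show that this decisive, permanent first vote is honest in every voter chain, so that under $\Gj{t-\frac{k}{2\alpha}+\Delta}{t-\Delta}$ chain $j$ permanently votes for $p^\ast$. The difficulty is that an adversarial voter block mined just after $R_h+\Delta$ could in principle occupy the first voting slot and vote for a competitor; ruling this out needs more than blockchain quality, which does not control the single block at a prescribed depth. I would adapt the loner-domination machinery behind the bitcoin common prefix theorem: a loner is the unique honest block at its height (Lemma \ref{lemma: cj unique kth block}) and, being permanent, lies on the prefix $\Pi$ shared by all credible $j$-blockchains, while the height-consensus and growth bounds should force the first post-$(R_h+\Delta)$ block of $\Pi$ to coincide with an honest (loner) block rather than an adversarially planted one. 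The role of the factor $(1-g)$ in the time bound is precisely to accumulate enough permanent honest voter blocks past $R_h+\Delta$ for this to go through.

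Once the crux is established, under $G=\bigcap_{j=1}^{m}\Gj{t-\frac{k}{2\alpha}+\Delta}{t-\Delta}$ all $m$ voter chains vote permanently for $p^\ast$, so $p^\ast$ is the unique leader at height $h$ in every current and future credible leader sequence; applying this to each $\ell\le h$ yields the common, permanent $h$-high prefix. The probability statement then follows from a union bound, $P(G^c)\le\sum_{j=1}^{m}P\big((\Gj{t-\frac{k}{2\alpha}+\Delta}{t-\Delta})^c\big)\le m\mu\,e^{-\frac{\eta}{27}\left(\frac{k}{2\alpha}-2\Delta\right)}$, via Lemma \ref{lemma: cj prob of typical event}.
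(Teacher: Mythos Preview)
Your plan has a genuine gap, and the ``crux'' you flag is not merely difficult but unnecessary; it stems from trying to prove more than the theorem requires. The theorem does not assert that a specific proposer block (your $p^\ast$) becomes the leader at height $h$; it only asserts that the $h$-high prefix is the same in all current and future credible leader sequences. For this it suffices to show that, for each $j$, the permanent $(k-1)$-deep prefix of the $t$-credible $j$-blockchain already contains a vote on height $h$. Then every $r$-credible $j$-blockchain with $r\ge t$ carries the same vote, and the leader is determined regardless of which block it is. Your Step~3 is also unsound on its own terms: the honest voting rule picks the first \emph{observed} height-$h$ proposer block, and ``first published'' does not imply ``first observed'' when two proposer blocks are published within $\Delta$ of one another, so honest chains need not all vote for $p^\ast$.

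The paper avoids your crux entirely by invoking the Prism blockchain \emph{quality} theorem, which your outline never uses. With $s=R_h+\Delta$ and $\ell=\lceil k/(1-g)\rceil$, growth gives that any $t$-credible $j$-blockchain has height at least $\ell$ above its $s$-credible height; quality (Theorem~\ref{thm: cj quality}, applicable since $\ell>k$ and $G$ implies $\Gj{t-\frac{\ell}{2\alpha}+\Delta}{t-\Delta}$) then guarantees at least $(1-g)\ell\ge k$ honest blocks among the last $\ell$, so the lowest such honest block $d$ lies in the $(k-1)$-deep prefix. Lemma~\ref{lemma: cj mined after} (with the same $s$) forces $T^j_d>s=R_h+\Delta$, so the honest block $d$ or one of its ancestors has cast a vote on every height $\le h$; by Theorem~\ref{thm: cj common prefix} this vote is permanent. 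No identification of the winner, no loner-domination argument, and no need for the decisive vote to be honest at a prescribed position. Your Step~4 also does not stand on its own: growth bounds the chain height relative to $s$-credible chains but gives you no control over the depth of the \emph{first} voting block, which could sit anywhere among the last $\ell$ positions; quality is what pins down a suitable honest block there. The probability bound via the union bound and Lemma~\ref{lemma: cj prob of typical event} is fine.
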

\begin{short}
\begin{proof}
Theorem \ref{thm: c permanent leader sequence} can be proved based on Theorem \ref{thm: cj growth}, Corollary \ref{coro: cj common prefix}, and Theorem \ref{thm: cj quality}.
\end{proof}
\end{short}
\begin{draft}
\begin{proof}
For convenience, let $s = R_h+\Delta$ {and $\ell = \left \lceil \frac{k}{1-g} \right \rceil$.}
Consider a $t$-credible voter $j$-blockchain $b$. If event $\Gj{t-\frac{k}{2\alpha}+\Delta}{t-\Delta}$ occurs, event $\Ej{s+\Delta}{t-\Delta}$ occurs. By Theorem \ref{thm: cj growth}, the height of $j$-blockchain $b$ is {higher than an $s$-credible $j$-blockchain by} at least
\begin{align}
\growth g\alpha (t-s) > \frac{k}{1-g}, \label{c 10.02}
\end{align}
{where \eqref{c 10.02} is due to \eqref{c 10.01}.} Then $\hj{b} \ge \ell$ because the height is an integer.

Obviously $\ell > k$. If $\Gj{t-\frac{k}{2\alpha}+\Delta}{t-\Delta}$ occurs,  $\Gj{t-\frac{\ell}{2\alpha}+\Delta}{t-\Delta}$ occurs. According to Theorem \ref{thm: cj quality}, in the last $\ell$ blocks of $j$-blockchain $b$, the number of honest ones is at least
\begin{align}
    (1-g)\ell \ge  k.
\end{align}
Thus, the lowest of these honest $j$-blocks, denoted as $j$-block $d$, must be on the $(k-1)$-deep prefix of $j$-blockchain $d$. That is to say, $\hj{b}-\ell+1 \le \hj{d} \le \hj{b}-k+1$.

By Lemma \ref{lemma: cj mined after}, ${T^j_{d}} > s$.
Since $s = R_h+\Delta$, by the voting rule $j$-blockchain $d$ must have voted on all heights less or equal to height $h$. If event $G$ occurs, event $\Gj{t-\frac{k}{2\alpha}+\Delta}{t-\Delta}$ occurs. By Theorem \ref{thm: cj common prefix}, $j$-blockchain $d$ and its votes must be permanent after time $t$.

Such claims can be said for all voter blockchains. That is to say, under event $G$, for all $j\in \{1,\ldots, m\}$ there exists a permanent honest $j$-blockchain which has voted on all height less or equal to $h$. Thus, the $h$-high prefix of all $t$-credible leader sequences is permanent after time $t$.

According to Lemma \ref{lemma: cj prob of typical event},
\begin{align}
    P(G^c) & = P\left(\bigcup_{j\in \{1,\ldots, m\}} \left(\Gj{t-\frac{k}{2\alpha}+\Delta}{t-\Delta}\right)^c \right)  \\
    & < mP\left(\Gj{t-\frac{k}{2\alpha}+\Delta}{t-\Delta}\right) \\
    & < m\mu e^{-\frac{\eta}{{27}}\left(\frac{k}{2\alpha}-2\Delta\right)}.
\end{align}
As a consequence, with probability at least  $1-m\mu {e^{-\frac{\eta}{{27}}\left(\frac{k}{2\alpha}-2\Delta\right)}}$, the $h$-high prefix of $t$-credible leader sequences is  permanent after time $t$.
\end{proof}
\end{draft}

\begin{corollary}
Fix positive integer $h$. {Suppose
\begin{align}
  \epsilon < m\mu e^{-3(1+\Delta)\delta g^2 {\alpha}} \label{c 12.01}
\end{align}
and}
\begin{align}
     t = R_h +   \frac{1}{\growth (1-g) g\alpha}\left({\frac{{54}\alpha}{\eta}} \log\left( \frac{m\mu}{\epsilon} \right) + 4\alpha\Delta+1\right) + \Delta. \label{c 12.02}
\end{align}
Then with probability at least $1-\epsilon$, the $h$-high prefix of {a} $t$-credible leader sequence is  permanent after time $t$.
\end{corollary}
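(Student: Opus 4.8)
The plan is to derive this corollary directly from Theorem~\ref{thm: c permanent leader sequence} by choosing the free integer $k$ in that theorem appropriately in terms of $\epsilon$. Specifically, I would set
\begin{align}
    k = \left\lceil \frac{54\alpha}{\eta}\log\left(\frac{m\mu}{\epsilon}\right) + 4\alpha\Delta \right\rceil,
\end{align}
and then verify the three facts needed to invoke the theorem and read off its probability guarantee: (i) $k \ge \coefk$, so that the hypotheses of Theorem~\ref{thm: c permanent leader sequence} are met; (ii) the value of $t$ prescribed in \eqref{c 12.02} satisfies $t \ge R_h + \frac{k}{\growth(1-g)g\alpha}+\Delta$; and (iii) the failure probability $m\mu e^{-\frac{\eta}{27}(\frac{k}{2\alpha}-2\Delta)}$ is at most $\epsilon$.

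For (ii), since the ceiling increases its argument by less than $1$, we have $k < \frac{54\alpha}{\eta}\log(\frac{m\mu}{\epsilon}) + 4\alpha\Delta + 1$; dividing by $\growth(1-g)g\alpha$ and adding $R_h + \Delta$ reproduces exactly the right-hand side of \eqref{c 12.02}, so the stated $t$ meets the hypothesis. For (iii), the inequality $k \ge \frac{54\alpha}{\eta}\log(\frac{m\mu}{\epsilon}) + 4\alpha\Delta$ rearranges to $\frac{k}{2\alpha}-2\Delta \ge \frac{27}{\eta}\log(\frac{m\mu}{\epsilon})$; substituting into the exponent yields $m\mu e^{-\frac{\eta}{27}(\frac{k}{2\alpha}-2\Delta)} \le m\mu \cdot \frac{\epsilon}{m\mu} = \epsilon$, which is the claimed $1-\epsilon$ bound.

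The only place the hypothesis \eqref{c 12.01} is needed is step (i). Using $\eta = \delta^2 g^2\alpha$ from \eqref{def: c eta}, the bound $\epsilon < m\mu e^{-3(1+\Delta)\delta g^2\alpha}$ is equivalent to $\log(\frac{m\mu}{\epsilon}) > 3(1+\Delta)\delta g^2\alpha$, whence $\frac{54\alpha}{\eta}\log(\frac{m\mu}{\epsilon}) > \frac{162\alpha(1+\Delta)}{\delta} > \coefk$; since $k$ is at least this quantity, $k \ge \coefk$ follows. The one genuinely delicate point is the bookkeeping around the ceiling: the same integer $k$ must simultaneously be large enough (for both the probability bound and the constraint $k \ge \coefk$) yet small enough that the prescribed $t$ still dominates $R_h + \frac{k}{\growth(1-g)g\alpha}+\Delta$. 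The additive $+1$ inside the parentheses of \eqref{c 12.02} is precisely what absorbs the rounding, so I would take care that the inequalities in (ii) and (iii) are oriented consistently with the ceiling before concluding.
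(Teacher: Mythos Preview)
Your proposal is correct and matches the paper's own proof essentially line for line: the same choice of $k$, the same three verifications, and the same use of \eqref{c 12.01} together with $\eta=\delta^2 g^2\alpha$ to secure $k\ge\coefk$. Your remark about the ceiling and the role of the $+1$ in \eqref{c 12.02} is exactly the point the paper is implicitly using when it writes ``clearly satisfies $t>R_h+\frac{k}{\growth(1-g)g\alpha}+\Delta$''.
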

\begin{proof}
Let
\begin{align}
k = \left \lceil \frac{{54}\alpha}{\eta}\log \left( \frac{m\mu}{\epsilon} \right) + 4\alpha \Delta \right \rceil,\label{c 12.03}
\end{align}
{which clearly satisfies $t > R_h + \frac{k}{\growth(1-g)g\alpha}+\Delta$ where $t$ is given in \eqref{c 12.01}.}

{
Note that
\begin{align}
m\mu e^{-\frac{\eta}{{27}}\left(\frac{k}{2\alpha}-2\Delta\right)} & \le
{m\mu e^{-\frac{\eta}{{27}}\left(\frac{{27}}{\eta}\log \left( \frac{m\mu}{\epsilon} \right) + 2 \Delta  - 2\Delta\right)}} \label{c 12.04}\\
& = \epsilon
\end{align}
where \eqref{c 12.04} is due to \eqref{c 12.03}. Also,
\begin{align}
    k & > \frac{{54}\alpha}{\eta}\log \left( \frac{m\mu}{\epsilon} \right) \\
    & > \frac{{54}\alpha}{\eta}\log\left( e^{3(1+\Delta)\delta g^2 {\alpha}} \right) \label{c 12.05} \\
    & = \frac{{54}\alpha}{\delta^2 g^2 \alpha} 3(1+\Delta)\delta g^2 {\alpha} \label{c 12.06} \\
    & > \coefk
\end{align}
where \eqref{c 12.05} is due to \eqref{c 12.01} and \eqref{c 12.06} is due to {\eqref{def: c eta}}.
}

Applying Theorem \ref{thm: c permanent leader sequence} with $k$ given by \eqref{c 12.03}, with probability at least $1- m\mu e^{-\frac{\eta}{{27}}\left(\frac{k}{2\alpha}-2\Delta\right)} > 1-\epsilon$, the $h$-high prefix of a $t$-credible leader sequence is permanent.
\end{proof}


After establishing theorems to ensure the liveness and consistency of Prism blocks, we will discuss the property of Prism transactions. A transaction in Prism blockchains functions the same as that in bitcoin blockchains, which is broadcast and collected by miners to form blocks. A transaction spends from previous transaction outputs as its inputs, and dedicates an amount less or equal to the total input value to new outputs. A transaction may contain multiple inputs (outputs).
See \cite[Chapter 3]{bonneau2016bitcoin} for more detailed introductions to transactions.

\begin{definition}
{(Credible transaction until time $t$) By saying two transactions {have a} conflict we mean there exists at least one output  they both spend from.
A transaction is said to be {\em {\credible}until time $t$} if it
is published by time $t$ and does not {have a} conflict with any other transactions published by time $t$.}  If there is no need to specify $t$ explicitly, we can  simply say a transaction is credible.
\end{definition}


\begin{draft}
\begin{lemma} \label{lemma: c honest leader block include tx}
Suppose honest proposer block $b$ is {in} a $t$-credible leader sequence.
Suppose a  transaction $tx$ {which is \credible until time $t$}  enters a block and the block is broadcast before time $T^0_{b}-\Delta$. Then the {transaction} ledger generated by this $t$-credible leader sequence must include transaction $tx$.
\end{lemma}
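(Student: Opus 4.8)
The plan is to argue in two stages: first that the block carrying $tx$ is reachable from the honest leader block $b$ and is therefore collected into the epoch defined by $b$ (or an earlier epoch), making $tx$ a candidate transaction; and second that no transaction conflicting with $tx$ can appear anywhere in the ledger produced at time $t$, so that $tx$ survives the removal of double spends and redundant transactions.

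For the first stage I would name the block that carries $tx$ and is broadcast before $T^0_b-\Delta$, say block $c$. Since $b$ is honest it is published as soon as it is mined, so $P^0_b=T^0_b$, and since $b$ belongs to a $t$-credible leader sequence it is published by time $t$, giving $T^0_b\le t$. Because $c$ is published no later than $T^0_b-\Delta=P^0_b-\Delta$, block $c$ is observable from $b$ by Definition \ref{def: c observable}. By the Prism protocol an honest proposer block references every observable block not already pointed to by other reference links, so $c$ is reachable from $b$ in the sense of Definition \ref{def: c reachable}. As $b$ is a leader block of the given sequence, the epoch of $b$ (or an earlier epoch) collects $c$, and hence $tx$ enters the list of transactions from which the ledger is formed.

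For the second stage the key point is that the ledger is generated at time $t$ from blocks that are reachable from the leader sequence \emph{and} available at time $t$, i.e.\ published by time $t$; consequently every transaction that actually appears in the ledger is published by time $t$. I would then invoke the hypothesis that $tx$ is credible until time $t$: by definition it has no conflict with any transaction published by time $t$, so no transaction in the ledger conflicts with $tx$. Since $c$ is published before $T^0_b-\Delta\le t-\Delta<t$, the copy of $tx$ in $c$ is among the collected transactions, and with no conflicting predecessor it is never discarded. Therefore $tx$ is included in the ledger generated by the $t$-credible leader sequence.

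The main obstacle will be making the second stage fully rigorous, namely justifying that no transaction published after time $t$ can slip into the time-$t$ ledger through an adversarial leader block. An adversarial leader block, although itself published by time $t$, may carry a reference link to an adversarial block that was mined by time $t$ but withheld beyond time $t$; such a withheld block is simply unavailable when the ledger is computed at time $t$, so neither it nor its transactions can be included. Pinning down precisely that ``reachable for the purpose of ledger generation at time $t$'' is restricted to blocks published by time $t$ is the crux; once it is established, it immediately follows that every ledger transaction is published by time $t$ and hence cannot conflict with the credible transaction $tx$.
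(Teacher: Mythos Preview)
Your overall structure matches the paper's proof: show the block carrying $tx$ is reachable from the honest leader $b$, hence collected into some epoch, and then use credibility of $tx$ to conclude it is not discarded. The paper's second stage is actually terser than yours---it simply asserts that since $tx$ does not conflict with any transaction published by time $t$, it survives---so your extra care about blocks withheld past time $t$ is not something the paper addresses, though your concern is reasonable.

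There is, however, a genuine gap in your first stage. You write: ``By the Prism protocol an honest proposer block references every observable block not already pointed to by other reference links, so $c$ is reachable from $b$.'' This is a non-sequitur. If $c$ \emph{is} already pointed to by some other block $e$, then $b$ does \emph{not} reference $c$ directly, and you have given no reason why $e$ (or anything else that reaches $c$) should itself be reachable from $b$. The paper closes this gap by letting $B$ be the set of blocks observable from $b$, letting $C\subset B$ be those blocks that can reach $c$, and arguing: if $C$ is empty then $b$ references $c$ directly; otherwise, since $C$ is finite and reference links are acyclic, some block in $C$ is not referenced by any other block in $C$, and \emph{that} block must be referenced by $b$. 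Either way $c$ is reachable from $b$. You need this finiteness-plus-acyclicity argument (or an equivalent induction) to make the reachability claim rigorous.
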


\begin{proof}
Suppose the transaction $tx$ enters block $d$ which is {published} by time $T^0_b-\Delta$.
Note that block $d$ may be honest or adversarial, a voter block or a proposer block, and it can be on the credible blockchains or an orphan block.

Denote ${B}$ to be the set of blocks observable {from} proposer block $b$ {(Definition \ref{def: c observable} defines ``observable'')}. Since block $d$ is published by time $T^0_b-\Delta$, we know block $d \in {B}$. Denote  ${C}$ to be a subset of ${B}$ which contains all blocks that can reach block $d$ (Definition \ref{def: c reachable} defines ``reach''). If ${C}$ is empty, block $d$ is not reachable by any blocks {observable} from block $b$. Then proposer block $b$ must reference block $d$ {according to} the Prism protocol. Otherwise, we note that the number of blocks in ${C}$ is finite, and that reference links cannot form a circle. Then there must exist at least one  block which is not referenced by any other block in ${C}$, which must be referenced by proposer block $b$ {according to} the Prism protocol. In both cases, block $d$ is reachable from proposer block $b$.

{As a consequence,} block $d$ is included in the {transaction} ledger by either proposer block $b$ or by previous credible leader blocks in the same $t$-credible leader sequence. Since transaction $tx$ does not conflict with other transactions {by time $t$}, it will not be discarded.
\end{proof}
\end{draft}

\begin{definition}
A transaction is said to be permanent after time $t$ if it remains on the {transaction} ledger of all credible leader sequences after time $t$.
\end{definition}

\begin{lemma} \label{lemma: cp mined after}
{Suppose real numbers $s$ and $t$ satisfy $0\le s < t-\coeft$. Suppose there exists a $t$-credible leader sequence $(0,b_1,\ldots, b_n)$. Let $\ell = \left \lceil \growth g\alpha(t-s)\right \rceil$. Under event $\Gzero{s+\Delta}{t-\Delta}$, an honest proposer block whose height is greater or equal to $n - \ell + 1$ must be mined after time $s$.}
\end{lemma}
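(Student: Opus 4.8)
The plan is to prove the contrapositive, mirroring the proof of Lemma \ref{lemma: cj mined after} but invoking the leader sequence growth theorem (Theorem \ref{thm: cp growth}) in place of the $j$-blockchain growth theorem. I would suppose that $d$ is an honest proposer block with $T^0_d \le s$ and show that $h^0(d) < n - \ell + 1$, so that no honest proposer block mined by time $s$ can attain height $n - \ell + 1$ or greater.

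First, since proposer block $d$ is an honest $0$-block, Lemma \ref{lemma: cj honest block append to honest blockchain} (with $j = 0$) guarantees that $0$-blockchain $d$ is $T^0_d$-credible. Next, because $T^0_d \le s$ we have $T^0_d + \Delta \le s + \Delta$, so $\Gzero{s+\Delta}{t-\Delta} \subset \Ezero{T^0_d+\Delta}{t-\Delta}$; hence the latter event occurs under the hypothesized event. Moreover, the hypothesis $s < t - \coeft$ together with $0 \le T^0_d \le s$ yields $0 \le T^0_d < t - \coeft$, so Theorem \ref{thm: cp growth} applies with starting time $T^0_d$ in place of $s$.

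Applying Theorem \ref{thm: cp growth} in this way, the height $n$ of the given $t$-credible leader sequence exceeds the maximum height of all $T^0_d$-credible $0$-blockchains by at least $\growth g\alpha(t - T^0_d)$. Since $0$-blockchain $d$ is itself such a $T^0_d$-credible $0$-blockchain, I would then chain
\begin{align}
h^0(d) & \le n - \growth g\alpha\left(t - T^0_d\right) \\
& \le n - \growth g\alpha(t - s) \\
& < n - \ell + 1,
\end{align}
where the second inequality uses $T^0_d \le s$ together with $\growth > 0$ (which follows from $\delta < \frac{40}{81}$), and the last strict inequality holds because $\ell = \left\lceil \growth g\alpha(t-s)\right\rceil$ forces $\ell - 1 < \growth g\alpha(t-s)$. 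As heights are integers, this gives $h^0(d) \le n - \ell$, which completes the contrapositive.

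I expect no serious obstacle, since the argument is essentially a transcription of Lemma \ref{lemma: cj mined after}. The only points meriting care are the event-containment step---verifying that shifting the starting time from $s$ to $T^0_d$ keeps the relevant good event inside $\Gzero{s+\Delta}{t-\Delta}$---and the observation that the leader sequence growth theorem bounds the leader sequence height against $s$-credible (here $T^0_d$-credible) $0$-blockchains, which is precisely what lets the honest proposer block $d$'s credible blockchain enter the comparison.
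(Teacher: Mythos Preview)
Your proposal is correct and follows essentially the same argument as the paper's proof: suppose an honest proposer block $d$ has $T^0_d \le s$, use that $\Gzero{s+\Delta}{t-\Delta}$ implies $\Ezero{T^0_d+\Delta}{t-\Delta}$, invoke Theorem \ref{thm: cp growth} with starting time $T^0_d$ to compare the $t$-credible leader sequence height $n$ against the $T^0_d$-credible $0$-blockchain $d$, and chain the same three inequalities. Your write-up simply fills in a few justifications (the event containment, the hypothesis check for Theorem \ref{thm: cp growth}, and the ceiling inequality) that the paper leaves implicit.
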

\begin{proof}
Suppose honest proposer block $d$ satisfies ${T^0_d} \le s$. {If event $\Gzero{s+\Delta}{t-\Delta}$ occurs, event $\Ezero{T^0_d+\Delta}{t-\Delta}$ occurs}. Since $0$-blockchain $d$ is ${T^0_d}$-credible, by Theorem \ref{thm: cp growth} the heights of all $t$-credible leader sequences must be at least $h^0(d) + \growth g \alpha(t-T^0_d)$. Then we have
\begin{align}
   h^0(d) & \le n -  \growth g \alpha(t-{T^0_d}) \\
    & \le n - \growth g \alpha(t-s) \\
     & < n - \ell + 1 \label{c 10.0}
\end{align}
where \eqref{c 10.0} is by definition of $\ell$. Thus, an honest proposer block with {height greater or equal to} $n - \ell + 1$ must be mined after time $s$.
\end{proof}

\begin{theorem}\label{thm: c permanent tx}
Suppose integer $k$ and real number $r$, $t$ satisfy $k\ge \coefk$ and
\begin{align}
t = r + \frac{2(k+1)}{\growth^2g^2(1-g)^2\alpha} + \Delta. \label{c 13.79}
\end{align}
{Let}
\begin{align}
    G = \bigcap_{j\in \{0,\ldots, m\}} \Gj{t-\frac{k}{2\alpha}+\Delta}{t-\Delta}.
\end{align}
Suppose a  transaction {which is \credible until time $t$} enters a block and the block is published by time $r$.
Then under event $G$, the transaction is permanent after time $t$. As a consequence, with probability of at least $1-(m+1)\mu {e^{-\frac{\eta}{{27}}\left(\frac{k}{2\alpha}-2\Delta\right)}}$, the transaction is permanent after time $t$.
\end{theorem}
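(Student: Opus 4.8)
The plan is to produce, under the event $G$, a single \emph{honest} proposer (leader) block $b_h$ in a $t$-credible leader sequence that is mined late enough to reference the block carrying the transaction, yet sits at a height whose prefix is already permanent; Lemma~\ref{lemma: c honest leader block include tx} then places the transaction in the ledger, and permanence of the prefix propagates it to all future ledgers. Concretely, I fix a $t$-credible leader sequence $(0,b_1,\ldots,b_n)$, which exists under $G$, and aim to locate an honest leader block $b_h$ with $T^0_{b_h}\ge r+\Delta$ (so the block holding the transaction, published by $r\le T^0_{b_h}-\Delta$, is observable from $b_h$) and with height $h$ satisfying $R_h\le t-\frac{k}{\growth(1-g)g\alpha}-\Delta$, so that Theorem~\ref{thm: c permanent leader sequence} makes the $h$-high prefix permanent after $t$.

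The timing lower bound comes from Lemma~\ref{lemma: cp mined after}: taking $s=r+\Delta$ and $\ell=\lceil\growth g\alpha(t-r-\Delta)\rceil$, every honest proposer block of height at least $n-\ell+1$ is mined after $r+\Delta$. Next I would apply the leader-sequence quality theorem (Theorem~\ref{thm: cp quality}) with $\ell$ in place of $k$; this is legitimate because $\ell\ge k\ge\coefk\ge\coeft$ and the typical events are monotone, so $G\subseteq\Gzero{t-\ell/(2\alpha)+\Delta}{t-\Delta}$. Quality then guarantees that at least $(1-g)\ell$ of the last $\ell$ leader blocks (heights $n-\ell+1,\ldots,n$) are honest, and by the previous step all of them are mined after $r+\Delta$. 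Letting $b_h$ be the honest one of smallest height among these, its height obeys $h\le n-(1-g)\ell+1$.

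For the permanence side I would feed this back into the growth lemma (Lemma~\ref{lemma: cp growth}) with integer parameter $\kappa=\lfloor(1-g)\ell\rfloor\le n-h+1$: since $h\le n-\kappa+1$, block $b_h$ is mined no later than $t-\frac{\kappa}{2\alpha}$. Writing $B:=\growth g(1-g)$, the choice $t-r-\Delta=\frac{2(k+1)}{\growth^2g^2(1-g)^2\alpha}$ gives $\ell\ge\frac{2(k+1)}{\growth g(1-g)^2}$, hence $(1-g)\ell\ge\frac{2(k+1)}{B}$, which forces $T^0_{b_h}\le t-\frac{k+1}{B\alpha}+\frac{1}{2\alpha}$; because $B<g=e^{-\alpha\Delta}\le\frac{1}{\alpha\Delta}$ one has $\frac{1}{B\alpha}>\Delta$, and the exponential gap $\frac1B>e^{\alpha\Delta}$ absorbs both $\Delta$ and the floor slack, so $T^0_{b_h}\le t-\frac{k}{B\alpha}-\Delta$. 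As $b_h$ is honest and published when mined, $R_h\le T^0_{b_h}$ inherits this bound, so Theorem~\ref{thm: c permanent leader sequence} applies (its hypothesis $\bigcap_{j=1}^m\Gj{t-k/(2\alpha)+\Delta}{t-\Delta}$ is contained in $G$) and the $h$-high prefix is permanent after $t$. The exponent $B^2=(\growth g(1-g))^2$ in the gap $t-r$ is exactly what makes two factors of $\growth g(1-g)$ cancel — one from converting time to height through growth and quality, one from the permanence rate — leaving the $2(k+1)$ to cover $\Delta$ and all ceiling/floor slack. Verifying these constants and the side conditions (that $\ell\ge\coeft$, that $t\ge\ell/(\growth g\alpha)$, and that each auxiliary interval ends at or after $t-\Delta$ and so lies in $G$) is the main quantitative obstacle.

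To finish, Lemma~\ref{lemma: c honest leader block include tx} shows the transaction lies in the ledger generated by the fixed $t$-credible leader sequence. Because the $h$-high prefix is permanent, every $r'$-credible leader sequence with $r'\ge t$ shares the leader blocks $b_1,\ldots,b_h$; the blocks reachable within epochs $1,\ldots,h$ have content fixed by time $t$, so the transaction occupies a fixed epoch, while any conflicting transaction is published after $t$ (by credibility until $t$) and can enter only a strictly later epoch, hence is discarded in favour of this first occurrence. Thus the transaction is permanent after $t$. A union bound over $j\in\{0,\ldots,m\}$ with Lemma~\ref{lemma: cj prob of typical event} gives $P(G^c)<(m+1)\mu e^{-\frac{\eta}{27}\left(\frac{k}{2\alpha}-2\Delta\right)}$, yielding the stated guarantee. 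The subtlest conceptual point is this last permanence-across-conflicts argument; the main technical burden is the constant-chasing in the height window of the third paragraph.
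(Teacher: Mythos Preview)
Your proposal is correct and follows essentially the same route as the paper's proof: fix a $t$-credible leader sequence, use growth to get $n\ge\ell$ with $\ell=\lceil\growth g\alpha(t-r-\Delta)\rceil$, apply the leader-sequence quality theorem on the last $\ell$ blocks to find a lowest honest leader block $b$, use Lemma~\ref{lemma: cp growth} to push $T^0_b$ early enough that Theorem~\ref{thm: c permanent leader sequence} applies, and use Lemma~\ref{lemma: cp mined after} to ensure $T^0_b>r+\Delta$ so Lemma~\ref{lemma: c honest leader block include tx} fires. The paper tracks the honest-block count via the integer $u=\lceil 2(k+1)/(\growth(1-g)g)\rceil$ rather than your $\kappa=\lfloor(1-g)\ell\rfloor$, which avoids your extra $1/(2\alpha)$ slack term, but both reach $R_{h^0(b)}<t-\frac{k}{\growth(1-g)g\alpha}-\Delta$ by the same exponential inequality $1/g=e^{\alpha\Delta}>\alpha\Delta$; your explicit permanence-across-conflicts paragraph fills in a step the paper leaves implicit.
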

\begin{proof}
{
The key to the proof is to identify an honest {proposer} block published early enough but after the block of the transaction under question and invoke Theorem \ref{thm: c permanent leader sequence} and Lemma \ref{lemma: c honest leader block include tx}.
}

Consider a $t$-credible leader sequence $(0,b_1,\ldots,b_n)$.  For convenience, let
\begin{align}
    \ell = \left\lceil \frac{2(k+1)}{\growth (1-g)^2g} \right \rceil \label{c 13.80},
\end{align}
\begin{align}
    u = \left\lceil \frac{2(k+1)}{\growth (1-g)g} \right \rceil, \label{c 13.81}
\end{align}
and
\begin{align}
    s = r+\Delta. \label{c 13.816}
\end{align}
By \eqref{c 13.79},
\begin{align}
s   & = t - \frac{2(k+1)}{\growth^2g^2(1-g)^2\alpha} \\
    & < t - \frac{k}{2\alpha}  \label{c 13.815} \\
    & < t-\coeft.
\end{align}
By \eqref{c 13.815}, if event $\Gzero{t-\frac{k}{2\alpha}+\Delta}{t-\Delta}$ occurs,  event ${\Ezero{s+\Delta}{t-\Delta}}$ occurs.
According to Theorem \ref{thm: cp growth}, the height of {credible leader sequence} $(0,b_1,\ldots, b_n)$ is higher than the height of an $s$-credible leader sequence by at least
\begin{align}
(1-\frac{41}{40}\delta)g\alpha(t-s) = \frac{2(k+1)}{\growth (1-g)^2g} \label{c 13.82}
\end{align}
where \eqref{c 13.82} is due to \eqref{c 13.79} and \eqref{c 13.816}. Then $n \ge \ell$ because the height is an integer.

Obviously $\ell > k$. If $\Gzero{t-\frac{k}{2\alpha}+\Delta}{t-\Delta}$ occurs,  $\Gzero{t-\frac{\ell}{2\alpha}+\Delta}{t-\Delta}$ occurs. According to Theorem \ref{thm: cp quality}, the number of honest proposer blocks in the last $\ell$ blocks of the $t$-credible leader sequence
is at least
\begin{align}
   \left \lceil (1-g)\ell \right \rceil \ge u.
\end{align}
 Denote the lowest honest proposer block in {proposer blocks $\{b_{n-\ell+1},\ldots, b_n\}$} as block $b$. Then block $b$ must be at least $u$-deep. That is to say,
\begin{align}
    n-\ell + 1 \le h^0(b)\le n-u+1. \label{c 13.9}
\end{align}

Obviously $u>k$. If $\Gzero{t-\frac{k}{2\alpha}+\Delta}{t-\Delta}$ occurs,  $\Ezero{t-\frac{u}{2\alpha}}{t}$ occurs. We have $T^0_b <t- \frac{u}{2\alpha}$ by Lemma \ref{lemma: cj growth}.
Then by Definition \ref{def: R_h},
\begin{align}
    R_{h^0(b)} &\le {T^0_b} \\
    & < t - \frac{u}{2\alpha} \\
    & \le  t-\frac{k+1}{\growth(1-g)g\alpha} \\
    & < t - \frac{k}{\growth(1-g)g\alpha} -\frac{1}{g\alpha} \\
    & = t - \frac{k}{\growth(1-g)g\alpha} -\frac{e^{\alpha \Delta}}{\alpha} \label{c 13.95}\\
    & < t-\frac{k}{\growth(1-g)g\alpha} -\Delta \label{c 14}
\end{align}
{where \eqref{c 13.95} is due to \eqref{def: g} and \eqref{c 14} is because $e^x > x$ for all positive $x$}.

If event $G$ occurs, event $ \bigcap_{j\in \{1,\ldots, m\}} \Gj{t-\frac{k}{2\alpha}+\Delta}{t-\Delta}$ occurs. According to Theorem \ref{thm: c permanent leader sequence} and \eqref{c 14}, the $h^0(b)$-deep prefix of $t$-credible leader sequences $(0,b_1,\ldots, b_n)$
is permanent after time $t$.

Note that $\ell = \left \lceil \growth g\alpha(t-s)\right \rceil$ due to \eqref{c 13.79}, \eqref{c 13.80}, and \eqref{c 13.816}. Also, $ h^0(b) \ge n-\ell +1$ due to \eqref{c 13.9}. By \eqref{c 13.815}, $\Gzero{s+\Delta}{t-\Delta}$ occurs.
Applying Lemma \ref{lemma: cp mined after} we have $s < {T^0_b}$.
Then $r = s-\Delta < {T^0_b}-\Delta$. By Lemma \ref{lemma: c honest leader block include tx}, the transaction {which is \credible until time $t$} must be included by the {transaction} ledger generated by $t$-credible leader sequence $(0,b_1,\ldots, b_n)$.  {Hence the transaction is permanent.}

According to Lemma \ref{lemma: cj prob of typical event},
\begin{align}
    P(G^c) & = P\left(\bigcup_{j\in \{0,\ldots, m\}} \left(\Gj{t-\frac{k}{2\alpha}+\Delta}{t-\Delta}\right)^c \right)  \\
    & < (m+1)P\left(\Gj{t-\frac{k}{2\alpha}+\Delta}{t-\Delta}\right) \\
    & < (m+1)\mu e^{-\frac{\eta}{{27}}\left(\frac{k}{2\alpha}-2\Delta\right)}.
\end{align}
As a consequence, with probability at least  $1-(m+1)\mu {e^{-\frac{\eta}{{27}}\left(\frac{k}{2\alpha}-2\Delta\right)}}$, the transaction must be is permanent after time $t$.
\end{proof}

\begin{corollary}
Suppose
\begin{align}
  \epsilon < (m+1)\mu e^{-3(1+\Delta)\delta g^2 {\alpha}} \label{c 15.01},
\end{align}
\begin{align}
k = \left \lceil \frac{{54}\alpha}{\eta}\log \left( \frac{(m+1)\mu}{\epsilon} \right) + 4\alpha \Delta \right \rceil, \label{c 15.03}
\end{align}
and
\begin{align}
     t = r + \frac{2(k+1)}{\growth^2 (1-g)^2 g^2\alpha} + \Delta. \label{c 15.2}
\end{align}
Suppose a transaction {which is \credible until time $t$} enters a block and the block is broadcast by time $r$.
Then with probability at least $1-\epsilon$, the transaction is permanent after time $t$.
\end{corollary}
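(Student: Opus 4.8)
The plan is to recognize the final statement as a direct specialization of Theorem~\ref{thm: c permanent tx}, so the proof will closely mirror the corollary that follows Theorem~\ref{thm: c permanent leader sequence}. First I would fix the confirmation depth $k$ exactly as prescribed in~\eqref{c 15.03} and observe that, with this choice of $k$, the time $t$ in~\eqref{c 15.2} is precisely of the form~\eqref{c 13.79} demanded by Theorem~\ref{thm: c permanent tx}. This identification lets me invoke the theorem's conclusion verbatim, so that the transaction is permanent after time $t$ with probability at least $1-(m+1)\mu e^{-\frac{\eta}{27}\left(\frac{k}{2\alpha}-2\Delta\right)}$. The remaining work is then purely to check the two hypotheses of that theorem and to show that this failure bound is at most $\epsilon$.

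Next I would verify the probability bound. Dropping the ceiling in~\eqref{c 15.03} gives $k\ge \frac{54\alpha}{\eta}\log\left(\frac{(m+1)\mu}{\epsilon}\right)+4\alpha\Delta$; dividing by $2\alpha$ and subtracting $2\Delta$ yields $\frac{k}{2\alpha}-2\Delta\ge \frac{27}{\eta}\log\left(\frac{(m+1)\mu}{\epsilon}\right)$, so that after substituting into the exponent the logarithm cancels the prefactor $(m+1)\mu$ and the bound collapses to exactly $\epsilon$. Separately, I would confirm the standing requirement $k\ge\coefk$. Using $k>\frac{54\alpha}{\eta}\log\left(\frac{(m+1)\mu}{\epsilon}\right)$ together with assumption~\eqref{c 15.01}, which forces $\log\left(\frac{(m+1)\mu}{\epsilon}\right)>3(1+\Delta)\delta g^2\alpha$, and then invoking $\eta=\delta^2 g^2\alpha$ from~\eqref{def: c eta}, the product simplifies to $\frac{162(1+\Delta)\alpha}{\delta}$, which exceeds $\coefk=\frac{160\alpha(1+\Delta)}{\delta}$.

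With both hypotheses confirmed, applying Theorem~\ref{thm: c permanent tx} completes the argument: under the event $G$ the transaction is permanent after time $t$, and $P(G^c)\le (m+1)\mu e^{-\frac{\eta}{27}\left(\frac{k}{2\alpha}-2\Delta\right)}\le\epsilon$. I do not anticipate a genuine conceptual obstacle, since all the heavy machinery—the probabilistic domination of the adversarial process, the typical-event bounds of Lemma~\ref{lemma: cj prob of typical event}, and the leader-sequence permanence argument—is already absorbed into Theorem~\ref{thm: c permanent tx}. The only point requiring a little care will be the bookkeeping around the ceiling in~\eqref{c 15.03} and confirming that the slack built into~\eqref{c 15.01} (namely $162$ versus $160$) is indeed enough to clear the threshold $\coefk$; this is routine algebra rather than a real difficulty.
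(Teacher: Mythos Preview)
Your proposal is correct and follows essentially the same approach as the paper's own proof: you verify that the chosen $k$ satisfies both the threshold $k\ge\coefk$ (via assumption~\eqref{c 15.01} and the identity $\eta=\delta^2 g^2\alpha$) and the probability bound $(m+1)\mu e^{-\frac{\eta}{27}(\frac{k}{2\alpha}-2\Delta)}\le\epsilon$ (via the definition~\eqref{c 15.03}), then apply Theorem~\ref{thm: c permanent tx} directly. The algebraic steps and the observation that the slack $162>160$ suffices are exactly what the paper does.
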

\begin{proof}
Note that
\begin{align}
(m+1)\mu e^{-\frac{\eta}{{27}}\left(\frac{k}{2\alpha}-2\Delta\right)} &
 \le  {(m+1)\mu e^{-\frac{\eta}{{27}}\left(\frac{{27}}{\eta}\log \left( \frac{(m+1)\mu}{\epsilon} \right) + 2\Delta  - 2\Delta\right)}} \label{c 16.4}\\
& = \epsilon
\end{align}
where \eqref{c 16.4} is due to \eqref{c 15.03}. Also,
\begin{align}
    k & > \frac{{54}\alpha}{\eta}\log \left( \frac{(m+1)\mu}{\epsilon} \right) \\
    & > \frac{{54}\alpha}{\eta}\log\left( e^{3(1+\Delta)\delta g^2 {\alpha}} \right) \label{c 16.5} \\
    & = \frac{{54}\alpha}{\delta^2 g^2 \alpha} 3(1+\Delta)\delta g^2 {\alpha}\label{c 16.6} \\
    & > \coefk
\end{align}
where \eqref{c 16.5} is due to \eqref{c 15.01} and \eqref{c 16.6} is due to  {\eqref{def: c eta}}.

Equality \eqref{c 13.79} clearly holds by assumption.
By Theorem \ref{thm: c permanent tx}, with probability at least $1- (m+1)\mu e^{-\frac{\eta}{{27}}\left(\frac{k}{2\alpha}-2\Delta\right)} > 1-\epsilon$, the transaction which enters a block published by time $r$ is permanent after time $t$.
\end{proof}

Theorem \ref{thm: c permanent tx} illustrates that if a transaction is included in some published block {and is credible}, then the transaction is permanent {after sufficient confirmation time}.
Moreover, the confirmation time is proportional to $\log\frac{1}{\epsilon}$ when we want to ensure at most $\epsilon$ probability of failure.

\section{Conclusion}
In this paper, we have analyzed
the bitcoin and the Prism backbone protocols using {a more simplified and rigorous} framework than in previous analyses. In particular,
{we introduced the concept of credible blockchains and construct a well-defined probability space to describe their properties. We avoid relying on the behavior of miners because it is error prone and  often accompanied by unexpected distortion to default probability distributions. We made no assumption on the miners’ strategy except that their aggregate mining rate is upper bounded.
}
We also assume a continuous-time model with no lifespan limitations and allow the block propagation delays to be arbitrary but bounded.
Under 
the new setting,
we established a blockchain growth theorem, 
a blockchain quality theorem, 
and a common prefix theorem 
for the bitcoin backbone protocol.
We have also proved a blockchain growth theorem and a blockchain quality theorem of the credible leader sequence in the Prism protocol.
We have also shown that the credible leader sequence is 
permanent with high probability after sufficient amount of wait time.
As a consequence, every transaction {which is \credible}will eventually enter the {transaction} ledger and become permanent with probability higher than $1-\epsilon$ after a confirmation time proportional to security parameter $\log\frac{1}{\epsilon}$.
This paper provides explicit security bounds for the bitcoin and the Prism backbone transactions, which improves understanding of both protocols and provides practical guidance to public transaction ledger protocol design.

\section*{Acknowledgement}
“We thank Dr.\ Ling Ren for stimulating discussions and for pointing out two mistakes in previous versions of this paper.  We followed Dr.\ Ren \cite[Lemma 6] {Ren2019Analysis} to fix one of those mistakes.”
\bibliographystyle{ieeetr}
\bibliography{ref}

\end{document}